\newif\ifEC
\theoremstyle{plain}
\newtheorem{prop}{Proposition}
\newtheorem{coro}[prop]{Corollary}
\newtheorem{lemm}[prop]{Lemma}
\newtheorem{theo}[prop]{Theorem}
\theoremstyle{definition}
\newtheorem{exam}{Example}
\newtheorem{netexam}{Network Example}
\newtheorem{assu}{Assumption}
\theoremstyle{remark}
\newcommand{\rhon}{\rho_n}
\newcommand{\ngb}{\mathcal{N}}
\newcommand{\SDE}{\operatorname{SDE}}
\newcommand{\LDE}{\operatorname{LDE}}
\newcommand{\LTE}{\operatorname{LTE}}
\newcommand{\Ps}{P^{\star}}
\newcommand{\Ys}{Y^{\star}}
\newcommand{\Qs}{Q^{\star}}
\newcommand{\Zs}{Z^{\star}}
\newcommand{\bv}{\boldsymbol{v}}
\def\trans{^{\scriptscriptstyle \sf T}}
\newcommand{\pss}{p^{\star}}
\newcommand{\qss}{q^{\star}}
\newcommand{\bu}{\boldsymbol{u}}
\newcommand{\Ph}{\hat{P}}
\newcommand{\Dh}{\hat{D}}
\newcommand{\Wh}{\hat{W}}
\newcommand{\uh}{\boldsymbol{\hat{u}}}
\newcommand{\bxi}{\boldsymbol{\xi}}
\begin{document}

\ifEC
\author{Submission 374}
\else
\author{Shuangning Li \and Stefan Wager}
\date{Stanford University}
\fi
\title{Network Interference in Micro-Randomized Trials}

\ifEC
\else
\maketitle
\fi

\begin{abstract}
The micro-randomized trial (MRT) is an experimental design that can be used to develop optimal mobile health interventions. In MRTs, interventions in the form of notifications or messages are sent through smart phones to individuals, targeting a health-related outcome such as physical activity or weight management. Often, mobile health interventions have a social media component; an individual's outcome could thus depend on other individuals' treatments and outcomes.  In this paper, we study the micro-randomized trial in the presence of such cross-unit interference. We model the cross-unit interference with a network interference model; 
the outcome of one individual may affect the outcome of another individual if and only if they are connected by an edge in the network. Assuming the dynamics can be represented as a Markov decision process, we analyze the behavior of the outcomes in large sample asymptotics and show that they converge to a mean-field limit when the sample size goes to infinity. Based on the mean-field result, we give characterization results and estimation strategies for various causal estimands including the short-term direct effect of a binary intervention, its long-term direct effect and its long-term total effect.
\end{abstract}

\ifEC
\begin{titlepage}
\maketitle
\end{titlepage}
\fi

\section{Introduction}

In mobile health studies, the micro-randomized trial (MRT) is an experimental design that is often used to help evaluate and optimize dynamic interventions \citep{battalio2021sense2stop,dempsey2020stratified,  klasnja2015microrandomized, liao2016sample, liao2021off, walton2020micro}. In an MRT, the intervention assigned to any given unit is sequentially randomized at many decision points over the course of the experiment. For example, in the case of a wellness app that seeks to encourage non-sedentary behavior, \citet{klasnja2019efficacy} ran an MRT that, many times a day, randomly assigned study participants to either receive one of a number of available messages or no message. MRTs are a powerful statistical tool, in that they can be be used to assess both short- and long-term effects of a number of actions, and to design optimal dynamic treatment regimes \citep{hernan2020whatif,hu2021offpolicy,kallus2020double,liao2021off,luckett2019estimating,robins1986new}.

Most available studies of MRTs, including the ones cited above, assume that
there is no-cross unit interference. In other words, although treatments given
through time to a single unit may induce complex dependence patterns, it is assumed
that treatments given to one unit cannot affect outcomes for a different unit. There
has been growing interest, however, in setting areas where this assumption is
not applicable, including settings where the effect of an intervention is mediated
through interactions on a social network \citep{aral2011creating,eckles2016estimating}.
For example, \citet{aral2017exercise} describe a fitness app where users are notified
if any of their friends went running---thus potentially encouraging them to go running also.

The goal of this paper is to lay conceptual and methodological groundwork for the study
of micro-randomized trials in the presence of cross-unit interference. Following a number
of recent studies, we start by modeling MRTs as a Markov decision process
\citep{kallus2020double,liao2021off,luckett2019estimating}. We then incorporate cross-unit effects
using a network interference model, where units are placed at vertices of an exposure
graph and may interfere with each other if there is an edge connecting them
\citep{aronow2017estimating,athey2018exact,leung2020treatment,li2020random}. Our main
results pertain to characterization and estimation of a number of causal targets in this
model.

This paper is structured as follows.
In Section \ref{section:MDP}, we describe the basic problem setting and introduce the key assumptions we make through our this work. We also discuss the causal estimands of interest. In Section \ref{section:mean_field}, we analyze the behavior of our system under large-sample asymptotics and show that they converge to a mean-field limit when the sample size goes to infinity. Based on the mean-field result, in Section \ref{section:estimation}, we give characterization results and estimation strategies separately for each of the causal estimands of interest. We note that even though we motivate the problem with the mobile health studies, our problem setting, methodology and results can be applied to other application areas as well (e.g., see Example \ref{exam:shopping}).

\subsection{Related work}

The micro-randomized trial was introduced by \citet{liao2016sample} and \citet{klasnja2015microrandomized} as an experimental design for developing just-in-time adaptive interventions. Subsequent works have applied the micro-randomized trials to study various mobile health interventions, which are designed to increase physical activity among sedentary individuals \citep{klasnja2015microrandomized, klasnja2019efficacy}, to support support stress-management in newly abstinent smokers \citep{battalio2021sense2stop}, or to provide weight loss support \citep{qian2021estimating}, etc. Our work differs from the previous work in allowing for cross-unit interference, which is especially relevant when people involved in the study interact with each other virtually or in-person. 

The existing literature on treatment effect estimation under cross-unit interference has mostly focused on the case where there is a single time point, i.e., one gets to observe the treatment and the outcome only once for each unit \citep{aronow2017estimating, basse2019randomization, hudgens2008toward,  leung2020treatment, li2020random, savje2021average}. The problem we consider in this paper has an additional time axis, which gives rise to a number of new phenomena and challenges.
First, the causal estimands of interest are different. Unlike in the static setting where
natural causal effects, including direct, indirect and total causal effects, are time-independent \citep{hu2021average, hudgens2008toward, savje2021average}, we now need to address lagged
and/or compounding treatment effects over time.
Second, the dynamic nature of our problem fundamentally alters estimation considerations:
While it gives rise to challenges, the increased data from repeated sampling also opens
the door to new estimation strategies.


From a theoretical perspective, our work is also related to the network game literature. In a network game, an agent's payoff depends on her own strategy and the strategy of other agents connected to her through a network. A major question of interest is to study the Nash equilibrium \citep{ ballester2006s,bramoulle2007public,bramoulle2014strategic,galeotti2010network,galeotti2020targeting,jackson2014games,parise2019graphon}. In this paper, we model the dynamic of the outcomes by a Markov decision process. With a time invariant policy, the stationary distribution of our process is closely connected to the Nash equilibrium of the network game. Therefore, tools and results in establishing existence, uniqueness, and existence of certain limit of the Nash equilibrium in network games can be transferred here to study the stationary distribution. Some regularity conditions we use in this paper are also closely related to some regularity conditions used by \citet{ballester2006s} and \citet{parise2019graphon} in the setting of network games.
On a conceptual level, our work is also related to several recent papers that
used mean-field modeling to study causal inference in complex systems
\citep{johari2022experimental,wager2021experimenting}.

\section{Problem Setup}
\label{section:MDP}
Suppose that there are $n$ subjects of interests indexed by $i = 1, \dots, n$. Assume that there is a undirected graph with indices corresponding to the $n$ subject. We call the undirected graph the \textit{interference network} or the \textit{interference graph}. We use $\cb{E_{ij}}$ to denote the edge set of the graph. Let $\mathcal{N}_i = \cb{j: E_{ij} = 1}$ be the set of neighbors of subject $i$. Let $Y_{i t} \in \cb{0,1}$ denote the outcome of interest at time $t$ and $W_{i t}  \in \cb{0,1}$ be the treatment at time $t$. In this work, we focus on binary outcomes and binary treatments. The results can be easily extended to categorical variables.
For concreteness, one could consider a running app example loosely motivated by the work of 
\citet{aral2017exercise}, were the subjects of interest correspond the users of the app,
the interference network is the friendship network of the app,
$Y_{it} \in \cb{0, \, 1}$ denotes whether user $i$ went running in the $t$-th time period, and
$W_{i t}  \in \cb{0,\,1}$ corresponds to whether user $i$ received a motivational message 
in the $t$-th time period.

We model the dynamic of the outcomes with a Markov decision process (MDP). At each time $t$, we define the state $Y_t = (Y_{1t}, Y_{2t}, \dots, Y_{nt})$ to be outcomes of all users at time $t$. Let the action $W_t = (W_{1t}, W_{2t}, \dots, W_{nt})$ be the vector of treatments at time $t$. We make the following assumption on the transition probabilities. 

\begin{assu}[MDP with Network Interference]
\label{assu:MDP}
Each unit $i = 1, \, \ldots, \, n$ is characterized by an activation function $f_i(\cdot)$
such that, conditionally on $Y_{1:t}$ and $W_{1:t}$, 
\begin{equation}
\label{eqn:MDP}
Y_{i (t+1)} \sim \operatorname{Ber}(f_i(Y_{i t}, W_{i t}, Z_{i t})) \textnormal{ independently,}
\end{equation}
where $Z_{i t} = \sum_{j \in \mathcal{N}_i} Y_{j t}$, and $0 < f_i(y, \, w, \, z) < 1$
for all $y, \, w \in \cb{0, \, 1}$ and $z \in \RR_+$.
\end{assu}

In the context of our running example, Assumption \ref{assu:MDP} can be interpreted as the following: The probability of an individual goes running tomorrow depends on whether she went running today ($Y_{i t}$), whether she receives any encouraging message from the app ($W_{i t}$), and the total number of her friends that went running today ($Z_{i t}$) and her individual characteristics ($f_i$). 

Assumption \ref{assu:MDP} has essentially two key components: It formalizes the
assumed Markovian dynamics, and specifies the form of the cross-unit interference.
The Markovian assumption is a common assumption made in the statistical and
reinforcement learning literature in modeling similar problems
\citep[e.g.,][]{antos2008learning,kallus2020double,russel2010artificial}.
Closely related to us, \citet{liao2021off} study the micro-randomized trials and
model the dynamics of the outcome of interest with a MDP.

The second part of Assumption \ref{assu:MDP} requires the interference to act
along the network, and also to be ``anonymous" in the sense that $Y_{i(t+1)}$
does not depend on the specific behavior of a neighbor but only on $Z_{it}$, the total number of
neighbors $j \in \mathcal{N}_i$ with $Y_{jt} = 1$. This concept is related to the ``anonymous interference" assumption proposed by \citet{hudgens2008toward}. However, unlike in \citet{hudgens2008toward} where the outcome of subject $i$ depends on the total number of treated neighbors, Assumption \ref{assu:MDP} states that the outcome $Y_{i(t+1)}$ depends on the total number of neighbors with $Y_{jt} = 1$.
The reason we specify the assumption in this way is that we believe that, in many leading
examples of MRTs with cross-unit effects, especially over a social network, it seems more
likely that the $i$-th unit would be responding to past behaviors of their neighbors, rather
then their past treatment assignments.


Next, we make assumptions about how treatments are assigned in the MRT. Here,
we use the simplest possible assumption, namely that the treatment given to each unit
is Bernoulli-randomized in each time period, and that the underlying treatment-assignment
probabilities are time-invariant:

\begin{assu}[Bernoulli treatment]
\label{assu:Bern}
The treatments $W_{it} \sim \operatorname{Ber}(\pi_i)$ independently for each $i$ and each $t$. 
\end{assu}

With these two assumptions, we will show next that in the long run, the distribution of the state $Y_{t} = (Y_{1t}, Y_{2t}, \dots, Y_{nt})$ will converge to a distribution $\mu(\pi)$. We use $\mu(\pi)$ to emphasize the dependency on treatment probabilities $\pi$. This $\mu(\pi)$ also corresponds to the stationary distribution of the Markov chain induced by the policy defined in Assumption \ref{assu:Bern};
we show existence of $\mu(\pi)$ in Section \ref{sec:stationary}.

\subsection{Causal estimands}
\label{subsection:estimands}
In this paper, we would like to study the following few causal estimands. We discuss their definition and interpretations in this section and will study the characterization and estimation of them in Section \ref{section:estimation}. 

\paragraph{Short-term direct effect.} The short-term direct effect quantifies the immediate effect of the unit's treatment on its own outcome. The direct effect defined below takes the average of the effects for each unit. 
\begin{equation}
\label{eqn:SDE_def}
\tau_{\SDE, t} = \frac{1}{n}\sum_{i=1}^n f_i\p{Y_{i t}, 1, Z_{i t}} - f_i\p{Y_{i t}, 0, Z_{i t}} 
\end{equation}

\paragraph{Long-term direct effect.} The long-term direct effect captures the long-term effect of the unit's treatment on its own outcome, again averaged over units. Proposition \ref{prop:MDP_stationary} (in Section \ref{sec:stationary}) implies that the ``long-term" outcomes can be described by the stationary distribution of $Y$, thus we define the long-term direct effect in term of the stationary distribution. In other words, $\tau_{\LDE}(\gamma_1, \gamma_2)$ defined below concerns the effect of a unit's treatment on its expected outcome under the stationary distribution. More specifically, $\tau_{\LDE}(\gamma_1, \gamma_2) $ quantifies the average effect on unit $i$'s of changing the treatment probability of unit $i$ from $\gamma_2$ to $\gamma_1$.
\begin{equation}
\label{eqn:LDE_def}
\tau_{\LDE}(\gamma_1, \gamma_2) = \frac{1}{n} \sum_{i = 1}^n \p{\EE[\mu(\pi_i = \gamma_1, \pi_{-i})]{Y_i} - \EE[\mu(\pi_i = \gamma_2, \pi_{-i})]{Y_i} }. 
\end{equation}
Here $\mu(\pi_i = \gamma, \pi_{-i})$ stands for the stationary distribution of MDP \eqref{eqn:MDP} when the treatment probability of the $i$-th unit has been changed to $\gamma$.

\paragraph{Long-term total effect.} Different from the previous two estimands, the long-term total effect focuses on the effect of changing the treatment probability for every user at the same time. It measures the effect of changing the entire treatment vector from $\pi_2$ to $\pi_1$ on the expected average outcome under the stationary distribution. 
\begin{equation}
\label{eqn:LTE_def}
\tau_{\LTE}(\pi_1, \pi_2) = \frac{1}{n} \sum_{i = 1}^n \cb{\EE[\mu(\pi_1)]{Y_i} - \EE[\mu(\pi_2)]{Y_i} } 
\end{equation}
Without cross-unit interference, if $\pi_1 = (\gamma_1, \dots, \gamma_1)$ and $\pi_2 = (\gamma_2, \dots, \gamma_2)$, then the long-term direct effect will be the same as the long-term total effect: $\tau_{\LDE}(\gamma_1, \gamma_2) = \tau_{\LTE}(\pi_1, \pi_2)$. 

Below, we illustrate through a few examples what the three causal estimands correspond to and how to interpret them. 

\begin{exam}[Fitness app]
\label{exam:fitness}
In the fitness app example, recall that the units are users, the outcomes $Y$ correspond to whether users go running on a particular day, and the treatments $W$ correspond to the motivational messages. 
Researchers are interested in the effect of the motivational messages on users' level of physical activity---running---in this case. In this example, the short-term direct effect answers the following question: how much difference does a message send to unit $i$ make in user $i$'s probability of running tomorrow?  While the short-term direct effect concerns the immediate effect of the message, the two long term effects concern how much the messages can change users' long-term habit. The long-term direct effect quantifies on average by how much the messages directed at user $i$ can shift user $i$'s habit. The long-term total effect focuses on the overall effect: if we increase the frequency of motivational messages to everyone in the community, by how much can we change the overall running habits? 
\end{exam}

\begin{exam}[Shopping habit]
\label{exam:shopping}
Imagine there are two Grocery stores in a neighborhood, Store A and Store B. There are $n$ people living in the neighborhood who go shopping at Grocery stores each week. Let $Y_{it} = 1$ if the $i$-th person go to Store A in week $t$, and let $Y_{it} = 0$ otherwise. In order to attract customers, Store A starts to send coupons to people. Let $W_{it} = 1$ if the $i$-th person receives a coupon in week $t$, and let $W_{it} = 0$ otherwise. Let the interference graph be the natural friendship network.
In this context, Assumption \ref{assu:MDP} means that the choice of Grocery store depends on the choice last week, whether the person receives a coupon and the behavior of friends. Interference exists in this case, because of conformity: people tend to go to the stores their friends go. 
In this example, the short-term direct effect concerns people's behavior exploiting the coupon; the coupon provides direct financial incentives for customers to come and shop. This effect matters when a store needs a quick increase in the number of customers.
The long-term direct effect answers the question of (on average) how many coupons Store A needs to give in order to shift the shopping habit of one customer. This effect is most relevant if the store targets a small proportion of people and hope to increase their frequency of shopping. 
The long-term total effect is about how many coupons in total the store need to give to everyone in the neighborhood to increase the total number of customers in the long term. This can be relevant for long-term overall planning of the store.

\end{exam}

\subsection{Notation}
Throughout this paper, we use $C_1, C_2, \dots$ for constants not depending on $n$. Note that $C_1, C_2, \dots$ might mean different things in different settings. 
We let $f'_i \p{y, w, z}$, $f''_i \p{y, w, z}$, etc., denote derivatives of $f_i$ with respect to the third argument $z$.
Let $\law(X)$ denote the law of a random variable or random vector $X$.
We use the notation $a \vee b = \max(a,b)$ and $a \wedge b = \min(a,b)$. 
We use $ \oo(), \oo_p(), o_p()$ in the following sense: $a_n = \oo(b_n)$ if $|a_n| \leq C b_n$ for $n$ large enough. $X_n = \oo_p(b_n)$, if for any $\delta>0$, there exists $M,N > 0$, s.t. $\PP{|X_n| \geq M b_n} \leq \delta$ for any $n > N$. $X_n = o_p(b_n)$, if $\lim\PP{|X_n| \geq \epsilon b_n} \to 0$ for any $\epsilon > 0$. We write $X_n \Rightarrow X$ to say that $X_n$ converges in distribution to $X$. 

\section{Weak convergence}
\label{section:mean_field}

In this section, we consider large-sample behavior of our system.
First, we verify existence of a stationary distribution, justifying our
above discussions. Second, we show that, as $n \rightarrow \infty$, our
system is tightly coupled to an auxiliary stochastic process, which we find to
be easier to analyze. This coupling result lets us establish mean-field convergence,
and yields powerful tools to analyze the MDP in \eqref{eqn:MDP} that we then use to study
our causal estimands in later sections. 

\subsection{Existence of a stationary distribution}
\label{sec:stationary}

Since both $Y_{i t}$ and $W_{i t}$ are binary random variables, the function $f_i$ can be decomposed into four terms:
\begin{equation}
\label{eqn:f_decomposition}
f_i(y, w, z) = a_i(z) + b_i(z) w + c_i (z) y + d_i(z) wy. 
\end{equation}

\begin{assu}[Boundedness and Lipschitzness]
\label{assu:bound_fi}
Each function $f_i$ is Lipschitz with Lipschitz constant $L_n$ in its third argument. The functions $c_i$ and $d_i$ satisfy $\abs{c_i(z) + d_i(z) w} \leq B$ for any $z \in \RR_+$ and any $w \in \cb{0,1}$. 
\end{assu}

\begin{assu}[Node Degree]
\label{assu:largest_degree}
The largest node degree of the interference graph is bounded by $D_n$. 
\end{assu}

\begin{assu}[Contraction]
\label{assu:contraction}
The constants $B$, $L_n$ and $D_n$ defined in Assumptions \ref{assu:bound_fi} and \ref{assu:largest_degree} satisfy
\begin{equation}
B + L_n D_n \leq C < 1. 
\end{equation}
\end{assu}

Assumption \ref{assu:bound_fi} states that the function $f_i$ is continuous in $z$ and it cannot change much if $z$ change a little. $L_n$ captures the maximum possible effect of the number of neighbors on the outcome. The term $c_i(z) + d_i(z) w$ can be written as $c_i(z) + d_i(z) w = f_i(1, w, z) - f_i(0, w, z)$; thus $B$ quantifies the maximum possible effect of outcome at time $t$ on outcome at time $t+1$. In Assumption \ref{assu:contraction}, the term $L_n D_n$ bounds the total effect of neighbors on a subject's outcome: if on average each neighbor's outcome at time $t$ is increased by $\delta$, then the change in the outcome at time $t+1$ is expected to be bounded by $L_n D_n \delta$. Bounding the sum of $B$ (individual effect) and $L_n D_n$ (neighbor effect), Assumption \ref{assu:contraction} limits the effect of outcomes (outcomes of the subject $i$ and its neighbors) at time $t$ on the outcome at time $t+1$. 

Assumptions \ref{assu:bound_fi} - \ref{assu:contraction} are crucial to ensure the uniqueness of the stationary distribution of the Markov chain induced by the policy in Assumption \ref{assu:Bern}. Assumption \ref{assu:contraction} is related to the contraction assumption that is often made in the network game literature (see, e.g., \cite{ballester2006s} and \cite{parise2019graphon}). There, the contraction assumption is usually stated to guarantee the influence of neighbors does not eclipse a subject's own strategy. This assumption is often used to obtain uniqueness of the Nash equilibrium of a network game.

Under Assumptions \ref{assu:MDP} - \ref{assu:contraction}, Proposition \ref{prop:MDP_stationary} shows that in the long run, the distribution of the state $Y_{t}$ converges to a distribution $\mu(\pi)$, which corresponds to the stationary distribution of the Markov chain induced by the policy defined in Assumption \ref{assu:Bern}. 
\begin{prop}[Stationary distribution]
\label{prop:MDP_stationary}
\begin{enumerate}
\item Under Assumptions \ref{assu:MDP} and \ref{assu:Bern}, there exists a stationary distribution $\mu(\pi)$, such that if $Y_0 \sim \mu(\pi)$, then $Y_t \sim \mu(\pi)$ for any $t \geq 0$. 
\item Furthermore, under Assumptions \ref{assu:MDP} - \ref{assu:contraction}, the Markov chain induced by the policy defined in Assumption \ref{assu:Bern} is ergodic, i.e., the stationary distribution $\mu(\pi)$ is unique. For any initial distribution of $Y_0$, we have that $Y_t \Rightarrow \mu(\pi)$ as $t \to \infty$. 
\end{enumerate}
\end{prop}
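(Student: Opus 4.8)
The plan is to separate the two claims: existence follows from finiteness of the state space, while uniqueness and convergence both follow from a synchronous coupling that contracts the Hamming distance between two copies of the chain. \textbf{Existence.} Under Assumptions~\ref{assu:MDP} and~\ref{assu:Bern} the treatments are i.i.d.\ across time from a fixed product-Bernoulli law; integrating them out makes $\cb{Y_t}_{t\ge 0}$ a time-homogeneous Markov chain on the finite set $\cb{0,1}^n$, with a transition kernel $P(y,y')$ whose entries are strictly positive because $0<f_i<1$. A stochastic matrix on a finite space always admits a stationary distribution $\mu(\pi)$ (a left eigenvector for eigenvalue $1$, or a Brouwer fixed point on the probability simplex), which proves part~(1); Assumptions~\ref{assu:bound_fi}--\ref{assu:contraction} play no role here. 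Since $P$ is in fact strictly positive it is already irreducible and aperiodic, so at any \emph{fixed} $n$ the chain is ergodic; the purpose of the remaining assumptions is to supply a contraction rate that is \emph{uniform in $n$}, which the later mean-field analysis relies on.

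\textbf{Uniqueness and convergence.} I would run two copies $Y_t$ and $\widetilde Y_t$ from arbitrary initial laws and couple them so they coalesce. The construction uses (i) the \emph{same} treatment vector $W_t$ in both chains at each step, and (ii) conditionally on $W_t$, the coordinatewise maximal coupling of $\ber\!\big(f_i(Y_{it},W_{it},Z_{it})\big)$ and $\ber\!\big(f_i(\widetilde Y_{it},W_{it},\widetilde Z_{it})\big)$, independently across $i$. This is a legitimate joint law because, given the current state and $W_t$, the coordinates of the next state are independent in each chain; and under the maximal coupling the conditional probability that coordinate $i$ disagrees equals $\big|f_i(Y_{it},W_{it},Z_{it})-f_i(\widetilde Y_{it},W_{it},\widetilde Z_{it})\big|$.

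The crux is then a one-step contraction of the Hamming distance $H_t=\sum_{i=1}^n\mathbbm{1}\{Y_{it}\ne\widetilde Y_{it}\}$. Adding and subtracting $f_i(\widetilde Y_{it},W_{it},Z_{it})$ and invoking the decomposition~\eqref{eqn:f_decomposition} for the change in the $y$-argument and Lipschitzness (Assumption~\ref{assu:bound_fi}) for the change in $z$ gives
\[
\big|f_i(Y_{it},W_{it},Z_{it})-f_i(\widetilde Y_{it},W_{it},\widetilde Z_{it})\big|\le \big|c_i(Z_{it})+d_i(Z_{it})W_{it}\big|\,\mathbbm{1}\{Y_{it}\ne\widetilde Y_{it}\}+L_n\,\big|Z_{it}-\widetilde Z_{it}\big|,
\]
where $\big|c_i+d_iW_{it}\big|\le B$. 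Summing disagreement probabilities over $i$, bounding $\big|Z_{it}-\widetilde Z_{it}\big|\le\sum_{j\in\mathcal{N}_i}\mathbbm{1}\{Y_{jt}\ne\widetilde Y_{jt}\}$, and collapsing the double sum by the degree-exchange identity $\sum_i\sum_{j\in\mathcal{N}_i}\mathbbm{1}\{Y_{jt}\ne\widetilde Y_{jt}\}=\sum_j\deg(j)\,\mathbbm{1}\{Y_{jt}\ne\widetilde Y_{jt}\}\le D_nH_t$ (valid since the graph is undirected with maximum degree $D_n$, Assumption~\ref{assu:largest_degree}) yields
\[
\mathbb{E}\big[H_{t+1}\mid Y_t,\widetilde Y_t\big]\le (B+L_nD_n)\,H_t\le C\,H_t,\qquad C<1,
\]
by the contraction Assumption~\ref{assu:contraction}. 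Iterating gives $\mathbb{E}[H_t]\le C^t\,\mathbb{E}[H_0]\le C^t n$, so by Markov's inequality and the coupling inequality $\big\|\law(Y_t)-\law(\widetilde Y_t)\big\|_{\mathrm{TV}}\le\mathbb{P}(H_t\ge 1)\le C^t n$. Choosing $\widetilde Y_0\sim\mu(\pi)$ (hence $\widetilde Y_t\sim\mu(\pi)$ for all $t$) gives $\big\|\law(Y_t)-\mu(\pi)\big\|_{\mathrm{TV}}\to 0$ from any starting law, which both pins down $\mu(\pi)$ as the unique stationary distribution and yields $Y_t\Rightarrow\mu(\pi)$.

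I expect the main obstacle to be careful bookkeeping rather than a conceptual difficulty: one must verify that the synchronous, coordinatewise-maximal coupling really is a valid joint construction (it is, by the conditional independence of coordinates given $(Y_t,W_t)$) and carry out the degree-exchange step cleanly so the neighbor contributions collapse to exactly $L_nD_n$. The role of Assumption~\ref{assu:contraction} is precisely to force $B+L_nD_n<1$, turning $H_t$ into a geometrically decaying supermartingale with rate independent of $n$; without it the one-step bound is vacuous and the two chains need not coalesce.
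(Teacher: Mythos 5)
Your proposal is correct and follows essentially the same route as the paper: existence from finiteness of the state space, and uniqueness plus convergence from a synchronous (common-seed, coordinatewise-maximal) coupling that contracts the Hamming/$L_1$ distance via the decomposition into the $B\,\mathbbm{1}\{Y_{it}\ne\widetilde Y_{it}\}$ and $L_n|Z_{it}-\widetilde Z_{it}|$ terms and the degree-exchange identity --- this is exactly the content of the paper's Lemma \ref{lemma:L1_contraction}. The only cosmetic differences are that the paper states the contraction for the $W_{L_1}$ Wasserstein distance and proves uniqueness by contradiction, whereas you conclude in total variation via $\PP(H_t\ge 1)\le \EE[H_t]$; on $\cb{0,1}^n$ these are equivalent in substance.
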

Proposition \ref{prop:MDP_stationary} establishes existence of the stationary distribution and ergodicity of the system. It has nice implications: On the one hand, if researchers are interested in the long-term behavior of the outcomes, it suffices to study the stationary distribution of Markov chain. On the other hand, the ergodicity enables us to approximate the stationary distribution by taking averages along the history of the outcomes.

Finally, to get a better intuition for the role of the scaling constants $L_n$ and $D_n$,
we consider the following models of the interference graph. We emphasize that we do not
rely on these random graph models for our analysis; rather, in this paper, they are simply
used to interpret scalings.

\begin{netexam}[Erd\H{o}s-R\'enyi]
\label{netexam:erdos}
Each edge is included in the interference graph with probability $\rhon$, independently from every other edge, i.e., $E_{ij} \sim \operatorname{Bernoulli} (\rhon)$ independently. 
\end{netexam}
In this example, each unit is expected to have $n \rhon$ neighbors. We can easily show with concentration inequalities that the largest node degree is bounded by $2 n \rhon$ with high probability. Given this high probability event, one possible choice of $D_n$ is $2 \rhon n$. Then in order for Assumption \ref{assu:contraction} to hold, we need $L_n \leq (C - B)/(2 n\rhon)$. Recalling that $L_n$ is the Lipschitz constant of the function $f_i$ in the $z$ argument. In this case, the bound on $L_n$ suggests that if we instead treat $f_i$ as a function of $z/(n \rhon)$, then the function should be Lipschitz with constant $(C - B)/2$. Since each unit is expected to have $n \rhon$ neighbors, $Z_{it}/(n\rhon)$ can be treated as the average of neighbors' outcomes of unit $i$. Thus, the first part of Assumption \ref{assu:bound_fi} can be reinterpreted as the following: The effect of the average of neighbors' outcomes on a subject's outcome cannot exceed a constant. Throughout this paper, it is helpful to think of this setting and treat $L_n = \oo_p\p{1/(n \rhon)}$ and $D_n = \oo_p\p{n \rhon}$. 

Though easy-to-understand, the Erd\H{o}s-R\'enyi graph may not be the best choice to model real life networks. A better way of modeling the interference network is the following. 
\begin{netexam}[Graphon]
\label{netexam:graphon}
The interference graph is randomly generated as follows. Each subject has
a random type \smash{$U_i \simiid \operatorname{Uniform}[0, \, 1]$}. There is a $\rhon \in (0, \, 1]$ and a symmetric measurable function $G : [0, \, 1]^2 \rightarrow [0, \, 1]$
called a \emph{graphon} such that $E_{ij} \sim \operatorname{Bernoulli}\p{\rhon G(U_i, \, U_j)}$ independently for all $i < j$. 
\end{netexam}
In this example, unit $i$ and unit $j$ are connected in the interference graph with probability $\rhon G(U_i, \, U_j)$. Unlike the Erd\H{o}s-R\'enyi graph, the edge-forming probability depends on random types of the units. If we go through similar analysis as in the above example, we find that we can still have $L_n = \oo_p\p{1/(n \rhon)}$ and $D_n = \oo_p\p{n \rhon}$. 

\subsection{Mean-field characterization}

Our next goal is to show that, as $n$ gets large, our MDP of interest
is coupled with the following dynamical system. Let $P_t = (P_{1 t}, P_{2 t}, \dots, P_{n t}) \in [0,1]^n$ be the state of the dynamical system at time $t$. The evolution rule of the system is the following:\footnote{The function $f_i$ is initially only defined for $y \in \cb{0,1}$ and $w \in \cb{0,1}$, but the form in \eqref{eqn:f_decomposition} explicitly extends the domain of $f_i$ to $y \in [0,1]$ and $w \in [0,1]$.} 
\begin{equation}
\label{eqn:dynamic_system}
\begin{split}
P_{i (t+1)} &= f_i(P_{ i t}, \pi_i, Q_{i t}) \\
&= a_i\p{Q_{i t}} + b_i\p{Q_{i t}} \pi_i +   c_i\p{Q_{i t}} P_{i t} + d_i\p{Q_{i t}} \pi_i P_{i t},
\end{split}
\end{equation}
where $Q_{i t} = \sum_{j \in \ngb_i} P_{j t}$, and the functions $f_i$, $a_i, \dots, d_i$
are defined in \eqref{eqn:MDP} and \eqref{eqn:f_decomposition}.

In comparing this process with \eqref{eqn:MDP}, one can interpret
$P_{i,t}$ as the probability that $Y_{i,t} = 1$ given past information. The
key difference from \eqref{eqn:dynamic_system} is that, here, the state of the
$i$-th user depends directly on its neighbors probabilities rather than their realized
outcomes---and removing this extra layer of randomness makes the process \eqref{eqn:dynamic_system}
considerably easier to study.



Intuitively, this form of the stochastic process is easier to analyze for at least two reasons. First, the probabilities are non-random numbers, thus if the outcomes $Y_{it} \sim \operatorname{Bern}(P_{i t})$ independently, then the $Y_{it}$'s are independent. This is not the case for the MDP we considered in the previous section. Indeed, we can imagine, if individuals $i$ and $j$ have many common friends, then the corresponding $Z_{i(t-1)}$ and $Z_{j(t-1)}$ would be very much correlated and thus $Y_{it}$ and $Y_{jt}$ will be correlated. Independence makes it much easier to separate the direct effect from the total effect. Second, it is easier to analyze the fixed point of the system \eqref{eqn:dynamic_system} than to find the stationary distribution of the MDP defined in \eqref{eqn:MDP}. The fixed point of system \eqref{eqn:dynamic_system} can be characterized by a vector of length $n$, whereas the stationary distribution of the MDP essentially needs a vector of length $2^n$. 

We will then study the properties of the system \eqref{eqn:dynamic_system}. Specifically, we show the existence and uniqueness of the fixed point and we demonstrate the closeness of stationary distribution of the MDP to the fixed point of system \eqref{eqn:dynamic_system}. We call our results \textit{mean-field} results, because individual's behavior is independent under system \eqref{eqn:dynamic_system} and thus they are almost independent under the MDP defined in Section \ref{section:MDP}. 

Proposition \ref{prop:system_fixed} establishes the existence and uniqueness of the fixed point of system \eqref{eqn:dynamic_system}. 
\begin{prop}[Fixed point]
\label{prop:system_fixed}
\begin{enumerate}
\item If the functions $f_i$'s are continuous and satisfy $0 \leq f_i \leq 1$, then there exists a fixed point $P^{\star} \in [0,1]^n$ of the system \eqref{eqn:dynamic_system}, i.e., if $P_{t} = P^{\star}$, then $P_{t+1} = P^{\star}$. 
\item Under Assumptions \ref{assu:bound_fi} - \ref{assu:contraction}, the fixed point is unique, and for any value of $P_0$, we have $P_t \to \Ps$ as $t \to \infty$.  
\end{enumerate}
\end{prop}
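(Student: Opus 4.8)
The plan is to treat the update rule \eqref{eqn:dynamic_system} as a single map $\Phi : [0,1]^n \to [0,1]^n$ defined by $\Phi(P)_i = f_i(P_i, \pi_i, Q_i)$ with $Q_i = \sum_{j \in \ngb_i} P_j$, so that a fixed point of the system is exactly a solution of $\Phi(P) = P$. For part (1), I would invoke Brouwer's fixed point theorem: the domain $[0,1]^n$ is compact and convex, each coordinate $Q_i$ is a continuous (linear) function of $P$, and composing with the continuous $f_i$ makes $\Phi$ continuous; the hypothesis $0 \le f_i \le 1$ guarantees that $\Phi$ maps $[0,1]^n$ into itself. Brouwer then yields at least one fixed point $\Ps \in [0,1]^n$.

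For part (2) the strategy is to upgrade existence to a contraction argument and apply the Banach fixed point theorem in the sup-norm $\|\cdot\|_\infty$. Fix two points $P, P' \in [0,1]^n$ and, for each $i$, split the coordinatewise difference into an ``own-outcome'' part and a ``neighbor'' part via
\begin{equation*}
\Phi(P)_i - \Phi(P')_i = \big[f_i(P_i,\pi_i,Q_i) - f_i(P'_i,\pi_i,Q_i)\big] + \big[f_i(P'_i,\pi_i,Q_i) - f_i(P'_i,\pi_i,Q'_i)\big].
\end{equation*}
The first bracket only changes the first argument, and by the decomposition \eqref{eqn:f_decomposition} it equals $(c_i(Q_i) + d_i(Q_i)\pi_i)(P_i - P'_i)$; since $w \mapsto c_i(z) + d_i(z)w$ is affine, the bound $\abs{c_i(z)+d_i(z)w} \le B$ assumed for $w \in \cb{0,1}$ in Assumption \ref{assu:bound_fi} extends to $w = \pi_i \in [0,1]$ by writing $\pi_i$ as a convex combination of $0$ and $1$, so this term is at most $B\abs{P_i - P'_i}$. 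The second bracket only changes the third argument, so Lipschitzness gives a bound $L_n\abs{Q_i - Q'_i}$, and since $Q_i - Q'_i = \sum_{j \in \ngb_i}(P_j - P'_j)$ involves at most $D_n$ neighbors (Assumption \ref{assu:largest_degree}), we get $\abs{Q_i - Q'_i} \le D_n \|P - P'\|_\infty$.

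Combining the two pieces and maximizing over $i$ yields
\begin{equation*}
\|\Phi(P) - \Phi(P')\|_\infty \le (B + L_n D_n)\|P - P'\|_\infty \le C\,\|P - P'\|_\infty,
\end{equation*}
with $C < 1$ by Assumption \ref{assu:contraction}. Thus $\Phi$ is a contraction on the complete metric space $([0,1]^n, \|\cdot\|_\infty)$ --- which $\Phi$ maps into itself, since the multilinear form \eqref{eqn:f_decomposition} keeps $f_i$ a convex combination of its corner values in $(0,1)$ --- and the Banach fixed point theorem gives a unique fixed point $\Ps$ together with geometric convergence $\|P_t - \Ps\|_\infty \le C^t \|P_0 - \Ps\|_\infty \to 0$ from any $P_0$, which is the desired conclusion.

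The only genuine subtlety, and the step I would be most careful about, is the passage from the per-coordinate bound to the global contraction constant: one must measure the neighbor perturbation in the sup-norm (rather than, say, in $\ell_1$) so that the degree bound $D_n$ and the Lipschitz constant $L_n$ combine exactly into the product $L_n D_n$ that appears in Assumption \ref{assu:contraction}. Measuring in any other norm would introduce extra factors and fail to close the argument with the stated constant $C$; the affine-in-$w$ extension of the $B$-bound to $w = \pi_i$ is the one other place where I would verify the details rather than quote them.
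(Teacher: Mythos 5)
Your proof is correct, and part (1) is identical to the paper's (Brouwer on the compact convex set $[0,1]^n$). For part (2) the underlying mechanism is also the same---a contraction with constant $B + L_n D_n \leq C < 1$, using the affine extension of the $B$-bound to $w = \pi_i$---but you run it in the sup-norm and package it via the Banach fixed point theorem, whereas the paper works in the $\ell_1$-norm: it bounds $\abs{\Ps_{1,i} - \Ps_{2,i}} \leq B\abs{\Ps_{1,i}-\Ps_{2,i}} + L_n\sum_{j\in\ngb_i}\abs{\Ps_{1,j}-\Ps_{2,j}}$, sums over $i$, and swaps the order of summation so that $\sum_i\sum_{j\in\ngb_i}\abs{\Delta_j} = \sum_j\abs{\ngb_j}\abs{\Delta_j}\leq D_n\sum_j\abs{\Delta_j}$, yielding the same factor $B+L_nD_n$; it then derives uniqueness by contradiction and convergence by iterating the same inequality. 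Your closing remark is therefore mistaken: the $\ell_1$-norm does \emph{not} introduce extra factors, because the graph is undirected and the degree bound $D_n$ controls both the row sums (how many neighbors feed into $i$, which is what your sup-norm argument uses) and the column sums (how many units $j$ influences, which is what the $\ell_1$ swap uses). Both norms close the argument with exactly the constant $C$ of Assumption \ref{assu:contraction}; the $\ell_1$ version has the added benefit that the same contraction lemma is reused verbatim for the stochastic process in Lemma \ref{lemma:L1_contraction}, which is presumably why the paper favors it. Your Banach packaging buys a slightly cleaner statement (existence, uniqueness, and geometric convergence in one stroke, without needing Brouwer under the stronger assumptions), at no real cost.
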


With the above proposition, it is natural to ask how good the mean-field approximation is in terms of the long-term behavior. More precisely, if we let $\Ys_{it} \sim \operatorname{Ber}(\Ps_i)$ independently, we would like to study how close the law of $\Ys_{it}$ is to $\mu(\pi)$, which is the stationary distribution of the MDP in \eqref{eqn:MDP}. To this end, we define an $L_1$-Wasserstein distance between two laws $\nu_1$ and $\nu_2$
\begin{equation}
\label{eqn:wasser_L1}
W_{L_1}\p{\nu_1, \nu_2} = \inf \cb{\EE{\|X_1 - X_2\|_1}: \law\p{X_1} = \nu_1, \law\p{X_2} = \nu_2}. 
\end{equation}
We also define a graph dependent distance between two random vectors and two laws respectively. Let $k \in \NN_+$ be a positive integer. For random vectors $X, Y \in \cb{0,1}^n$, let
\begin{equation}
\label{eqn:dek_def}
d_{E,k}\p{X, Y} = \max_i \bigg(\mathbb{E}\bigg[ \Big(\sum_{j \in \mathcal{N}_i} \abs{X_j - Y_j}\Big)^k\bigg]\bigg)^{\frac{1}{k}}. 
\end{equation}
Here we write $E$ in the subscript to emphasize the dependence of this distance metric on the friendship network (graph) $E$. We also define a Wasserstein version of the metric $d_{E,k}$. For two laws $\nu_1$ and $\nu_2$, let
\begin{equation}
\label{eqn:wasser_dek}
W_{d_{E,k}}\p{\nu_1, \nu_2} = \inf\cb{d_{E,k}\p{X_1, X_2}: \law(X_1) = \nu_1, \law(X_2) = \nu_2 }. 
\end{equation}
We can easily verify that all three distances are indeed well defined distances satisfying the triangular inequality. (See more details in Appendix \ref{subsection:triangular}.) When $k = 1$, we omit the $k$ in the subscript for simplicity, i.e., we write $d_{E} = d_{E,1}$, and $W_{E} = W_{E,1}$. 

Roughly speaking, the $W_{L_1}\p{\law(X), \law(Y)}$ distance is the expected total number of elements that are different in the two random vectors $X$ and $Y$ under the best coupling. The distance $d_{E}\p{X, Y}$ is a different metric. For unit $i$, $\sum_{j \in \mathcal{N}_i} \abs{X_j - Y_j}$ is the number of different outcomes among the neighbors of $i$ in $X$ and $Y$. Thus, $d_{E,k}$ concerns the maximum expected number of different outcomes among neighbors, and same for $W_{d_{E,k}}$. At a higher level, $W_{L_1}$ is a ``collective" metric, measuring the overall distance between two distributions. When $W_{L_1}\p{\law(X), \law(Y)}$ is small, we know most of the elements in $X$ are the same as that of $Y$ under the best coupling, but there could still be some proportion of elements that are always different. 
The distance $W_{d_{E,k}}$, nevertheless, is an ``individualized" metric. When $W_{d_{E,k}}$ is small, then under the best coupling, for any unit $i$, the number of different outcomes among the neighbors of $i$ in $X$ and $Y$ is small. This implies that for any unit $i$, $f_i(y, w, \sum_{j \in \mathcal{N}_i} X_j)$ will be close to $f_i(y, w, \sum_{j \in \mathcal{N}_i} Y_j)$, and thus the outcome at the next time point will have similar behavior. 

Theorem \ref{theo:mean_field} establishes that the $L_1$-Wasserstein distance between $\mu$ and $\law\p{\Ys}$ is small, while Theorem \ref{theo:mean_field2} establishes similar results for the graph specific Wasserstein distance between $\mu$ and $\law\p{\Ys}$. 

\begin{theo}
\label{theo:mean_field}
Under Assumptions \ref{assu:MDP} - \ref{assu:contraction}, let $\mu$ be the stationary distribution of the MDP \eqref{eqn:MDP} and $\Ps$ be the fixed point of system \eqref{eqn:dynamic_system}. Assume that $\Ys_{i} \sim \operatorname{Ber}(\Ps_i)$ independently, then
\begin{equation}
W_{L_1}\p{\mu, \law(\Ys)} \leq n \sqrt{L_n} \sqrt{C}/(2(1-C)),
\end{equation} 
where $W_{L_1}$ is the $L_1$-Wasserstein distance defined in \eqref{eqn:wasser_L1}.
\end{theo}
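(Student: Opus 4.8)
The plan is to control $W_{L_1}\p{\mu,\law(\Ys)}$ by pairing a contraction property of the one-step MDP transition with a one-step drift bound, exploiting the stationarity $\mu=\Phi\mu$ from Proposition~\ref{prop:MDP_stationary}. Here $\Phi$ denotes the Markov transition operator of the MDP \eqref{eqn:MDP} acting on laws of $Y\in\cb{0,1}^n$.

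\textbf{Step 1 (contraction).} First I would show $\Phi$ is a $C$-contraction in $W_{L_1}$: for any laws $\nu_1,\nu_2$, $W_{L_1}(\Phi\nu_1,\Phi\nu_2)\le C\,W_{L_1}(\nu_1,\nu_2)$. Starting from an optimal coupling $(Y,\tilde Y)$, advance both chains one step using a common draw of the treatments $W$ and, coordinatewise, a common uniform threshold (monotone coupling) of the Bernoulli outcomes, so that $\PP{Y'_i\ne \tilde Y'_i}=\abs{f_i(Y_i,W_i,Z_i)-f_i(\tilde Y_i,W_i,\tilde Z_i)}$. Splitting this into the change in the $y$-argument, bounded by $B\,\abs{Y_i-\tilde Y_i}$ via Assumption~\ref{assu:bound_fi} since $f_i(1,w,z)-f_i(0,w,z)=c_i(z)+d_i(z)w$, and the change in the $z$-argument, bounded by $L_n\sum_{j\in\ngb_i}\abs{Y_j-\tilde Y_j}$, then summing over $i$ and using $\sum_i\sum_{j\in\ngb_i}=\sum_j\deg(j)\le D_n\sum_j$ together with Assumption~\ref{assu:contraction}, produces the factor $B+L_nD_n\le C$.

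\textbf{Step 2 (one-step drift).} Next I would bound $W_{L_1}\p{\Phi\law(\Ys),\law(\Ys)}$. The delicate point is that $\Phi\law(\Ys)$ is not a product measure, and naive (identity or common-uniform) couplings cost order $n$ because $f_i$ fluctuates at order one in its own arguments $\Ys_i,W_i$. The fix is to introduce an intermediate vector $\Ys^{\dagger}$ with $\Ys^{\dagger}_i\sim\operatorname{Ber}\p{f_i(\Ys_i,W_i,\Qs_i)}$, i.e.\ freezing the neighbor influence $\Zs_i=\sum_{j\in\ngb_i}\Ys_j$ at its mean $\Qs_i=\sum_{j\in\ngb_i}\Ps_j$ while keeping each unit's own $\Ys_i,W_i$. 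By linearity of $f_i$ in $(y,w)$ and the fixed-point identity $\Ps_i=f_i(\Ps_i,\pi_i,\Qs_i)$, each $\Ys^{\dagger}_i$ has marginal $\operatorname{Ber}(\Ps_i)$ and the coordinates are independent (as $\Qs_i$ is deterministic), so $\law(\Ys^{\dagger})=\law(\Ys)$ exactly. Coupling $\Ys^{\star\star}\sim\Phi\law(\Ys)$ to $\Ys^{\dagger}$ through the shared $(\Ys,W)$ and a shared uniform per coordinate, the two differ only through the $z$-argument, giving $\PP{\Ys^{\star\star}_i\ne\Ys^{\dagger}_i}\le L_n\,\EE{\abs{\Zs_i-\Qs_i}}$. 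Since the $\Ys_j$ are independent Bernoulli, $\EE{\abs{\Zs_i-\Qs_i}}\le\tfrac12\sqrt{D_n}$ by a variance bound and Assumption~\ref{assu:largest_degree}, so summing yields $W_{L_1}\p{\Phi\law(\Ys),\law(\Ys)}\le nL_n\sqrt{D_n}/2\le n\sqrt{L_n}\sqrt{C}/2$, where the last inequality uses $L_nD_n\le C$.

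\textbf{Step 3 (combine).} Finally, using $\mu=\Phi\mu$, the triangle inequality, and Step 1,
\begin{equation}
W_{L_1}\p{\mu,\law(\Ys)}=W_{L_1}\p{\Phi\mu,\law(\Ys)}\le C\,W_{L_1}\p{\mu,\law(\Ys)}+W_{L_1}\p{\Phi\law(\Ys),\law(\Ys)},
\end{equation}
so $W_{L_1}\p{\mu,\law(\Ys)}\le(1-C)^{-1}W_{L_1}\p{\Phi\law(\Ys),\law(\Ys)}$, and substituting Step 2 gives the stated bound. I expect Step 2 to be the main obstacle: the task is to find a coupling whose cost scales like $nL_n\sqrt{D_n}$ rather than $n$. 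The key realization is that the only cross-coordinate dependence, and the only deviation from $\law(\Ys)$, enter through the Lipschitz $z$-argument, so the intermediate variable $\Ys^{\dagger}$ — which matches $\law(\Ys)$ by linearity of $f_i$ in $(y,w)$ yet isolates the $\abs{\Zs_i-\Qs_i}$ discrepancy — is exactly what makes the bound carry the factor $L_n$.
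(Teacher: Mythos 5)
Your proposal is correct and follows essentially the same route as the paper: Step 1 is Lemma \ref{lemma:L1_contraction}, Step 2 (including the intermediate vector freezing the neighbor influence at $\Qs_i$ and using the fixed-point identity to match $\law(\Ys)$) is exactly Lemma \ref{lemma:close_dyn_MDP}, and Step 3 is the same triangle-inequality-plus-stationarity argument used in the paper's proof of Theorem \ref{theo:mean_field}.
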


\begin{theo}
\label{theo:mean_field2}
Under the conditions of Theorem \ref{theo:mean_field},
\begin{equation}
\begin{split}
 W_{d_E}\p{\mu, \law(\Ys)} &\leq \sqrt{D_n} C/(2(1-C)),\\
W_{d_{E,3}}\p{\mu, \law(\Ys)} &\leq (2C \sqrt{D_n} + 1)/(1-C),
\end{split}
\end{equation}
where $W_{d_E} = W_{d_1}$ and $W_{d_E,3}$ are the graph dependent distances defined in \eqref{eqn:wasser_dek}. 
\end{theo}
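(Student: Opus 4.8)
The plan is to prove both inequalities with a single coupling argument, comparing the stationary MDP against an auxiliary chain whose stationary law is \emph{exactly} $\law(\Ys)$. I would define this auxiliary chain by $\tilde Y_{i(t+1)}\sim\operatorname{Ber}(f_i(\tilde Y_{it},W_{it},\Qs_i))$ independently, where $\Qs_i=\sum_{j\in\ngb_i}\Ps_j$ is the fixed, \emph{deterministic} mean-field neighbor sum at the fixed point $\Ps$. Since $\Qs_i$ does not depend on the configuration, the coordinates of $\tilde Y_t$ evolve as $n$ independent two-state Markov chains; using the linearity of $f_i$ in its first argument together with the fixed-point identity $\Ps_i=f_i(\Ps_i,\pi_i,\Qs_i)$, one checks that the product-Bernoulli law $\law(\Ys)$ is stationary and that each coordinate chain is ergodic, so $\tilde Y_t\sim\law(\Ys)$ for all $t$ once $\tilde Y_0\sim\law(\Ys)$. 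I would start $\tilde Y_0\sim\law(\Ys)$ and $Y_0$ arbitrary, run both chains with the same treatments $W_t$, and draw $(Y_{i(t+1)},\tilde Y_{i(t+1)})$ from their maximal coupling with independent external randomness across $i$. Writing $\Delta_{it}=\abs{Y_{it}-\tilde Y_{it}}$ and $F_{it}=\sum_{j\in\ngb_i}(\tilde Y_{jt}-\Ps_j)$, the conditional disagreement probability satisfies
\[
\PP{\Delta_{i(t+1)}=1\mid\mathcal F_t}=\abs{f_i(Y_{it},W_{it},Z_{it})-f_i(\tilde Y_{it},W_{it},\Qs_i)}\le B\,\Delta_{it}+L_n\Big(\sum_{j\in\ngb_i}\Delta_{jt}+\abs{F_{it}}\Big),
\]
where the first two terms use the decomposition \eqref{eqn:f_decomposition} with $\abs{c_i+d_iw}\le B$ and Lipschitzness, and the split $\abs{Z_{it}-\Qs_i}\le\sum_{j\in\ngb_i}\Delta_{jt}+\abs{F_{it}}$ separates the coupling gap from a fluctuation term.

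The crux is that $F_{it}$ is a sum of at most $D_n$ independent centered Bernoullis, because $\tilde Y_t$ has independent coordinates with the exact means $\Ps_j$; its fluctuations are therefore cheap: $\EE{\abs{F_{it}}}\le\sqrt{\operatorname{Var}(F_{it})}\le\sqrt{D_n}/2$, and a standard third-moment inequality for centered independent Bernoullis gives $\|F_{it}\|_3\le 2\sqrt{D_n}$. Setting $g_t=d_E(Y_t,\tilde Y_t)$, I take expectations in the display, sum over $\ell\in\ngb_i$, bound the two-hop term $\sum_{\ell\in\ngb_i}\sum_{j\in\ngb_\ell}\EE{\Delta_{jt}}\le D_n g_t$, and invoke $L_nD_n\le C$ to obtain $g_{t+1}\le C g_t+ C\sqrt{D_n}/2$. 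Since $C<1$, iterating gives $\limsup_t g_t\le \sqrt{D_n}\,C/(2(1-C))$.

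For the $d_{E,3}$ bound I track $h_t=d_{E,3}(Y_t,\tilde Y_t)$. Conditionally on $\mathcal F_t$ the disagreements $\cb{\Delta_{\ell(t+1)}}_{\ell\in\ngb_i}$ are independent Bernoullis, so $T_i:=\sum_{\ell\in\ngb_i}\Delta_{\ell(t+1)}$ has conditional mean $\lambda_i=\sum_{\ell\in\ngb_i}\PP{\Delta_{\ell(t+1)}=1\mid\mathcal F_t}$; comparing falling-factorial moments with those of a Poisson$(\lambda_i)$ yields $\EE{T_i^3\mid\mathcal F_t}\le(\lambda_i+1)^3$, hence $\|T_i\|_3\le\|\lambda_i\|_3+1$. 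Bounding $\|\lambda_i\|_3$ by Minkowski as before — the two-hop term contributing $L_nD_n h_t\le Ch_t$ and the fluctuation term contributing $L_nD_n\cdot 2\sqrt{D_n}\le 2C\sqrt{D_n}$ — gives $h_{t+1}\le C h_t+2C\sqrt{D_n}+1$, so $\limsup_t h_t\le(2C\sqrt{D_n}+1)/(1-C)$. Finally, because $\cb{0,1}^n$ is finite, the couplings $\law(Y_t,\tilde Y_t)$ lie in a compact set and $d_{E,k}$ is continuous in the joint law; passing to a subsequential limit with marginals $\mu$ (using $Y_t\Rightarrow\mu$ from Proposition \ref{prop:MDP_stationary}) and $\law(\Ys)$ produces a coupling attaining the $\limsup$ bounds, which establishes both inequalities.

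I expect the main obstacle to be the third-moment control. Obtaining the clean additive constant requires \emph{both} the Poisson domination for $T_i$ (to prevent a blow-up and produce the ``$+1$''), and the $O(\sqrt{D_n})$ bound on $\|F_{it}\|_3$ with a constant small enough to give $2\sqrt{D_n}$. Care is needed here precisely because the disagreement vector $\Delta_{\cdot(t+1)}$ is only \emph{conditionally} independent given $\mathcal F_t$, not independent, so all moment estimates must be carried out conditionally and then recombined via Minkowski over the randomness in $\mathcal F_t$.
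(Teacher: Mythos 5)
Your proposal is correct, and the constants come out exactly as in the statement. The underlying estimates are the same ones the paper uses---the decomposition \eqref{eqn:f_decomposition} with $\abs{c_i+d_iw}\le B$ plus Lipschitzness, the degree bound to control the two-hop sums, the variance and third-moment control of $\sum_{j\in\ngb_i}(\Ys_j-\Ps_j)$ under the product law, and the contraction constant $C=B+L_nD_n$---but the architecture is genuinely different. The paper never introduces your auxiliary frozen-environment chain as a process: it proves a \emph{one-step} perturbation bound (its Lemma \ref{lemma:close_dyn_MDP} constructs exactly your $\tilde Y_1$, i.e.\ one step with the neighbor sum frozen at $\Qs_i$, shows it has law $\law(\Ys)$, and bounds $W_{d_{E,k}}(\law(\Ys),\law(Y_1))$), and then closes the argument with a triangle inequality at the level of stationary Wasserstein distances, $W(\law(\Ys),\mu)\le W(\law(\Ys),\law(Y_1))+C\,W(\law(\Ys),\mu)$, using the contraction Lemmas \ref{lemma:contraction_d_e} and \ref{lemma:contraction_d_e3}. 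You instead run the coupled pair for all time and iterate a drift inequality $g_{t+1}\le Cg_t+\mathrm{err}$; the geometric sum reproduces the same $\mathrm{err}/(1-C)$. What the paper's route buys is that it works directly with stationary marginals ($X_0\sim\mu$, $Y_0\sim\law(\Ys)$), so no terminal compactness/subsequential-limit step is needed; what your route buys is that the $d_{E,3}$ case is handled in one recursion rather than by separately proving an affine contraction ($Ch_t+1$) and a one-step bound and then combining them. Your Poisson-domination bound $\EE{T_i^3\mid\mathcal F_t}\le(\lambda_i+1)^3$ for conditionally independent Bernoulli sums is exactly the mechanism behind the ``$+1$'' in the paper's Lemma \ref{lemma:contraction_d_e3}, and your $\|F_{it}\|_3\le 2\sqrt{D_n}$ is comfortably true (in fact $\sqrt{D_n}$ suffices, as in the paper's $\EE{\abs{Z_{i0}-\Qs_i}^3}\le D_n^{3/2}$), so the recombination via Minkowski over $\mathcal F_t$ closes as you describe.
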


To interpret the results, consider the network examples \ref{netexam:erdos} and \ref{netexam:graphon}. There, we have that $L_n = \oo_p\p{1/(n \rhon)}$ and $D_n = \oo_p\p{n \rhon}$. Thus, Theorem \ref{theo:mean_field} implies that $W_{L_1}\p{\mu, \law(\Ys)} = \oo_p(n/\sqrt{n\rhon})$. Hence, as long as $n \rhon \to \infty$, $W_{L_1}\p{\mu, \law(\Ys)} =  o_p(n)$. Similarly, Theorem \ref{theo:mean_field2} implies that in examples \ref{netexam:erdos} and \ref{netexam:graphon}, $W_{d_E}\p{\mu, \law(\Ys)} = \oo_p(\sqrt{n\rhon})$. Hence, as long as $n \rhon \to \infty$, $W_{L_1}\p{\mu, \law(\Ys)} =  o_p(n \rhon)$. We note here that all the randomness inside the $\oo_p(\cdot)$ and $o_p(\cdot)$ comes from randomly generating the interference graph. Theorem \ref{theo:mean_field} and \ref{theo:mean_field2} give inequalities that hold almost surely conditioning on the interference graph, i.e., if $D_n$ and $L_n$ are treated as fixed numbers.

\section{Characterization and estimation of the causal estimands}
\label{section:estimation}

We now move on to study the causal estimands discussed in Section \ref{section:MDP}. We ask and aim at addressing the following questions: Can they be simplified into forms that are easier to analyze? Can they be consistently estimated?

\subsection{The short-term direct effect}


To study the short-term direct effect, we look at the micro-randomized trial at one specific time $t$. A natural estimator to use here is the inverse propensity weighted (IPW) estimator:
\begin{equation}
\label{eqn:IPW_estimator}
\htau_{\operatorname{IPW}, t} =  \frac{1}{n}\sum_{i=1}^n Y_{i (t+1)} \p{\frac{W_{t}}{\pi_i} - \frac{1 - W_{t}}{1 - \pi_i}}. 
\end{equation}

Theorem \ref{theo:SDE_estimation} establishes that the IPW estimator is consistent for the short-term direct effect. 
\begin{theo}[Short-term direct effect estimation]
\label{theo:SDE_estimation}
Under Assumption \ref{assu:MDP} and \ref{assu:Bern},
\begin{equation}
\label{eqn:SDE_estimation}
\htau_{\operatorname{IPW}} =  \tau_{\SDE, t} + \oo_p\p{\frac{1}{\sqrt{n}}},
\end{equation}
where the estimator $\htau_{\operatorname{IPW}}$ is defined in \eqref{eqn:IPW_estimator}, and the estimand $ \tau_{\SDE, t}$ is defined in \eqref{eqn:SDE_def}. 
\end{theo}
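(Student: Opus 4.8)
The plan is to run the entire argument conditionally on the time-$t$ state $Y_t$: I would first show that the IPW estimator is \emph{exactly} conditionally unbiased for $\tau_{\SDE,t}$ given $Y_t$, and then bound its conditional variance by $\oo(1/n)$ by exploiting the fact that, once $Y_t$ is fixed, the summands become independent across units. A conditional Chebyshev bound then delivers the $\oo_p(1/\sqrt{n})$ rate.

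For the unbiasedness step I would condition on $Y_t$, which fixes every neighborhood count $Z_{it} = \sum_{j \in \ngb_i} Y_{jt}$, and analyze a single summand of \eqref{eqn:IPW_estimator}. By Assumption \ref{assu:MDP} and the tower property, $\EE{Y_{i(t+1)} \mid Y_t, W_{it}} = f_i(Y_{it}, W_{it}, Z_{it})$; since $W_{it} \in \cb{0,1}$, one has $W_{it} f_i(Y_{it}, W_{it}, Z_{it}) = W_{it} f_i(Y_{it}, 1, Z_{it})$ and $(1-W_{it}) f_i(Y_{it}, W_{it}, Z_{it}) = (1-W_{it}) f_i(Y_{it}, 0, Z_{it})$. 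Combining these with $\EE{W_{it} \mid Y_t} = \pi_i$ from Assumption \ref{assu:Bern} yields
\begin{equation*}
\EE{Y_{i(t+1)}\p{\frac{W_{it}}{\pi_i} - \frac{1-W_{it}}{1-\pi_i}} \mid Y_t} = f_i(Y_{it}, 1, Z_{it}) - f_i(Y_{it}, 0, Z_{it}),
\end{equation*}
and averaging over $i$ gives $\EE{\htau_{\operatorname{IPW},t} \mid Y_t} = \tau_{\SDE,t}$ with no approximation error.

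The central step is the conditional variance bound, and here the \emph{key structural observation} is that, conditionally on $Y_t$, the pairs $(W_{it}, Y_{i(t+1)})_{i=1}^n$ are mutually independent: the treatments are drawn independently across units by Assumption \ref{assu:Bern}, and given the treatments the next-period outcomes are drawn independently by Assumption \ref{assu:MDP}. All the cross-unit coupling induced by the interference network flows through the counts $Z_{it}$, which are frozen once $Y_t$ is conditioned upon. Hence the centered summands are conditionally independent, mean-zero, and uniformly bounded---provided the design probabilities are bounded away from $0$ and $1$, so that the inverse-propensity weights $1/\pi_i$ and $1/(1-\pi_i)$ are $\oo(1)$---which gives $\operatorname{Var}(\htau_{\operatorname{IPW},t} \mid Y_t) = n^{-2}\sum_i \operatorname{Var}(\xi_i \mid Y_t) \leq C/n$ for a constant $C$ uniform in $n$ and in the realization of $Y_t$.

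Finally, combining the two pieces gives $\EE{(\htau_{\operatorname{IPW},t} - \tau_{\SDE,t})^2 \mid Y_t} \leq C/n$ for every $Y_t$, so the conditional Chebyshev inequality produces a tail bound uniform in the realization of $Y_t$; integrating out $Y_t$ then yields an unconditional bound uniform in $n$, i.e.\ $\htau_{\operatorname{IPW},t} - \tau_{\SDE,t} = \oo_p(1/\sqrt{n})$. I expect the only genuine subtlety to be the bookkeeping around the fact that the estimand $\tau_{\SDE,t}$ is itself a random function of $Y_t$, which forces the whole argument to be carried out conditionally rather than in unconditional expectation. The conditional-independence observation is precisely what neutralizes the network interference here, so that none of the contraction or ergodicity machinery (Assumptions \ref{assu:bound_fi}--\ref{assu:contraction}) is needed for this short-horizon result.
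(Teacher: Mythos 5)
Your proposal is correct and follows essentially the same route as the paper: conditional unbiasedness given $Y_t$, conditional independence of the summands once $Y_t$ (and hence the $Z_{it}$) is frozen, a per-summand conditional variance bound, and Chebyshev. Your explicit caveat that the $\pi_i$ must be bounded away from $0$ and $1$ is in fact needed for the $\oo(1/n)$ variance bound (the paper's own computation silently treats $\operatorname{Var}(W_{it}/\pi_i - (1-W_{it})/(1-\pi_i))$ as $1$ when it equals $1/(\pi_i(1-\pi_i))$), so if anything you are being slightly more careful than the published argument.
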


This result enables researchers to draw meaningful conclusions even when they only look at the micro-randomized trial at one decision point. By focusing on the treatments and the outcomes following immediately, we can consistently estimate the short-term direct effect even in the presence of interference. 
In such short time, the cross-unit interference has not come into effect. Indeed,  conditional on $Y_{t}$, each pair of $(Y_{i(t+1)}, W_{it})$ are independent. Thus the behavior of the above IPW estimator in the micro-randomized trial does not differ too much from that in the standard randomized control trial. 

Similar consistency results of the IPW estimator for the direct effect have been established by \citet{savje2021average}; see also \citet{li2020random}. Nevertheless, unlike in \citet{savje2021average} where the outcome of subject $i$ depends on the current treatments of its neighbors, we assume in this paper that the outcome of subject $i$ depends on the outcomes of its neighbors in previous time periods. This gives rise to a difference in the statistical properties of the IPW estimator. 


\subsection{The long-term direct effect}
\label{section:LTD}

A more challenging question to ask is how the treatments influence outcomes in the long term. In this section, we focus on the long-term effect of one unit's treatment on its own outcome. For a time independent policy of treatment, it has been established in Proposition \ref{prop:MDP_stationary} that in the long run, the outcomes will converge in distribution to the stationary distribution $\mu(\pi)$. Therefore a simpler way of writing down the long-term effect is to express it in terms of the stationary distribution. 

More precisely, we are interested in
\begin{equation}
\tau_{\LDE}(\gamma_1, \gamma_2) 
 = \frac{1}{n} \sum_{i = 1}^n \p{\EE[\mu(\pi_i = \gamma_1, \pi_{-i})]{Y_i} - \EE[\mu(\pi_i = \gamma_2, \pi_{-i})]{Y_i} }. 
\end{equation}

To estimate this quantity, one key question is whether we can express this quantity in terms of the stationary distribution of the original experiments. In other words, if we have only implemented the experiments for a fixed $\pi$ for a long time but have never really changed our treatment probability, is it possible to estimate this estimand in an ``off-policy" sense? Theorem \ref{theo:LDE_char} gives a positive answer to the above questions. We establish in Theorem \ref{theo:LDE_char} that we can rewrite each term in $\tau_{\LDE}(\gamma_1, \gamma_2)$ in terms of expectations under the original stationary distribution $\mu(\pi)$.

To gain some intuition, consider the following approximation for the
``tilted" stationary distribution $\mu(\pi_i = \gamma, \pi_{-i})$, i.e., the stationary distribution corresponding to a treatment vector with $\pi_i = \gamma$ and the other $\pi_j$ remaining unchanged. Under this stationary distribution, 
\begin{equation}
\begin{split}
&\EE[\mu(\pi_i = \gamma, \pi_{-i})]{Y_i} \\
&\qquad = \EE[\mu(\pi_i = \gamma, \pi_{-i})]{f_i(Y_i, W_i, Z_i)}\\
 &\qquad=  \EE[\mu(\pi_i = \gamma, \pi_{-i})]{a_i(Z_i) +  b_i(Z_i) W_i + c_i (Z_i) Y_i+ d_i(Z_i) W_i Y_i}\\
&\qquad \approx \EE[\mu(\pi_i = \gamma, \pi_{-i})]{a_i(Z_i)} +  \EE[\mu(\pi_i = \gamma, \pi_{-i})]{b_i(Z_i)} \gamma\\
&\qquad \qquad   +  \EE[\mu(\pi_i = \gamma, \pi_{-i})]{c_i (Z_i)} \EE[\mu(\pi_i = \gamma, \pi_{-i})]{Y_i} +  \EE[\mu(\pi_i = \gamma, \pi_{-i})]{d_i(Z_i)} \gamma \EE[\mu(\pi_i = \gamma, \pi_{-i})]{Y_i},
\end{split}
\end{equation}
if $W_i$, $Y_i$ and $Z_i$ are roughly independent. 
Moving all terms involving $\EE[\mu(\pi_i = \gamma, \pi_{-i})]{Y_i}$ to the left hand side, we get
\begin{equation}
\EE[\mu(\pi_i = \gamma, \pi_{-i})]{Y_i} \approx \frac{ \EE[\mu(\pi_i = \gamma, \pi_{-i})]{a_i(Z_i)} +\EE[\mu(\pi_i = \gamma, \pi_{-i})]{b_i(Z_i)} \gamma }{1 - \EE[\mu(\pi_i = \gamma, \pi_{-i})]{c_i (Z_i)}  -  \EE[\mu(\pi_i = \gamma, \pi_{-i})]{d_i(Z_i)} \gamma}.
\end{equation}
Now if we assume that $Z_i$'s are not influenced too much by the treatment probability of unit $i$, then 
\begin{equation}
\begin{split}
\EE[\mu(\pi_i = \gamma, \pi_{-i})]{Y_i} 
&\approx \frac{ \EE[\mu(\pi_i = \gamma, \pi_{-i})]{a_i(Z_i)} +\EE[\mu(\pi_i = \gamma, \pi_{-i})]{b_i(Z_i)} \gamma }{1 - \EE[\mu(\pi_i = \gamma, \pi_{-i})]{c_i (Z_i)}  -  \EE[\mu(\pi_i = \gamma, \pi_{-i})]{d_i(Z_i)} \gamma}\\
&\approx \frac{ \EE[\mu(\pi)]{a_i(Z_i)} +\EE[\mu(\pi)]{b_i(Z_i)} \gamma }{1 - \EE[\mu(\pi)]{c_i (Z_i)}  -  \EE[\mu(\pi)]{d_i(Z_i)} \gamma}.
\end{split}
\end{equation}
The following result establishes that, given our assumptions, this heuristic in fact
correctly recovers the long-term direct effect.

\begin{theo}[Long-term direct effect characterization]
\label{theo:LDE_char}
Under Assumptions \ref{assu:MDP} - \ref{assu:contraction},
\begin{equation}
\label{eqn:LDE_char}
\EE[\mu(\pi_i = \gamma, \pi_{-i})]{Y_i}  =  \frac{\EE[\mu(\pi)]{a_i(Z_i) + b_i(Z_i) \gamma}}{\EE[\mu(\pi)]{1 - c_i(Z_i) - d_i(Z_i) \gamma}} + \oo\p{ \sqrt{L_n}}.
\end{equation}
\end{theo}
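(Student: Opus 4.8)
The plan is to route everything through the ``tilted'' stationary distribution $\widetilde{\mu} := \mu(\pi_i = \gamma, \pi_{-i})$: first derive an \emph{exact} identity for $\EE[\widetilde\mu]{Y_i}$ up to two covariance terms, then control those covariances with the mean-field coupling of Theorems \ref{theo:mean_field}--\ref{theo:mean_field2}, and finally replace each $\widetilde\mu$-expectation by the corresponding $\mu$-expectation through a sensitivity argument on the fixed points.

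First I would exploit stationarity. Since $\widetilde\mu$ is stationary for the MDP \eqref{eqn:MDP} under the tilted policy, the tower property gives $\EE[\widetilde\mu]{Y_i} = \EE[\widetilde\mu]{f_i(Y_i, W_i, Z_i)}$ with $W_i \sim \ber(\gamma)$. Expanding via \eqref{eqn:f_decomposition}, and using that under Assumption \ref{assu:Bern} the current treatment $W_i$ is independent of the current state $(Y_i, Z_i)$, the two terms carrying $W_i$ factor exactly. Writing $\EE[\widetilde\mu]{c_i(Z_i) Y_i} = \EE[\widetilde\mu]{c_i(Z_i)}\EE[\widetilde\mu]{Y_i} + \operatorname{Cov}_{\widetilde\mu}(c_i(Z_i), Y_i)$ (and similarly for $d_i$) and solving the resulting linear equation for $\EE[\widetilde\mu]{Y_i}$ yields
\begin{equation*}
\EE[\widetilde\mu]{Y_i} = \frac{\EE[\widetilde\mu]{a_i(Z_i) + \gamma\, b_i(Z_i)}}{\EE[\widetilde\mu]{1 - c_i(Z_i) - \gamma\, d_i(Z_i)}} + \frac{\operatorname{Cov}_{\widetilde\mu}(c_i(Z_i), Y_i) + \gamma\operatorname{Cov}_{\widetilde\mu}(d_i(Z_i), Y_i)}{\EE[\widetilde\mu]{1 - c_i(Z_i) - \gamma\, d_i(Z_i)}},
\end{equation*}
where the denominator is bounded below by $1 - B > 0$ by Assumption \ref{assu:bound_fi}, so the division is stable.

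Second, I would bound the covariances. Couple $\widetilde\mu$ with the product law $\law(\widetilde\Ys)$, $\widetilde\Ys_j \sim \ber(\widetilde\Ps_j)$ at the tilted fixed point, using the coupling behind Theorems \ref{theo:mean_field} and \ref{theo:mean_field2}. Under the product law $\widetilde\Ys_i$ and $\sum_{j \in \ngb_i}\widetilde\Ys_j$ are independent (as $i \notin \ngb_i$), so the covariance vanishes there; a telescoping estimate using $\abs{c_i} \le B$ and the $L_n$-Lipschitzness of $c_i$ then bounds $\abs{\operatorname{Cov}_{\widetilde\mu}(c_i(Z_i), Y_i)}$ by $\oo\p{\EE{\abs{Y_i - \Ys_i}}} + \oo\p{L_n \EE{\sum_{j \in \ngb_i}\abs{Y_j - \Ys_j}}}$. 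The neighborhood term is $\oo(L_n\sqrt{D_n}) = \oo(\sqrt{L_n D_n}\,\sqrt{L_n}) = \oo(\sqrt{L_n})$ by Theorem \ref{theo:mean_field2} and $L_n D_n \le C$, while the per-coordinate term is $\oo(\sqrt{L_n})$ because the coupling underlying Theorem \ref{theo:mean_field} controls each coordinate at this order (its $W_{L_1}$ bound being $n$ times a per-coordinate bound of order $\sqrt{L_n}$). The same estimate applies to the $d_i$ covariance, so the entire covariance remainder is $\oo(\sqrt{L_n})$.

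Third and last, I would replace each $\widetilde\mu$-expectation in the leading fraction by the corresponding $\mu$-expectation. Since $a_i, \dots, d_i$ are $\oo(L_n)$-Lipschitz in $z$, for any of them $\abs{\EE[\widetilde\mu]{g(Z_i)} - \EE[\mu]{g(Z_i)}} \le \oo(L_n)\, W_{d_E}(\widetilde\mu, \mu)$, so it suffices that $W_{d_E}(\widetilde\mu, \mu) = \oo(\sqrt{D_n})$. By the triangle inequality, $W_{d_E}(\widetilde\mu, \mu) \le W_{d_E}(\widetilde\mu, \law(\widetilde\Ys)) + W_{d_E}(\law(\widetilde\Ys), \law(\Ys)) + W_{d_E}(\law(\Ys), \mu)$; the outer two terms are $\oo(\sqrt{D_n})$ by Theorem \ref{theo:mean_field2}, and the middle term reduces to the deterministic perturbation $\max_k \sum_{j \in \ngb_k} \abs{\widetilde\Ps_j - \Ps_j}$ between the two fixed points. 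This fixed-point sensitivity is the hard part: only $\pi_i$ has changed, so writing $\delta_k = \abs{\widetilde\Ps_k - \Ps_k}$ and differencing \eqref{eqn:dynamic_system} gives $(1 - B)\delta_k \le L_n \sum_{l \in \ngb_k}\delta_l + \oo(1)\mathbbm{1}\{k = i\}$, i.e.\ a Neumann bound $\delta \le \sum_{r \ge 0}(\tfrac{L_n}{1-B}E)^r v$ with $v$ supported on $i$. Summing over $k$ and counting walks (at most $D_n^r$ walks of length $r$ from $i$) yields $\sum_k \delta_k = \oo(1)\sum_{r\ge0}(\tfrac{L_n D_n}{1-B})^r = \oo(1)$, convergence coming from $\tfrac{L_n D_n}{1-B} \le \tfrac{C-B}{1-B} < 1$ via Assumption \ref{assu:contraction}. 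Hence the middle term is $\oo(1)$, so $W_{d_E}(\widetilde\mu, \mu) = \oo(\sqrt{D_n})$ and the replacement error is $\oo(L_n \sqrt{D_n}) = \oo(\sqrt{L_n})$; stability of the fraction (bounded numerator, denominator $\ge 1 - B$) then propagates this into the claimed $\oo(\sqrt{L_n})$ bound. I expect this perturbation step, together with securing the per-coordinate coupling bound used in the second step, to be the main obstacle.
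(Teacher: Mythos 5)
Your Step 1 is exactly the paper's: stationarity plus the decomposition \eqref{eqn:f_decomposition} and independence of $W_i$ yields the identity \eqref{eqn:equil_eqn} with a covariance remainder, and the denominator is bounded below by $1-B$. From there you diverge in two places, and the comparison is instructive. For the covariance term, the paper first uses the trivial bound $\abs{Y_i - \EE{Y_i}} \leq 1$ to reduce to the mean absolute deviation $\EE[\widetilde\mu]{\abs{c_i + d_i\gamma - \EE[\widetilde\mu]{c_i + d_i\gamma}}}$, which depends on the state only through the neighborhood sum $Z_i$ and is therefore controlled entirely by the $W_{d_E}$ and variance-of-independent-Bernoullis estimates (giving $\oo(L_n^2 D_n^{3/2})$). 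Your telescoping of $\Cov[\widetilde\mu]{c_i(Z_i), Y_i}$ against the product law instead leaves a term of order $\EE{\abs{Y_i - \Ys_i}}$, and your stated justification for bounding it --- that the $W_{L_1}$ bound of Theorem \ref{theo:mean_field} is ``$n$ times a per-coordinate bound'' --- is not valid as an inference: a bound on $\sum_i \EE{\abs{Y_i - \Ys_i}}$ of order $n\sqrt{L_n}$ permits individual coordinates of order $1$. The conclusion you need is nonetheless true, but it requires restating Lemma \ref{lemma:L1_contraction} and the one-step perturbation bound in the sup norm $\max_i \EE{\abs{X_{it}-Y_{it}}}$ (which contracts by $B + L_n D_n \leq C$ by the same computation); you correctly flag this as an obstacle, and the paper's route avoids it altogether. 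For the transfer from $\widetilde\mu$ to $\mu(\pi)$, your fixed-point Neumann-series perturbation works (the walk-counting series converges since $L_n D_n/(1-B) \leq (C-B)/(1-B) < 1$), but the paper's argument is more direct and sharper: it couples two copies of the MDP started from $\mu(\pi)$ that differ only in unit $i$'s treatment at one step, so their states differ only at coordinate $i$ after that step, giving $W_{d_E}(\widetilde\mu, \mu) \leq 1/(1-C)$ outright and a transfer error of $\oo(L_n)$ rather than your $\oo(L_n\sqrt{D_n})$. Both routes land on $\oo(\sqrt{L_n})$, so your plan is viable once the per-coordinate contraction lemma is supplied, but as written that step rests on an incorrect deduction from the stated theorem.
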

An immediate corollary of Theorem \ref{theo:LDE_char} is that we can express $\tau_{\LDE}(\gamma_1, \gamma_2)$ in term of expectations under $\mu(\pi)$ with a small error. 
\begin{equation}
\label{eqn:LDE_char2}
\tau_{\LDE}(\gamma_1, \gamma_2) = \frac{1}{n} \sum_{i=1}^n  \frac{\EE[\mu(\pi)]{a_i(Z_i) + b_i(Z_i) \gamma_1}}{\EE[\mu(\pi)]{1 - c_i(Z_i) - d_i(Z_i)\gamma_1}} - \frac{1}{n} \sum_{i=1}^n  \frac{\EE[\mu(\pi)]{a_i(Z_i) + b_i(Z_i) \gamma_2}}{\EE[\mu(\pi)]{1 - c_i(Z_i) - d_i(Z_i)\gamma_2}} +  \oo\p{\sqrt{L_n}}.  
\end{equation}
This characterization allows us to estimate $\tau_{\LDE}(\gamma_1, \gamma_2)$ using data from experiments with treatment probability $\pi$. Before going into details of the estimation problem, we make a few remarks on Theorem \ref{theo:LDE_char}. 

We first look at the error term $\sqrt{L_n}$. Assumption \ref{assu:contraction} requires that $L_n D_n < 1$. Thus as long as the largest node degree converges goes to infinity, the error term $\sqrt{L_n} < 1/\sqrt{D_n} \to 0$. Interestingly, the bound on the error term does not come from averaging causal effects from different units; indeed, Theorem \ref{theo:LDE_char} gives characterization for each single unit of the expected outcome under a different policy.

Another close look at the expression \eqref{eqn:LDE_char} shows that the long-term causal effect is in general not linear in the treatment probability. If we assume $\EE[\mu(\pi)]{a_i(Z_i)}>0, \dots, \EE[\mu(\pi)]{d_i(Z_i)} > 0$, then we can immediately verify that $\p{\EE[\mu(\pi)]{a_i(Z_i) + b_i(Z_i) \gamma}}/\p{\EE[\mu(\pi)]{1 - c_i(Z_i) - d_i(Z_i) \gamma}}$, as a function of $\gamma$, is increasing and convex. In the fitness app example, the above claim says that when a user responds positively to the encouraging messages, tends to behave similarly as yesterday, and is even more excited about running with both messages and the exercise from yesterday, then it is relatively easier to change her long-term running probability (habit) from 0.8 to 0.9 than from 0.1 to 0.2. 

Finally, we discuss estimation strategies for the long-term direct effect. Formula \eqref{eqn:LDE_char2} shows that it suffices to estimate $\EE[\mu(\pi)]{a_i(Z_{i t})}, \dots, \EE[\mu(\pi)]{d_i(Z_{i t})}$ for each $i$, which is equivalent to estimating $\EE[\mu(\pi)]{f_i(0,0,Z_i)}$, $\EE[\mu(\pi)]{f_i(0,1,Z_i)}$, $\EE[\mu(\pi)]{f_i(1,0,Z_i)}$ and $\EE[\mu(\pi)]{f_i(1,1,Z_i)}$. Here we take $\EE[\mu(\pi)]{f_i(0,0,Z_i)}$ as an example. Recall that $\EE{Y_{i(t+1)}\mid W_t, Y_t} = f_i(Y_i,W_i,Z_i)$. Therefore, a natural strategy is to take the time points at which $W_{it} = 0$ and $Y_{it} = 0$, and to take the average of the corresponding $Y_{i(t+1)}$'s. Define
\begin{equation}
\widehat{f_i(0,0)}= \frac{\frac{1}{T} \sum_{t = 1}^T Y_{i (t+1)} (1 - W_{i t}) (1 - Y _{i t})}{\frac{1}{T} \sum_{t = 1}^T (1 - W_{i t}) (1 - Y _{i t})}. 
\end{equation}
Proposition \ref{prop:MDP_stationary} implies that the Markov chain induced by Bernoulli treatments is ergodic. Therefore, when $T$ is large, average over the history approximates the stationary distribution well. Roughly, this implies $\frac{1}{T} \sum_{t = 1}^T Y_{i (t+1)} (1 - W_{i t}) (1 - Y _{i t}) \approx \EE[\mu(\pi)]{Y_{i (t+1)} (1 - W_{i t}) (1 - Y _{i t})}$ and same for the denominator. Thus $\widehat{f_i(0,0)} \approx \EE[\mu(\pi)]{Y_{i(t+1)}\mid W_{it} =0, Y_{it} = 0}  = \EE[\mu(\pi)]{f_i(0,0,Z_{it}) \mid W_{it} =0, Y_{it} = 0}$. Again, if $Z_i$ does not depend much on $Y_i$ under $\mu(\pi)$, then $\EE[\mu(\pi)]{f_i(0,0,Z_{it}) \mid W_{it} =0, Y_{it} = 0} \approx \EE[\mu(\pi)]{f_i(0,0,Z_{it})}$. 

Put formally, for $w \in \cb{0,1}$ and $y \in \cb{0,1}$, define
\begin{equation}
\widehat{f_i(y,w)}= \frac{\frac{1}{T} \sum_{t = 1}^T Y_{i (t+1)} \mathbbm{1} \cb{W_{i t} = w, Y _{i t} = y}}{\frac{1}{T} \sum_{t = 1}^T \mathbbm{1} \cb{W_{i t} = w, Y _{i t} = y}}. 
\end{equation}
Proposition \ref{prop:estimate_LDE} establishes that $\widehat{f_i(y,w)}$ is close to the goal $\EE[\mu\p{\pi}]{f_i(y,w, Z_i)}$. 

\begin{prop}
\label{prop:estimate_LDE}
Under Assumptions \ref{assu:MDP} - \ref{assu:contraction}, assume further that there exist constant $C_{y}, C_{\pi} > 0$ such that $C_y<\EE[\mu(\pi)]{Y_{it}} < 1- C_y$ and $C_{\pi} < \pi_i < 1-C_{\pi}$, then
\[ \EE{\p{\widehat{f_i(y,w)} - \EE[\mu\p{\pi}]{f_i(y,w, Z_i) }}^2} \leq C_1 \p{\frac{1}{T} + L_n},  \]
for some constant $C_1$ not depending on $i$ or $n$. 
\end{prop}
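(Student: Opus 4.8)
The plan is to split the error into a statistical (variance) part and a bias part, and to show the first is $\oo\p{1/T}$ while the second, squared, is $\oo\p{L_n}$. Write $\widehat{f_i(y,w)} = N_T/D_T$ with $N_T = \frac{1}{T}\sum_{t=1}^T Y_{i(t+1)}\mathbbm{1}\cb{W_{it}=w, Y_{it}=y}$ and $D_T = \frac{1}{T}\sum_{t=1}^T \mathbbm{1}\cb{W_{it}=w, Y_{it}=y}$, and let $p_i = \PP{W_{it}=w, Y_{it}=y}$ and $\nu_i = \EE[\mu(\pi)]{Y_{i(t+1)}\mathbbm{1}\cb{W_{it}=w,Y_{it}=y}}$ be the stationary means of the summands, so that $\nu_i/p_i = \EE[\mu(\pi)]{f_i(y,w,Z_i)\mid W_{it}=w, Y_{it}=y}$. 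Using $(a+b)^2 \le 2a^2+2b^2$ I would bound
\[
\EE{\p{\widehat{f_i(y,w)} - \EE[\mu(\pi)]{f_i(y,w,Z_i)}}^2} \le 2\,\EE{\p{\frac{N_T}{D_T} - \frac{\nu_i}{p_i}}^2} + 2\p{\frac{\nu_i}{p_i} - \EE[\mu(\pi)]{f_i(y,w,Z_i)}}^2 .
\]

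For the statistical term, first note that because $W_{it}\sim\ber(\pi_i)$ is drawn independently of the state, $p_i = \PP{W_{it}=w}\,\PP[\mu(\pi)]{Y_{it}=y}$, which is bounded below by a constant via the assumptions $C_\pi < \pi_i < 1-C_\pi$ and $C_y < \EE[\mu(\pi)]{Y_{it}} < 1-C_y$. Proposition \ref{prop:MDP_stationary}, together with the contraction of Assumption \ref{assu:contraction}, supplies geometric ergodicity, so the ergodic averages of the bounded summands obey $\EE{\p{N_T-\nu_i}^2} = \oo\p{1/T}$ and $\EE{\p{D_T-p_i}^2} = \oo\p{1/T}$ uniformly in $i$, with any error from a non-stationary start of $Y_0$ decaying geometrically and absorbed into the $1/T$ rate. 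Since both $N_T/D_T$ and $\nu_i/p_i$ lie in $[0,1]$, on the event $\cb{D_T \ge p_i/2}$ I would expand $N_T/D_T - \nu_i/p_i = (N_T-\nu_i)/D_T + (\nu_i/p_i)(p_i-D_T)/D_T$ and bound it by a constant multiple of $\abs{N_T-\nu_i} + \abs{D_T-p_i}$; on the complementary event, whose probability is $\oo\p{1/T}$ by Chebyshev and the variance bound on $D_T$, the squared error is at most $1$. Collecting these gives $\EE{\p{N_T/D_T - \nu_i/p_i}^2} = \oo\p{1/T}$.

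For the bias term, observe that $W_{it}$ is independent of $(Y_{it}, Z_{it})$, so $\nu_i/p_i = \EE[\mu(\pi)]{g(Z_i)\mid Y_i=y}$ with $g(z) := f_i(y,w,z)$, and hence the bias equals $\mathrm{Cov}_{\mu(\pi)}\p{g(Z_i), \mathbbm{1}\cb{Y_i=y}}/\PP[\mu(\pi)]{Y_i=y}$. The denominator is bounded below by $C_y$, so it remains to show the covariance is $\oo\p{\sqrt{L_n}}$. Since $g$ is $L_n$-Lipschitz, the identity $\mathrm{Var}(g(Z))=\tfrac12\EE{(g(Z)-g(Z'))^2}$ for an independent copy $Z'$ gives $\mathrm{Var}_{\mu(\pi)}(g(Z_i)) \le L_n^2\,\mathrm{Var}_{\mu(\pi)}(Z_i)$. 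Coupling $\mu(\pi)$ to the independent law $\law(\Ys)$ through Theorem \ref{theo:mean_field2}—under which $\Zs_i=\sum_{j\in\ngb_i}\Ys_j$ is a sum of independent Bernoullis so $\mathrm{Var}(\Zs_i)\le D_n$—controls $\mathrm{Var}_{\mu(\pi)}(Z_i) = \oo\p{D_n}$. Cauchy--Schwarz then yields $\abs{\mathrm{Cov}\p{g(Z_i),\mathbbm{1}\cb{Y_i=y}}} \le \sqrt{\mathrm{Var}(g(Z_i))} \le L_n\sqrt{\mathrm{Var}(Z_i)} = \oo\p{L_n\sqrt{D_n}} = \oo\p{\sqrt{L_n}}$, the last step using $L_n D_n \le C$ from Assumption \ref{assu:contraction}. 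Squaring gives squared bias $= \oo\p{L_n}$, which combined with the statistical term completes the bound.

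The main obstacle is the bias step, specifically controlling $\mathrm{Var}_{\mu(\pi)}(Z_i)$: under the true stationary law the neighbors of $i$ are correlated, so, unlike under $\law(\Ys)$, the variance is not simply bounded by the degree. The crux is therefore to transfer the degree-order variance bound from the independent mean-field law to $\mu(\pi)$, which is exactly what the higher-moment graph-dependent Wasserstein estimate $W_{d_{E,3}}$ of Theorem \ref{theo:mean_field2} is designed to deliver; some care is needed to keep all constants uniform in $i$ and $n$ so that the final $C_1$ depends on none of them.
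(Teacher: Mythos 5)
Your proposal is correct and shares the paper's overall architecture: mean-square concentration of the time-averaged numerator and denominator at rate $1/T$ via the contraction-induced geometric ergodicity, a ratio bound, and a bias term written as $\operatorname{Cov}_{\mu(\pi)}\p{f_i(y,w,Z_i), \mathbbm{1}\cb{Y_i=y}}\big/\mathbb{P}_{\mu(\pi)}\p{Y_i=y}$ and shown to be $\oo\p{\sqrt{L_n}}$. Where you genuinely diverge is the covariance bound. You apply Cauchy--Schwarz to get $\abs{\operatorname{Cov}} \leq L_n\sqrt{\operatorname{Var}_{\mu(\pi)}(Z_i)}$ and then transfer the degree-order variance bound from the independent mean-field law, which requires a second moment of $Z_i - Z_i^{\star}$ and hence the $W_{d_{E,3}}$ estimate of Theorem \ref{theo:mean_field2} (via H\"older); you correctly identify this as the crux. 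The paper instead bounds $\abs{\operatorname{Cov}}$ by the mean absolute deviation $\mathbb{E}_{\mu(\pi)}\abs{g(Z_i)-\mathbb{E} g(Z_i)}$ (using $\abs{Y_i}\leq 1$), transfers that quantity to the independent law using only the first-moment coupling $W_{d_E}$, and bounds the deviation there by independence, reusing the estimate \eqref{eqn:corr_c_i_Y} from the proof of Theorem \ref{theo:LDE_char}; your route needs a higher-moment coupling but is arguably cleaner, and it sidesteps the step $\mathbb{E}\abs{X-\mathbb{E}X}\leq\operatorname{Var}(X)$ in that reused argument, where Cauchy--Schwarz actually gives $\sqrt{\operatorname{Var}(X)}$ (both versions still yield $\oo\p{\sqrt{L_n}}$). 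For the ratio, you split on the event $\cb{D_T \geq p_i/2}$ and pay an $\oo\p{1/T}$ probability on its complement, whereas the paper's Lemma \ref{lemma:little_bound_ratio} is arranged so that the denominator in the bound is the deterministic stationary probability, avoiding the event split; both work. The one detail you assert rather than prove is the autocovariance decay $\abs{\mathbb{E}(A_t-\alpha)(A_s-\alpha)} \leq C_0\theta^{\abs{t-s}}$ underlying the $\oo\p{1/T}$ mean-square ergodic rate, which the paper supplies explicitly through the couplings of Lemmas \ref{lemma:contraction_d_e} and \ref{lemma:markov_exp_ergodic}; this is standard for the bounded functionals at hand, so it is an omission of detail rather than a gap.
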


With the estimators for $ \EE[\mu\p{\pi}]{f_i(y,w, Z_i)}$, we can transform them into estimators for $\EE[\mu]{a_i(Z_i)}$, $\dots,$ $\EE[\mu]{d_i(Z_i)}$, and combine them to form an estimator for the long-term direct effect. We take $\hat a_i = \widehat{f_i(0,0)}$, $\hat b_i = \widehat{f_i(0,1)} - \widehat{f_i(0,0)}$, $\hat c_i = \widehat{f_i(1,0)} - \widehat{f_i(0,0)}$, $\hat d_i = \widehat{f_i(1,1)} + \widehat{f_i(0,0)} - \widehat{f_i(0,1)} - \widehat{f_i(1,0)}$, and
\begin{equation}
\hat{\tau}_{\LDE}(\gamma_1, \gamma_2) = \frac{1}{n} \sum_{i=1}^n  \frac{\hat{a}_i + \hat{b}_i \gamma_1}{1 - \hat{c}_i - \hat{d}_i\gamma_1} - \frac{1}{n} \sum_{i=1}^n  \frac{\hat{a}_i + \hat{b}_i \gamma_2}{1 - \hat{c}_i - \hat{d}_i\gamma_2}. 
\end{equation}
It then follows directly from Proposition \ref{prop:estimate_LDE} that $\hat{\tau}_{\LDE}(\gamma_1, \gamma_2) $ is consistent for $\tau_{\LDE}(\gamma_1, \gamma_2)$ when the time horizon $T \to \infty$ and $L_n \to 0$. 

\begin{coro}[Long-term direct effect estimation]
Under Assumptions \ref{assu:MDP} - \ref{assu:contraction}, assume further that there exist constant $C_{y}, C_{\pi} > 0$ such that $C_y<\EE[\mu(\pi)]{Y_{it}} < 1- C_y$ and $C_{\pi} < \pi_i < 1-C_{\pi}$, then
\[\hat{\tau}_{\LDE}(\gamma_1, \gamma_2) -  \tau_{\LDE}(\gamma_1, \gamma_2)
= \oo_p\p{\frac{1}{\sqrt{T} } + \sqrt{L_n}}. \]
\end{coro}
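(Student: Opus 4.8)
The plan is to prove the corollary in two stages: first replace the estimand $\tau_{\LDE}(\gamma_1,\gamma_2)$ by the closed-form expression in terms of the original stationary distribution $\mu(\pi)$, and then show that the plug-in estimator $\hat\tau_{\LDE}(\gamma_1,\gamma_2)$ tracks this expression at the claimed rate. For the first stage I would invoke the characterization \eqref{eqn:LDE_char2}, an immediate consequence of Theorem \ref{theo:LDE_char}, to write
\[
\tau_{\LDE}(\gamma_1,\gamma_2) = \frac1n\sum_{i=1}^n g_{\gamma_1}\p{\bar\theta_i} - \frac1n\sum_{i=1}^n g_{\gamma_2}\p{\bar\theta_i} + \oo\p{\sqrt{L_n}},
\]
where $g_\gamma(a,b,c,d) = (a + b\gamma)/(1 - c - d\gamma)$ and $\bar\theta_i = (\bar a_i,\bar b_i,\bar c_i,\bar d_i)$ collects the population targets $\bar a_i = \EE[\mu(\pi)]{a_i(Z_i)}, \ldots, \bar d_i = \EE[\mu(\pi)]{d_i(Z_i)}$. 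Since the $\oo(\sqrt{L_n})$ term is already of the allowed order, it remains to bound $\frac1n\sum_i\abs{g_{\gamma}(\hat\theta_i) - g_{\gamma}(\bar\theta_i)}$ for $\gamma\in\{\gamma_1,\gamma_2\}$, where $\hat\theta_i = (\hat a_i,\hat b_i,\hat c_i,\hat d_i)$.

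First I would propagate the second-moment bound of Proposition \ref{prop:estimate_LDE} to the coefficient estimators. Each of $\hat a_i,\hat b_i,\hat c_i,\hat d_i$ is a fixed linear combination of the four quantities $\widehat{f_i(y,w)}$, and each $\bar a_i,\dots,\bar d_i$ is the same linear combination of $\EE[\mu(\pi)]{f_i(y,w,Z_i)}$; hence by the triangle inequality $\EE{\|\hat\theta_i - \bar\theta_i\|^2} \le C_1'\p{\tfrac1T + L_n}$ with $C_1'$ independent of $i$ and $n$. Next I would record that the population denominators are bounded away from zero: by Assumption \ref{assu:bound_fi}, $\abs{c_i(z) + d_i(z)w}\le B$ for $w\in\{0,1\}$, and since $c_i(z) + d_i(z)\gamma = (1-\gamma)c_i(z) + \gamma\p{c_i(z)+d_i(z)}$ is a convex combination for $\gamma\in[0,1]$, taking expectations gives $\abs{\bar c_i + \bar d_i\gamma}\le B \le C < 1$ by Assumption \ref{assu:contraction}, so $1 - \bar c_i - \bar d_i\gamma \ge 1 - C > 0$; the same convexity shows the numerator lies in $[0,1]$. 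On the region where the denominator exceeds $(1-C)/2$ the map $g_\gamma$ is Lipschitz with a constant $L$ depending only on $C$, and the segment joining $\hat\theta_i$ to $\bar\theta_i$ stays in this region whenever both empirical denominators exceed $(1-C)/2$. On this ``good'' set $\mathcal{G}_i$ we then have $\abs{g_\gamma(\hat\theta_i) - g_\gamma(\bar\theta_i)} \le L\,\|\hat\theta_i-\bar\theta_i\|$, and Cauchy--Schwarz with the moment bound yields
\[
\EE{\frac1n\sum_{i=1}^n \|\hat\theta_i-\bar\theta_i\|} \le \sqrt{C_1'}\,\sqrt{\tfrac1T + L_n} \le \sqrt{C_1'}\,\p{\tfrac1{\sqrt T} + \sqrt{L_n}},
\]
so by Markov's inequality the good-set contribution is $\oo_p(1/\sqrt T + \sqrt{L_n})$.

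The step I expect to be the main obstacle is controlling the ``bad'' units, where an empirical denominator $1 - \hat c_i - \hat d_i\gamma$ drops below $(1-C)/2$ and $g_\gamma(\hat\theta_i)$ is no longer Lipschitz and can blow up. A naive union bound fails here: the per-unit failure probability is only $\oo(1/T + L_n)$ by Chebyshev, so $\PP{\exists i \text{ bad}}$ is of order $n(1/T+L_n)$, which need not vanish because $L_n\asymp 1/(n\rhon)$ forces $nL_n\asymp 1/\rhon \not\to 0$. Instead I would use that the target $\EE[\mu(\pi_i=\gamma,\pi_{-i})]{Y_i}$ is a probability and that each $\widehat{f_i(y,w)}\in[0,1]$ (its numerator is dominated by its denominator), so that replacing $g_\gamma(\hat\theta_i)$ by its truncation to $[0,1]$ is innocuous: it only improves the good-set bound, since $g_\gamma(\bar\theta_i)\in[0,1]$ and projection onto an interval containing $g_\gamma(\bar\theta_i)$ is non-expansive, and it makes every summand at most $1$. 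On the bad set $\mathcal{G}_i^c$ the failure of a denominator forces $\|\hat\theta_i-\bar\theta_i\|\ge\kappa$ for a constant $\kappa$ depending only on $C$, so by Chebyshev $\PP{\mathcal{G}_i^c}\le \kappa^{-2}C_1'(1/T+L_n)$, and the averaged bad contribution has expectation at most $\kappa^{-2}C_1'(1/T+L_n) = \oo(1/T+L_n)$, hence is $\oo_p(1/T + L_n) = o_p(1/\sqrt T + \sqrt{L_n})$. Combining the good- and bad-set bounds for both $\gamma_1$ and $\gamma_2$ with the $\oo(\sqrt{L_n})$ characterization error from the first stage gives $\hat\tau_{\LDE}(\gamma_1,\gamma_2) - \tau_{\LDE}(\gamma_1,\gamma_2) = \oo_p(1/\sqrt T + \sqrt{L_n})$, as claimed.
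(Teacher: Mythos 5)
Your proposal is correct and reaches the stated rate, but it routes the key step---comparing the plug-in ratio $(\hat a_i+\hat b_i\gamma)/(1-\hat c_i-\hat d_i\gamma)$ to its population counterpart---differently from the paper. The paper gives no standalone proof of this corollary (it is asserted to follow from Proposition \ref{prop:estimate_LDE} together with \eqref{eqn:LDE_char2}), but the tool it has prepared for exactly this step is Lemma \ref{lemma:little_bound_ratio}: writing $\frac{a_1}{b_1}-\frac{a_2}{b_2}=\frac{a_1}{b_1}\frac{1}{b_2}(b_1-b_2)+\frac{a_1-a_2}{b_2}$, one only needs the \emph{population} denominator $b_2=1-\bar c_i-\bar d_i\gamma\geq 1-C$ bounded below and the domination $a_1\leq c_1 b_1$, which here holds with $c_1=1$ because $\hat a_i+\hat b_i\gamma=(1-\gamma)\widehat{f_i(0,0)}+\gamma\widehat{f_i(0,1)}\leq(1-\gamma)(1-\widehat{f_i(1,0)}+\widehat{f_i(0,0)})+\gamma(1-\widehat{f_i(1,1)}+\widehat{f_i(0,1)})=1-\hat c_i-\hat d_i\gamma$. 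That one identity yields $\EE{\abs{g_\gamma(\hat\theta_i)-g_\gamma(\bar\theta_i)}}=\oo\p{\sqrt{1/T+L_n}}$ per unit with no good/bad-set split and no concern about the empirical denominator vanishing. Your route instead Lipschitz-bounds $g_\gamma$ on a region where the empirical denominator stays above $(1-C)/2$, and then patches the complement by truncating to $[0,1]$; the argument is sound, and your observation that a naive union bound over units fails (because $nL_n\not\to 0$) is a real point that the paper's one-line justification glosses over. The only blemish is that, as written, you prove the bound for a truncated estimator rather than for $\hat\tau_{\LDE}$ as defined. This is easily repaired: the same inequality $\hat a_i+\hat b_i\gamma\leq 1-\hat c_i-\hat d_i\gamma$ (with both sides nonnegative) shows the raw ratio already lies in $[0,1]$ whenever the denominator is positive, so your truncation is a no-op and your bound applies verbatim to the estimator in the corollary. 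In short, your approach buys robustness (it would survive without the domination $a_1\leq b_1$, at the price of a truncated estimator), while the paper's lemma buys a shorter argument by exploiting that domination directly.
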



\subsection{The long-term total effect}
\label{section:LTE}
In this section, we study estimation of the long-term total effect of the treatment. The question of interest is what happens in the long term to the outcomes if we simultaneously change the treatment probabilities for everyone. Again, as argued in the previous section, to quantify a long-term effect, we study how the stationary distribution depends on treatment probabilities. This is because in the long term, the distribution of outcomes will eventually converge to the stationary distribution. We are interested in
\begin{equation}
\tau_{\LTE}(\pi_1, \pi_2) = \frac{1}{n} \sum_{i = 1}^n \p{\EE[\mu(\pi_1)]{Y_i} - \EE[\mu(\pi_2)]{Y_i} }.
\end{equation}
Here we focus on the case where $\pi_2 = \pi$, i.e., we take $\pi_2$ to be the true probability of which the treatment is given in the experiment. We write $\pi_1 = \pi + \Delta \bv$, where $\Delta \in \mathbb{R}$ captures the scale of the difference between $\pi_1$ and $\pi_2$, while $\bv \in \mathbb{R}^n$ is the direction of the difference. We enforce the constraint $\Norm{\bv} = \sqrt{n}$. 
Two common choices of $\bv$ are $\bv = \mathbf{1} = (1,1,\dots, 1)^T$ and $\bv = \pi\sqrt{n/\Norm{\pi}}$. In the shopping habit example, imagine store A has been running the micro-randomized trial with different coupon sending probabilities to different people. 
Taking $\bv = \mathbf{1}$ corresponds to asking the question of what the effect is of increasing coupon sending probability by $0.2$ to everyone? The other choice of $\Norm{\bv} \propto \pi$ has a different meaning. It focuses on the effect of increasing everyone's coupon sending probability by, for instance, $30\%$ of the original probability. There are, of course, other choices of $\bv$. For example, store A may be curious about a policy, which unlike the current one, send coupons to everyone with a fixed probability 0.25. 

We will then move on to study the characterization and estimation of $\tau_{\LTE}(\pi + \Delta \bv, \pi)$. 
We have established in Section \ref{section:MDP} that the stationary distribution of MDP \eqref{eqn:MDP} is close to the fixed point of system \eqref{eqn:dynamic_system}. We start with a simpler task: analyzing the estimand under the dynamical system \eqref{eqn:dynamic_system}. 
We adopt the notation from Section \ref{section:mean_field} and let $\Ps$ be the fixed point of system \eqref{eqn:dynamic_system}. Here we write $\Ps(\pi)$ to emphasize its dependence on the treatment probability $\pi$. 
Recall that $P^*$ satisfies 
\begin{equation}
\label{eqn:p_star_pi}
\Ps_{i}(\pi) = a_i\p{\Qs_{i}(\pi)} + b_i\p{\Qs_{i}(\pi)} \pi_i +  c_i\p{\Qs_{i}(\pi)} \Ps_{i}(\pi) + d_i\p{\Qs_{i}(\pi)} \pi_i \Ps_{i}(\pi). 
\end{equation}
Here the slightly different but easier goal is:
\begin{equation}
\widetilde{\tau}_{\LTE}( \pi + \Delta \bv,  \pi) = \frac{1}{n} \sum_{i = 1}^n \p{\Ps_i(\pi + \Delta\bv) - \Ps_i(\pi) } .
\end{equation}
When $\Delta$ is small, the above is close to $\Delta/n \sum_{i = 1}^n \p{\nabla_{\pi} \Ps_i(\pi) \trans \bv}$. 

In order to evaluate $\nabla_{\pi} \Ps_i(\pi) \trans \bv$, we take a closer look at equation \eqref{eqn:p_star_pi}. 
Evaluated at $\pi + \Delta \bv$, equation \eqref{eqn:p_star_pi} becomes
\begin{equation}
\label{eqn:fix_point}
\begin{split}
 \Ps_{i}(\pi + \Delta \bv) 
&= a_i\p{\Qs_{i}(\pi +\Delta\bv)} + b_i\p{\Qs_{i}(\pi + \Delta\bv)} (\pi_i + \Delta v_i) + \\
&  \qquad  c_i\p{\Qs_{i}(\pi + \Delta\bv)} \Ps_{i}(\pi + \Delta\bv) + d_i\p{\Qs_{i}(\pi + \Delta\bv)} (\pi_i + \Delta\bv) \Ps_{i}(\pi + \Delta\bv). 
\end{split}
\end{equation}
Taking derivative with respect to $\Delta$ of both hand sides and evaluating at $\Delta = 0$, we get
\begin{equation}
\begin{split}
 \pss_{i}(\pi) 
&= b_i\p{\Qs_{i}(\pi )} v_i + d_i\p{\Qs_{i}(\pi)} \Ps_{i}(\pi) v_i + \sqb{c_i\p{\Qs_{i}(\pi)} + d_i\p{\Qs_{i}(\pi)} \pi_i }  \pss_{i}(\pi)+ \\
& \qquad \sqb{a_i'\p{\Qs_{i}(\pi)} + b_i'\p{\Qs_{i}(\pi )} \pi_i + c_i'\p{\Qs_{i}(\pi)} \Ps_{i}(\pi) + d_i'\p{\Qs_{i}(\pi)} \pi_i \Ps_i(\pi)  } \qss_{i}(\pi )\\
\end{split}
\end{equation}
where $\pss_i(\pi) = \nabla_{\pi} \Ps_i(\pi) \trans \bv$ and $\qss_i(\pi) = \nabla_{\pi} \Qs_i(\pi) \trans \bv$. Since $\Qs_i(\pi) = \sum_{j \in \ngb_i} \Ps_j(\pi)$, by linearity of derivatives, we have $\qss(\pi) = \sum_{j \in \ngb_i} \pss_j(\pi)$. Thus 
\begin{equation}
\begin{split}
 \pss_{i}(\pi) 
&= b_i\p{\Qs_{i}(\pi )}v_i  + d_i\p{\Qs_{i}(\pi)} \Ps_{i}(\pi) v_i + \sqb{c_i\p{\Qs_{i}(\pi)} + d_i\p{\Qs_{i}(\pi)} \pi_i }  \pss_{i}(\pi)+ \\
& \qquad \sqb{a_i'\p{\Qs_{i}(\pi)} + b_i'\p{\Qs_{i}(\pi )} \pi_i + c_i'\p{\Qs_{i}(\pi)} \Ps_{i}(\pi) + d_i'\p{\Qs_{i}(\pi)} \pi_i \Ps_i(\pi)}\sum_{j \in \ngb_i} \pss_j(\pi). 
\end{split}
\end{equation}
Note that this question holds for every $i$. Thus, now we have a set of linear equations for $\pss_i(\pi)$'s. Let $\pss(\pi)$ be the vector of $(\pss_1(\pi), \dots, \pss_n(\pi))$. By solving the linear system, we can write $\pss(\pi)$ in terms of $\Ps(\pi)$, the functions $a_i$, $b_i$, $c_i$, $d_i$, and their derivatives $a_i'$, $b_i'$, $c_i'$, $d_i'$. Specifically, 
\begin{equation}
\label{eqn:first_dir_matrix}
\pss(\pi) = (I - D A - W) ^{-1} \bu,  
\end{equation}
where $A$ is the adjacency matrix of the interference graph, \sloppy{$D = \operatorname{diag}\big(a_i'\p{\Qs_{i}(\pi)} + b_i'\p{\Qs_{i}(\pi )} \pi_i + c_i'\p{\Qs_{i}(\pi)} \Ps_{i}(\pi) + d_i'\p{\Qs_{i}(\pi)} \pi_i \Ps_i(\pi)\big)$, $W =  \operatorname{diag} \big( c_i\p{\Qs_{i}(\pi)} + d_i\p{\Qs_{i}(\pi)} \pi_i\big)$, and $\bu = \operatorname{vec} \big( v_i (b_i(\Qs_{i}(\pi )) + d_i\p{\Qs_{i}(\pi)} \Ps_{i}(\pi))  \big)$.} Summarizing the above findings, we have
\begin{equation}
\label{eqn:theo_LTE_char0}
\begin{split}
\widetilde{\tau}_{\LTE}( \pi + \Delta \bv,  \pi) 
&= \frac{1}{n}\sum_{i = 1}^n\p{ \Ps_i(\pi + \Delta\bv) - \Ps_i(\pi) }
\approx \frac{\Delta}{n}\sum_{i = 1}^n \p{\nabla_{\pi} \Ps_i(\pi) \trans \bv}
= \frac{\Delta}{n}\sum_{i = 1}^n \pss_i(\pi) \\
&= \frac{\Delta}{n}\mathbf{1}\trans \pss(\pi)
= \frac{\Delta}{n} \mathbf{1}\trans (I - D A - W) ^{-1} \bu. 
\end{split}
\end{equation}

Given the above characterization for $\widetilde{\tau}_{\LTE}( \pi + \Delta \bv,  \pi) $, there are two remaining questions: 
How close is the simpler goal $\widetilde{\tau}_{\LTE}(\pi + \Delta \bv,  \pi)$ to our real goal $\tau_{\LTE}( \pi + \Delta \bv,  \pi)$?
How close is $\frac{\Delta}{n} \mathbf{1}\trans (I - D A - W) ^{-1} \bu$ to our real goal $\tau_{\LTE}( \pi + \Delta \bv,  \pi)$?

\begin{theo}
\label{theo:LTE_char1}
Under Assumptions \ref{assu:MDP} - \ref{assu:contraction}, 
\begin{equation}
\label{eqn:theo_LTE_char1}
\tau_{\LTE}( \pi + \Delta \bv,  \pi) = \widetilde{\tau}_{\LTE}( \pi + \Delta \bv,  \pi) +  \oo\p{ \sqrt{L_n}}. 
\end{equation}
\end{theo}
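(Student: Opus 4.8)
The plan is to reduce the statement to a direct application of Theorem~\ref{theo:mean_field}. The only difference between $\tau_{\LTE}$ and $\widetilde{\tau}_{\LTE}$ is that the former averages expected outcomes under the MDP stationary distribution $\mu$, while the latter averages the coordinates of the mean-field fixed point $\Ps$. Since a vector $\Ys(\pi')$ with independent coordinates $\Ys_i(\pi') \sim \operatorname{Ber}(\Ps_i(\pi'))$ satisfies $\EE{\Ys_i(\pi')} = \Ps_i(\pi')$, I can rewrite
\[
\widetilde{\tau}_{\LTE}(\pi + \Delta\bv, \pi) = \frac{1}{n}\sum_{i=1}^n \p{\EE{\Ys_i(\pi+\Delta\bv)} - \EE{\Ys_i(\pi)}},
\]
so the whole problem collapses to comparing $\frac{1}{n}\sum_i \EE[\mu(\pi')]{Y_i}$ with $\frac{1}{n}\sum_i \Ps_i(\pi')$ at the two treatment vectors $\pi' = \pi + \Delta\bv$ and $\pi' = \pi$.

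The first step is to establish the single-policy estimate: for any admissible treatment vector $\pi'$,
\[
\abs{\frac{1}{n}\sum_{i=1}^n \EE[\mu(\pi')]{Y_i} - \frac{1}{n}\sum_{i=1}^n \Ps_i(\pi')} = \oo\p{\sqrt{L_n}}.
\]
The key observation is that the functional $x \mapsto \frac{1}{n}\sum_i x_i$ is $\frac{1}{n}$-Lipschitz with respect to the $\ell_1$ norm. Hence for any coupling $(X_1, X_2)$ with $\law(X_1) = \mu(\pi')$ and $\law(X_2) = \law(\Ys(\pi'))$, using that the marginal averages $\frac{1}{n}\sum_i \EE{X_{1,i}}$ and $\frac{1}{n}\sum_i \EE{X_{2,i}}$ are coupling-invariant,
\[
\abs{\frac{1}{n}\sum_i \EE[\mu(\pi')]{Y_i} - \frac{1}{n}\sum_i \Ps_i(\pi')} = \frac{1}{n}\abs{\EE{\sum_i (X_{1,i} - X_{2,i})}} \leq \frac{1}{n}\EE{\Norm{X_1 - X_2}_1}.
\]
Taking the infimum over couplings, the right-hand side equals $\frac{1}{n}W_{L_1}\p{\mu(\pi'), \law(\Ys(\pi'))}$, which Theorem~\ref{theo:mean_field} bounds by $\sqrt{L_n}\sqrt{C}/(2(1-C))$, a fixed constant multiple of $\sqrt{L_n}$.

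I would then apply this estimate twice, at $\pi' = \pi + \Delta\bv$ and at $\pi' = \pi$, and subtract the two lines, so the two $\oo(\sqrt{L_n})$ errors combine into a single one and yield \eqref{eqn:theo_LTE_char1}. One point to verify carefully is that Theorem~\ref{theo:mean_field} legitimately applies at the tilted policy $\pi + \Delta\bv$; this requires $\pi + \Delta\bv$ to be a valid probability vector, and it requires Assumptions~\ref{assu:MDP}--\ref{assu:contraction} to hold under it. The latter is immediate, since those assumptions constrain only the activation functions $f_i$, the interference graph, and the constants $B$, $L_n$, $D_n$, none of which depend on the treatment probabilities, so the mean-field bound holds uniformly over admissible $\pi'$ with the same $C$.

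The main (and essentially only) obstacle is the coupling step: one must recognize that the difference of the two averaged expectations is, after the $\frac{1}{n}$ scaling, a $1$-Lipschitz functional of the coupled pair in $\ell_1$, so that the bound on the $L_1$-Wasserstein distance from Theorem~\ref{theo:mean_field} transfers directly to a bound on the difference of means. Everything else is bookkeeping; in particular, the first-order Taylor expansion replacing $\widetilde{\tau}_{\LTE}$ by $\frac{\Delta}{n}\mathbf{1}\trans \pss(\pi)$ belongs to the subsequent characterization result and is not needed for this theorem.
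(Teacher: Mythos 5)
Your proposal is correct and matches the paper's intended argument: the paper gives no separate appendix proof for this theorem, stating only that it ``follows directly from Theorem \ref{theo:mean_field},'' and your coupling step---bounding the difference of averaged means by $\frac{1}{n}W_{L_1}\p{\mu(\pi'),\law(\Ys(\pi'))} \leq \sqrt{L_n}\sqrt{C}/(2(1-C))$ and applying it at both $\pi'=\pi+\Delta\bv$ and $\pi'=\pi$---is exactly the intended route, correctly filled in.
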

Theorem \ref{theo:LTE_char1} provides an answer to the first question. It establishes that if the Lipschitz constant $L_n \to 0$, then the difference between $\tau_{\LTE}( \pi + \Delta \bv,  \pi)$ and $\widetilde{\tau}_{\LTE}( \pi + \Delta \bv,  \pi)$ converges to 0. This result follows directly from Theorem \ref{theo:mean_field}, where we showed that the stationary distribution of MDP \eqref{eqn:MDP} is close to the fixed point of the dynamical system \eqref{eqn:dynamic_system}. 

To answer the second question, we need stronger assumptions. Specifically, in order for the approximation step $\frac{1}{n}\sum_{i = 1}^n\p{ \Ps_i(\pi + \Delta\bv) - \Ps_i(\pi) }
\approx \frac{\Delta}{n}\sum_{i = 1}^n \p{\nabla_{\pi} \Ps_i(\pi) \trans \bv}$ to be accurate, we need $\Delta$ to be small, and we need to impose smoothness assumptions on the functions $f_i$. Theorem \ref{theo:LTE_char2} establishes that when $\Delta$ does not converge to zero too quickly, then $\frac{1}{n} \mathbf{1}\trans (I - D A - W) ^{-1} \bu$ approximates the long-term total effect (scales by $1/\Delta$) well.
\begin{assu}[Smoothness]
\label{assu:smooth}
The functions $f_i$'s satisfy $\abs{f''_i(y,w,z)} \leq L_{2,n}$ for any $y \in \cb{0,1}$, $w \in \cb{0,1}$ and $z \in \RR_+$, where all derivatives  are taken with
respect to the third argument. There exists a constant $C_{L,2}$ such that $L_{2,n} D_n^2 \leq C_{L,2}$, where $D_n$ is defined in Assumption \ref{assu:largest_degree}. 
\end{assu}

\begin{theo}[Long-term total effect characterization]
\label{theo:LTE_char2}
Under Assumptions \ref{assu:MDP} - \ref{assu:smooth},
assume further that $\Delta$ varies with $n$. We write $\Delta_n$ to emphasize such dependency. Assume that there exists a constant $C_v$ such that $\max \abs{v_i} \leq C_v$. Then
\begin{equation}
\label{eqn:LTE_char1}
 \frac{\tau_{\LTE}( \pi + \Delta_n \bv,  \pi)}{\Delta_n} = \frac{1}{n} \mathbf{1}\trans (I - D A - W) ^{-1} \bu + \oo\p{\frac{\sqrt{L_n}}{\Delta_n} + \Delta_n}. 
\end{equation}
\end{theo}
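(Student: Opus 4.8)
The plan is to combine the mean-field reduction already in hand with a second-order Taylor analysis of the fixed point. By Theorem \ref{theo:LTE_char1}, $\tau_{\LTE}(\pi + \Delta_n\bv, \pi) = \widetilde{\tau}_{\LTE}(\pi + \Delta_n\bv, \pi) + \oo(\sqrt{L_n})$; dividing by $\Delta_n$ immediately produces the $\oo(\sqrt{L_n}/\Delta_n)$ piece of the claimed error, so it remains to show
\[
\frac{\widetilde{\tau}_{\LTE}(\pi + \Delta_n\bv, \pi)}{\Delta_n} = \frac{1}{n}\mathbf{1}\trans(I - DA - W)^{-1}\bu + \oo(\Delta_n).
\]
Writing $\phi_i(\Delta) := \Ps_i(\pi + \Delta\bv)$ as a function of the scalar $\Delta$, Taylor's theorem with Lagrange remainder gives $\phi_i(\Delta_n) - \phi_i(0) = \Delta_n\,\phi_i'(0) + \tfrac{1}{2}\Delta_n^2\,\phi_i''(\xi_i)$ for some $\xi_i \in [0, \Delta_n]$. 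Dividing by $\Delta_n$ and averaging over $i$, the linear term equals $\frac{1}{n}\sum_i \phi_i'(0) = \frac{1}{n}\sum_i \pss_i(\pi) = \frac{1}{n}\mathbf{1}\trans(I - DA - W)^{-1}\bu$ by the derivation leading to \eqref{eqn:first_dir_matrix}, while the remainder is $\frac{\Delta_n}{2n}\sum_i \phi_i''(\xi_i)$. The whole task thus reduces to a uniform bound $\frac{1}{n}\sum_i |\phi_i''(\xi)| \leq C$ for $\xi$ in a neighborhood of $0$.

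To obtain this I would differentiate the fixed-point relation \eqref{eqn:p_star_pi}, evaluated along $\pi + \Delta\bv$, twice in $\Delta$. The first differentiation reproduces the linear system behind \eqref{eqn:first_dir_matrix}, which in vector form reads $(I - D(\Delta)A - W(\Delta))\,\phi'(\Delta) = \bu(\Delta)$, where $\phi'$ is the vector of the $\phi_i'$ and $D(\Delta), W(\Delta), \bu(\Delta)$ are the $\Delta$-dependent versions of the quantities in \eqref{eqn:first_dir_matrix}. Since $\|D(\Delta)A + W(\Delta)\|_\infty \leq L_n D_n + B \leq C < 1$ by Assumption \ref{assu:contraction}, the inverse satisfies $\|(I - D(\Delta)A - W(\Delta))^{-1}\|_\infty \leq 1/(1 - C)$; combined with the uniform boundedness of the entries of $\bu(\Delta)$ (which follows from $\max_i|v_i| \leq C_v$, the bounds of Assumption \ref{assu:bound_fi}, and $\Ps \in [0,1]^n$) this gives $\|\phi'(\Delta)\|_\infty \leq C_1$. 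Differentiating once more and collecting terms, I expect a linear system of the same form, $(I - D(\Delta)A - W(\Delta))\,\phi''(\Delta) = g(\Delta)$, where $g$ aggregates every term the chain rule produces from differentiating $\phi_i'$ and $\frac{d}{d\Delta}\Qs_i(\pi + \Delta\bv) = \sum_{j \in \ngb_i}\phi_j'(\Delta)$. The same contraction bound then yields $\|\phi''(\Delta)\|_\infty \leq \|g(\Delta)\|_\infty/(1 - C)$, so everything hinges on bounding $\|g(\Delta)\|_\infty$ by a constant uniformly in $\Delta$.

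The hard part will be controlling $g$, and this is exactly where Assumption \ref{assu:smooth} enters. Its most dangerous contribution is the term carrying $f_i''\,\big(\sum_{j \in \ngb_i}\phi_j'\big)^2$: because $\sum_{j \in \ngb_i}\phi_j'$ can be of order $D_n$ and $|f_i''| \leq L_{2,n}$, this term scales like $L_{2,n} D_n^2 \,\|\phi'\|_\infty^2$, which is kept bounded precisely by the hypothesis $L_{2,n} D_n^2 \leq C_{L,2}$. Every other term in $g$ pairs at most one factor of $\sum_{j\in\ngb_i}\phi_j'$ with a first derivative of one of $a_i, \dots, d_i$ (bounded by $L_n$) or with $\phi_i'$ or $v_i$ (bounded), and is therefore of order at most $L_n D_n \|\phi'\|_\infty \leq C\|\phi'\|_\infty$; I would verify this term-by-term using the decomposition \eqref{eqn:f_decomposition} together with Assumptions \ref{assu:bound_fi} and \ref{assu:largest_degree}. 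Once $\|g(\Delta)\|_\infty \leq C_2$ is established uniformly over $\Delta \in [0, \Delta_n]$, the estimate closes: $\frac{1}{n}\sum_i |\phi_i''(\xi_i)| \leq \|\phi''\|_\infty \leq C_2/(1 - C)$, so the Taylor remainder is $\oo(\Delta_n)$, and adding the $\oo(\sqrt{L_n}/\Delta_n)$ contribution from Theorem \ref{theo:LTE_char1} yields \eqref{eqn:LTE_char1}.
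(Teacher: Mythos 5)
Your proposal is correct and follows essentially the same route as the paper's proof: reduce to the dynamical system via Theorem \ref{theo:LTE_char1}, Taylor-expand $\Ps_i(\pi+\Delta\bv)$ in $\Delta$ with Lagrange remainder, and bound $\|\phi'(\Delta)\|_\infty$ and then $\|\phi''(\Delta)\|_\infty$ uniformly by writing each as the solution of a diagonally dominant linear system with matrix $I - D(\Delta)A - W(\Delta)$, using $L_nD_n + B \leq C < 1$ for invertibility and $L_{2,n}D_n^2 \leq C_{L,2}$ to control the $f_i''\big(\sum_{j\in\ngb_i}\phi_j'\big)^2$ term. The term-by-term bookkeeping you defer is exactly what the paper carries out, with no additional ideas needed.
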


Next, we study estimation of the quantity $ \frac{1}{n} \mathbf{1}\trans (I - D A - W) ^{-1} \bu$. The estimating strategy is fairly straight forward: We estimate $D, W$ and $\bu$ separately. This essentially requires two things: estimating the function values $a_i(\Qs(\pi)), \dots, d_i(\Qs(\pi))$, and estimating the derivatives $a_i'(\Qs(\pi)) \dots d_i'(\Qs(\pi))$. The first task can be easily done using the estimators from Section \ref{section:LTD}. Specifically, we have consistent estimators for $\EE[\mu(\pi)]{a_i(Z_{it})}, \dots, \EE[\mu(\pi)]{d_i(Z_{it})},$ when $T$ is large and $\sqrt{L_n}$ is small. Theorem \ref{theo:mean_field2} further ensures that $\EE[\mu(\pi)]{a_i(Z_{it})}$ is close to $a_i(\Qs_{i}(\pi))$. For the second task, we need to estimate the derivative of the functions $a_i, \dots, d_i$ evaluated at $\Qs_i(\pi)$. Specifically, the quantity of interest is the diagonal of the $D$ matrix, whose $i$-th element is
\begin{equation}
\begin{split}
&a_i'\p{\Qs_{i}(\pi)} + b_i'\p{\Qs_{i}(\pi )} \pi_i + c_i'\p{\Qs_{i}(\pi)} \Ps_{i}(\pi) + d_i'\p{\Qs_{i}(\pi)} \pi_i \Ps_i(\pi)\\
&\qquad \qquad=  f_i'(\Ps_i(\pi), \pi_i, \Qs_i(\pi))\\
 &\qquad\qquad = (1 -  \pi_i )( 1 - \Ps_i(\pi)) f_i'(0, 0, \Qs_i(\pi)) + \pi_i(1 - \Ps_i(\pi)) f_i'(0, 1, \Qs_i(\pi)) +\\
&\qquad\qquad \qquad \qquad \Ps_i(\pi) (1 - \pi_i)  f_i'(1, 0, \Qs_i(\pi)) +  \Ps_i(\pi) \pi_i f_i'(1, 1, \Qs_i(\pi)). 
\end{split}
\end{equation}
Thus the goal becomes estimating $f_i'(y, w, \Qs_i(\pi))$ for $y, w \in \cb{0,1}$. 

To estimate the derivative, one natural idea is to run a regression of $Y_{i(t+1)}$ on $Z_{it}$ conditioning on $Y_{it} = y$ and $W_{it} = w$ using data from different time points. Intuitively, the method works because of the following reason. Imagine the ideal case where $f_i$ is a linear function in $z$, i.e., $f_i(y, w, z) = \beta_i(y, w) z  + \beta_{0,i}(y,w)$. Then $f'_i(y,w,z) = \beta_i(y, w)$. Since $\EE{Y_{i(t+1)} \mid Y_t, W_t} = f_i(Y_{it}, W_{it}, Z_{it})$, the regression would give a good estimator for $\beta(y, w)$. Now there are two problems remaining. The first is that $f_i$ is not linear, and the second is that the data points are not i.i.d., and thus standard results from linear regression do not apply directly. The first problem can be solved since the mean-field results give that $Z_{it}$ would be close to sum of independent random variables. Together with the fact that the Markov chain converges in distribution to the stationary distribution, one can establish that $Z_{it}$ won't be far from $ \Qs_i(\pi)$. In other words, the fluctuation in $Z_{it}$ is small, so as long as the function $f_i$ is smooth enough, locally around $\Qs_i(\pi)$, $f_i$ can be treated as a linear function. The second problem can be solved using the ergodic property of the Markov chain.

We make the above arguments precise below. Let $\delta_T$ be a sequence of positive numbers such that $\delta_T \to 0$ as $T \to \infty$.
For $y, w \in \cb{0,1}$, let
\begin{equation}
\widehat{f_i'(y,w)}_{\delta_T} = \frac{\sum_{t = 1}^T \mathbbm{1}\cb{Y_{it} = y, W_{it} = w} \p{Y_{i(t+1)} - \bar{Y_i}(y,w)} \p{Z_{it} - \bar{Z_i}(y,w) }}{D_n T \delta_T \vee \sum_{t = 1}^T \mathbbm{1} \cb{Y_t = y, W_t = w}  \p{ Z_{it} -  \bar{Z_i}(y,w) }^2},\footnote{The term $D_n T \delta$ is used to guarantee that the denominator will not be too close to zero.}
\end{equation}
where 
\begin{equation}
\bar{Y_i}(y,w) = \frac{\sum_{t = 1}^T \mathbbm{1}\cb{Y_{it} = y, W_{it} = w} Y_{i(t+1)} }{\sum_{t = 1}^T \mathbbm{1} \cb{Y_t = y, W_t = w} }, \textnormal{ and}
\end{equation}
\begin{equation}
\bar{Z_i}(y,w) = \frac{\sum_{t = 1}^T \mathbbm{1}\cb{Y_{it} = y, W_{it} = w} Z_{i(t+1)} }{\sum_{t = 1}^T \mathbbm{1} \cb{Y_t = y, W_t = w} }. 
\end{equation}
Theorem \ref{theo:LTE_estimation} shows that $\widehat{f_i'(y,w)}_{\delta_T} $ is a good estimator of $f_i'(y,w, \Qs_i(\pi))$ when we observe a long enough trajectory of treatments and outcomes. 

\begin{theo}
\label{theo:LTE_estimation}
Under Assumptions \ref{assu:MDP} - \ref{assu:smooth}. Suppose that there exist constants $C_{\pi}, C_y, C_P, C_{f}, C_l > 0$ such that $\pi_i \in (C_{\pi}, 1- C_{\pi})$, $\EE[\mu(\pi)]{Y_i} \in (C_y, 1 - C_y)$, $\Ps_i(\pi) \in (C_P, 1 - C_P)$, $\abs{\ngb_i} \geq C_l D_n$, and $f_i \in (C_{f}, 1- C_{f})$. Then
\[\widehat{f_i'(y,w)}_{\delta_T}   = f_i'(y,w, \Qs_i(\pi)) + \epsilon_i,\]
where $\EE{\epsilon_i^2} =  \oo\p{1/(T \delta_T^2) + D_n^{-3}}$. 
\end{theo}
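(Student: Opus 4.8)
The plan is to read $\widehat{f_i'(y,w)}_{\delta_T}$ as the least-squares slope from regressing $Y_{i(t+1)}$ on $Z_{it}$ across the time points with $(Y_{it},W_{it})=(y,w)$, and to show this slope recovers $f_i'(y,w,\Qs_i(\pi))$ up to a nonlinearity bias and a martingale noise term. The one structural input from Assumption \ref{assu:MDP} is that, on $\cb{Y_{it}=y,W_{it}=w}$, the innovation $\eta_t:=Y_{i(t+1)}-f_i(y,w,Z_{it})$ is a bounded martingale difference relative to the filtration $\mathcal{F}_t$ generated by the history through time $t$ (so $Z_{it}$ is $\mathcal{F}_t$-measurable), with conditional variance at most $1/4$. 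First I would Taylor-expand in the third argument about $\Qs_i(\pi)$, writing $f_i(y,w,z)=f_i(y,w,\Qs_i(\pi))+f_i'(y,w,\Qs_i(\pi))\p{z-\Qs_i(\pi)}+R(z)$ with $\abs{R(z)}\le\tfrac12 L_{2,n}\p{z-\Qs_i(\pi)}^2$ by Assumption \ref{assu:smooth}. Substituting $Y_{i(t+1)}=f_i(y,w,Z_{it})+\eta_t$ into the numerator and using that centering by the empirical means $\bar{Y_i},\bar{Z_i}$ annihilates constants yields, whenever the floor is inactive, the exact identity
\begin{equation*}
\widehat{f_i'(y,w)}_{\delta_T}-f_i'(y,w,\Qs_i(\pi))=\frac{S_{Rz}+S_{\eta z}}{S_{zz}},
\end{equation*}
where $S_{zz}=\sum_t\mathbbm{1}\cb{Y_{it}=y,W_{it}=w}\p{Z_{it}-\bar{Z_i}}^2$, $S_{Rz}=\sum_t\mathbbm{1}\cb{Y_{it}=y,W_{it}=w}R(Z_{it})\p{Z_{it}-\bar{Z_i}}$, and $S_{\eta z}=\sum_t\mathbbm{1}\cb{Y_{it}=y,W_{it}=w}\eta_t\p{Z_{it}-\bar{Z_i}}$. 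The proof then reduces to controlling the random denominator and the two numerator terms.

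For the denominator I would use ergodicity (Proposition \ref{prop:MDP_stationary}) to replace time averages by expectations under $\mu(\pi)$, and the mean-field couplings of Theorems \ref{theo:mean_field}--\ref{theo:mean_field2} to compute moments of $Z_{it}$ through the independent surrogate $\Ys$, for which $\Zs_i=\sum_{j\in\ngb_i}\Ys_j$ has mean $\Qs_i(\pi)$ and, using $\abs{\ngb_i}\ge C_l D_n$ and $\Ps_i(\pi)\in\p{C_P,1-C_P}$, variance of order $D_n$. Combined with $\pi_i,\EE[\mu(\pi)]{Y_i}$ bounded away from $0$ and $1$ (so the number of qualifying time points is of order $T$), this gives $\EE{S_{zz}}$ of order $TD_n$. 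I expect a Chebyshev-type (second-moment) bound to then show that the bad event $\mathcal{G}^c:=\cb{S_{zz}<\tfrac12\EE{S_{zz}}}$ has probability $\oo\p{1/T}$, once $\delta_T\to0$ forces $D_nT\delta_T<\tfrac12\EE{S_{zz}}$. On $\mathcal{G}^c$ I would argue crudely: the floor guarantees the denominator is at least $D_nT\delta_T$ while the numerator is at most $TD_n$ in absolute value, so $\abs{\epsilon_i}=\oo\p{1/\delta_T}$ and hence $\EE{\epsilon_i^2\mathbbm{1}_{\mathcal{G}^c}}=\oo\p{1/(T\delta_T^2)}$, which produces the first term of the claimed rate.

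On the good event the denominator is of order $TD_n$, so I can treat the two numerator terms by dividing by this order. The noise term is a martingale transform: after replacing $\bar{Z_i}$ by $\Qs_i(\pi)$, the increments of $\sum_t\mathbbm{1}\cb{\cdots}\eta_t\p{Z_{it}-\Qs_i(\pi)}$ are orthogonal, so its second moment equals $\sum_t\EE{\mathbbm{1}\cb{\cdots}\eta_t^2\p{Z_{it}-\Qs_i(\pi)}^2}=\oo\p{TD_n}$, using $\EE{\p{Z_{it}-\Qs_i(\pi)}^2}=\oo\p{D_n}$ (which follows from the $W_{d_{E,3}}$ bound by Jensen); dividing by $S_{zz}^2$ of order $T^2D_n^2$ gives a contribution of order $1/(TD_n)$, dominated by $1/(T\delta_T^2)$. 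The bias term is where the third-moment bound is essential: bounding $\abs{R(Z_{it})}\le\tfrac12 L_{2,n}\p{Z_{it}-\Qs_i(\pi)}^2$ and $\abs{Z_{it}-\bar{Z_i}}\le\abs{Z_{it}-\Qs_i(\pi)}+\abs{\bar{Z_i}-\Qs_i(\pi)}$ reduces $\abs{S_{Rz}}$ to third absolute moments $\EE{\abs{Z_{it}-\Qs_i(\pi)}^3}=\oo\p{D_n^{3/2}}$, exactly the quantity delivered by Theorem \ref{theo:mean_field2} (through the surrogate $\Ys$, whose $\Zs_i$ is a sum of independent bounded variables with third absolute moment $\oo\p{D_n^{3/2}}$). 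Together with $L_{2,n}\le C_{L,2}/D_n^2$ from Assumption \ref{assu:smooth} and $S_{zz}$ of order $TD_n$, this yields $\EE{\p{S_{Rz}/S_{zz}}^2\mathbbm{1}_{\mathcal{G}}}=\oo\p{L_{2,n}^2 D_n}=\oo\p{D_n^{-3}}$, the second term of the rate.

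The hard part will be the denominator analysis, because it is the single place where the temporal dependence of the Markov chain and the spatial mean-field structure must be handled at once: I need geometric ergodicity to turn the time sum $S_{zz}$ into a stationary quantity despite dependence across $t$, together with sharp-enough moment bounds on $Z_{it}$ from the graph-dependent coupling, both to pin $\EE{S_{zz}}$ to order $TD_n$ and to make $\PP{\mathcal{G}^c}$ as small as $1/T$. A closely related subtlety, and the reason the argument is arranged as above, is that only up to third moments of $Z_{it}-\Qs_i(\pi)$ are available from Theorem \ref{theo:mean_field2}; every estimate---most delicately the Taylor-remainder bound---must therefore be organized to avoid fourth and higher moments of $Z_{it}$ under $\mu(\pi)$, which is precisely why I would control $S_{Rz}$ through $\EE{\abs{Z_{it}-\Qs_i(\pi)}^3}$ rather than through a Cauchy--Schwarz step that would call for a fourth moment.
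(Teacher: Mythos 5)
Your regression decomposition is sound and the identity $\widehat{f_i'(y,w)}_{\delta_T}-f_i'(y,w,\Qs_i(\pi))=(S_{Rz}+S_{\eta z})/S_{zz}$ (when the floor is inactive) is correct, but two of the three estimates you need do not go through with the moment information the paper's machinery supplies, and both gaps trace to the same structural choice: you keep the Taylor remainder and the denominator random all the way to the end, which forces you to take \emph{second} moments of quantities that are already quadratic or cubic in $Z_{it}-\Qs_i(\pi)$. First, the bad event. To get $\PP{S_{zz}<\tfrac12\EE{S_{zz}}}=\oo\p{1/T}$ by Chebyshev you need $\Var{S_{zz}}=\oo\p{\EE{S_{zz}}^2/T}=\oo\p{TD_n^2}$; but the only available control on the summands $\mathbbm{1}\cb{\cdot}\p{Z_{it}-\bar{Z_i}}^2$ is the almost-sure range $D_n^2$ together with geometric mixing (this is exactly the paper's bound \eqref{eqn:Z_bar2_bound1}, of order $D_n^4/T$ for the time average), which gives $\Var{S_{zz}}=\oo\p{TD_n^4}$ and hence only $\PP{\mathcal{G}^c}=\oo\p{D_n^2/T}$. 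Your bad-event contribution is then $\oo\p{D_n^2/(T\delta_T^2)}$, a factor $D_n^2$ worse than claimed. Second, the bias term. You correctly get $\EE{\abs{S_{Rz}}}=\oo\p{L_{2,n}TD_n^{3/2}}$ from third moments, but the theorem bounds $\EE{\epsilon_i^2}$, so on the good event you need $\EE{S_{Rz}^2}=\oo\p{L_{2,n}^2T^2D_n^3}$; squaring produces cross terms $\EE{\abs{Z_{it}-\Qs_i(\pi)}^3\abs{Z_{is}-\Qs_i(\pi)}^3}$, i.e.\ sixth moments, which Theorem \ref{theo:mean_field2} does not deliver (interpolating with the crude bound $\abs{Z_i-\Zs_i}\le D_n$ gives only $\oo\p{D_n^{4.5}}$ for the sixth moment and hence $\oo\p{L_{2,n}^2D_n^{2.5}}=\oo\p{D_n^{-1.5}}$ for the squared bias). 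So the design constraint you yourself state --- organize everything to avoid moments beyond the third --- is violated by your own second-moment computations.

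The paper's proof avoids both problems by reversing the order of the two approximations. It first shows, term by term via the ergodic lemma and the $d_E$ coupling, that the empirical ratio is within $L_2$-error $\oo\p{1/(T\delta_T^2)}$ of the \emph{population} ratio $\Cov[\mu(\pi)]{Y_{i(t+1)},Z_{it}\mid Y_{it}=y,W_{it}=w}\big/\Var[\mu(\pi)]{Z_{it}\mid Y_{it}=y,W_{it}=w}$; there the deterministic floor $D_nT\delta_T$ serves as an almost-sure lower bound on the denominator (the $c_0$ in Lemma \ref{lemma:little_bound_ratio}), so no good/bad event split and no concentration of $S_{zz}$ is needed, and the Lipschitz bound $\Cov{f_i,Z}\le L_n\Var{Z}$ supplies the $c_1$ that keeps the rate at $1/(T\delta_T^2)$. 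Only then does it Taylor-expand --- around the conditional mean $\EE[\mu(\pi)]{Z_{it}\mid Y_{it}=y}$ rather than $\Qs_i(\pi)$ --- inside a purely deterministic quantity, so the third-moment bound $\oo\p{D_n^{3/2}}$ yields a deterministic error $\oo\p{L_{2,n}D_n^{1/2}}$ whose square is $\oo\p{D_n^{-3}}$ at no extra cost. To salvage your route you would need a fourth- and sixth-moment analogue of the $d_{E,3}$ contraction; the simpler fix is to adopt the paper's ordering and pass to the stationary ratio before linearizing.
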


Finally, we combine our estimators for $a_i(\Qs(\pi)), \dots, d_i(\Qs(\pi))$, and $a_i'(\Qs(\pi)) \dots d_i'(\Qs(\pi))$ and produce a final estimator for $\frac{1}{n} \mathbf{1}\trans (I - D A - W) ^{-1} \bu$. 
Let $\eta_n \in (0,1)$ and $\kappa_n \in (0,1)$ be two sequences of numbers such that $\eta_n \to 0$ and $\kappa_n \to 0$ as $n \to \infty$. 
Let $\hat{P}_i = \sum_{i = 1}^T Y_{it}/T$, 
$\hat{D}_i = (1 -  \pi_i )( 1 - \Ph_i(\pi)) \widehat{f_i'(0,0)}  + \pi_i(1 - \Ph_i(\pi)) \widehat{f_i'(0,1)}  + \Ph_i(\pi) (1 - \pi_i)  \widehat{f_i'(1,0)}  +  \Ph_i(\pi) \pi_i \widehat{f_i'(1,1)}$,
$\Dh = \operatorname{diag}(\hat{D}_i)$, 
$\hat{\omega}_{i, \kappa_n} =  \min(1 - \kappa_n, \hat{c}_i + \hat{d}_i \pi_i)$,\footnote{Similar to the role of $\delta_T$ in the previous theorem,  $\kappa_n$ and $\eta_n$ are used to guarantee that the matrix $(M_{\eta_n} - \Dh A - \Wh_{\kappa_n})$ is invertible and that its smallest eigenvalue (in absolute value) is not too close to zero.}
$\Wh_{\kappa_n} = \operatorname{diag}(\hat{\omega}_{i, \kappa_n})$, 
$\uh = \operatorname{vec} \big( \hat{b}_i v_i + \hat{d}_i \Ph_{i}(\pi) v_i \big)$, and $M_{\eta_n} = \operatorname{diag}\p{\max\p{1, \hat{d}_i D_n/(1- \eta_n) + \hat{\omega}_{i, \kappa_n}} }$. Combining the above definitions, we define
\begin{equation}
\hat{\tau}_{\LTE}( \pi + \Delta \bv,  \pi)
= \frac{\Delta}{n} \mathbf{1}\trans (M_{\eta_n} - \Dh A - \Wh_{\kappa_n}) ^{-1} \uh. 
\end{equation}
Theorem \ref{theo:LTE_estimation2} shows that when the functions $f_i$'s are smooth enough, we observe long enough trajectory, and the difference between treatment probabilities is of the right scale, then our estimator $\hat{\tau}_{\LTE}( \pi + \Delta \bv,  \pi)$ is close to the long-term total effect under large sample asymptotics. 

\begin{theo}[Long-term total effect estimation]
\label{theo:LTE_estimation2}
Under the conditions of Theorem \ref{theo:LTE_estimation}, 
assume further that $\Delta$ varies with $n$. We write $\Delta_n$ to emphasize such dependency. Furthermore,  assume that there exists a constant $C_v$ such that $\max \abs{v_i} \leq C_v$. Then
\begin{equation}
\frac{\hat{\tau}_{\LTE}( \pi + \Delta_n \bv,  \pi)}{\Delta_n} = \frac{\tau_{\LTE}( \pi + \Delta_n \bv,  \pi)}{\Delta_n} + \oo_p\p{\frac{1}{\Delta_n \sqrt{D_n}} + \Delta_n + \frac{1}{\eta_n \kappa_n} \p{\frac{D_n}{\sqrt{T}\delta_t} +\frac{1}{\sqrt{D_n}}}}. 
\end{equation}
\end{theo}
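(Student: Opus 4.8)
The plan is to reduce the estimand to a matrix--vector form, control the population matrix $M := I - DA - W$ and vector $\bu$, and then bound the estimation error by comparing $M,\bu$ with the plug-in quantities $\hat M := M_{\eta_n} - \Dh A - \Wh_{\kappa_n}$ and $\uh$. First, Theorem~\ref{theo:LTE_char2} already gives
\[
\frac{\tau_{\LTE}(\pi+\Delta_n\bv,\pi)}{\Delta_n} = \frac1n\mathbf 1\trans M^{-1}\bu + \oo\!\left(\frac{\sqrt{L_n}}{\Delta_n}+\Delta_n\right),
\]
and Assumption~\ref{assu:contraction} gives $\sqrt{L_n}\le\sqrt{C}/\sqrt{D_n}=\oo(1/\sqrt{D_n})$, so this remainder is of the advertised form $\oo(1/(\Delta_n\sqrt{D_n})+\Delta_n)$. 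Since by construction $\hat\tau_{\LTE}(\pi+\Delta_n\bv,\pi)/\Delta_n=\frac1n\mathbf 1\trans\hat M^{-1}\uh$, it remains to show $\frac1n\mathbf 1\trans(\hat M^{-1}\uh-M^{-1}\bu)=\oo_p\bigl(\tfrac{1}{\eta_n\kappa_n}(\tfrac{D_n}{\sqrt T\delta_T}+\tfrac1{\sqrt{D_n}})\bigr)$.

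Next I would record the two matrix facts that drive the argument. On one hand, $M=I-DA-W$ is strictly diagonally dominant by both rows and columns with margin $1-C$: each row (and column) of $DA$ has off-diagonal mass at most $L_nD_n\le C-B$, while the diagonal of $I-W$ is at least $1-B$; hence $\|M^{-1}\|_\infty,\|M^{-1}\|_1\le 1/(1-C)$ and therefore $\|M^{-1}\|_2\le 1/(1-C)$. On the other hand, the regularizers $\eta_n,\kappa_n$ (through $M_{\eta_n}$ and the truncation $\hat\omega_{i,\kappa_n}\le 1-\kappa_n$) are designed precisely so that $\hat M$ is strictly row-diagonally dominant with margin at least $\eta_n\kappa_n$: the off-diagonal mass of row $i$ of $\Dh A$ is $|\hat D_i|\deg(i)\le|\hat D_i|D_n$, and in the branch where $(M_{\eta_n})_{ii}=1$ one has diagonal $1-\hat\omega_{i,\kappa_n}\ge\kappa_n$ together with $|\hat D_i|D_n\le(1-\eta_n)(1-\hat\omega_{i,\kappa_n})$, whereas in the branch where the maximum equals $\hat D_i D_n/(1-\eta_n)+\hat\omega_{i,\kappa_n}$ the diagonal net of $\hat\omega_{i,\kappa_n}$ is $|\hat D_i|D_n/(1-\eta_n)$; either way the diagonal exceeds the off-diagonal mass by at least $\eta_n\kappa_n$. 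The standard diagonal-dominance bound then yields $\|\hat M^{-1}\|_\infty\le 1/(\eta_n\kappa_n)$ (the quantitative form of the footnote's claim that the smallest eigenvalue of $\hat M$ stays away from $0$), and in particular $\hat M$ is invertible.

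Then I would split the error and, crucially, always keep the well-behaved factor $h\trans:=\frac1n\mathbf 1\trans M^{-1}$ on the left, noting $\|h\|_2\le\frac1n\|M^{-1}\|_2\,\|\mathbf 1\|_2\le 1/(\sqrt n(1-C))$. Writing
\[
\frac1n\mathbf 1\trans(\hat M^{-1}\uh-M^{-1}\bu)=\frac1n\mathbf 1\trans\hat M^{-1}(\uh-\bu)+h\trans(M-\hat M)\hat M^{-1}\bu,
\]
where the second piece uses $\hat M^{-1}-M^{-1}=M^{-1}(M-\hat M)\hat M^{-1}$, I set $s:=\hat M^{-1}\bu$ and use that $\bu$ has uniformly bounded entries (here $\max_i|v_i|\le C_v$ enters), so $\|s\|_\infty\le\|\hat M^{-1}\|_\infty\|\bu\|_\infty=\oo(1)/(\eta_n\kappa_n)$. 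Since $(M-\hat M)_{ij}=(\hat D_i-D_{ii})A_{ij}$ off the diagonal and $|(As)_i|\le D_n\|s\|_\infty$,
\[
\|(M-\hat M)s\|_2^2\le 2\max_i(As)_i^2\sum_i(\hat D_i-D_{ii})^2+2\|s\|_\infty^2\sum_i(M-\hat M)_{ii}^2 .
\]
The sums are \emph{averages}, so per-unit second moments suffice with no union bound: Theorem~\ref{theo:LTE_estimation} gives $\EE{(\hat D_i-D_{ii})^2}=\oo(1/(T\delta_T^2)+D_n^{-3})$, whence $\sum_i(\hat D_i-D_{ii})^2=\oo_p(n(1/(T\delta_T^2)+D_n^{-3}))$, while Proposition~\ref{prop:estimate_LDE} with Theorem~\ref{theo:mean_field2} controls the diagonal error $(M-\hat M)_{ii}=\hat\omega_{i,\kappa_n}-W_{ii}$ (off the vanishing-probability event where the regularization activates). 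Collecting, $\|(M-\hat M)s\|_2=\oo_p(\tfrac{\sqrt n}{\eta_n\kappa_n}(\tfrac{D_n}{\sqrt T\delta_T}+\tfrac1{\sqrt{D_n}}))$, and multiplying by $\|h\|_2=\oo(1/\sqrt n)$ gives exactly the target rate, the $\sqrt n$ cancelling. For the first piece I substitute $\hat M^{-1}=M^{-1}+M^{-1}(M-\hat M)\hat M^{-1}$: its leading term $h\trans(\uh-\bu)$ is $\oo_p(1/\sqrt T+1/\sqrt{D_n})$ by the same averaging (using $\|\uh-\bu\|_2=\oo_p(\sqrt n(1/\sqrt T+1/\sqrt{D_n}))$), which is dominated, and the remaining term is of second order.

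The step I expect to be the main obstacle is obtaining a usable operator-norm handle on $\hat M^{-1}$ and, relatedly, pushing the per-unit error bounds of Theorem~\ref{theo:LTE_estimation} through the inverse without paying a spurious $\max_i$ (a naive $\|\cdot\|_\infty$ bound on $\uh-\bu$ or on $\hat D-D$ would cost a factor $\sqrt n$ through the union bound). The decomposition above is engineered so that every error term is routed through an \emph{average} $\sum_i(\cdot)^2$ paired against the $\ell_2$-small vector $h$, so that second moments suffice and the $\sqrt n$'s cancel; the genuinely second-order remainder in the first piece, together with the verification that the regularization margin is $\ge\eta_n\kappa_n$ in every branch of $M_{\eta_n}$ while the event of its activation has negligible probability, are the places where the most care is required.
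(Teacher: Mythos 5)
Your proposal is correct and follows essentially the same route as the paper: the same resolvent decomposition $\hat M^{-1}\uh - M^{-1}\bu = \hat M^{-1}(\uh-\bu) + M^{-1}(M-\hat M)\hat M^{-1}\bu$ (the paper's Lemma~\ref{lemma:matrix_inverse}), the same diagonal-dominance bounds $\|M^{-1}\|_\infty \le 1/(1-C)$ and $\|\hat M^{-1}\|_\infty \le 1/(\eta_n\kappa_n)$ (Lemmas~\ref{lemma:l_infty_norm} and~\ref{lemma:l_infty_norm2}), and the same per-unit error inputs from Theorem~\ref{theo:LTE_estimation}, Proposition~\ref{prop:estimate_LDE} and Lemma~\ref{lemma:distance_between_mu_ps}. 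The only difference is cosmetic: you aggregate the entrywise errors of $M-\hat M$ in $\ell_2$ against $\|h\|_2 = \oo(1/\sqrt n)$, while the paper pairs $\|h\|_\infty = \oo(1/n)$ with the expected $\ell_1$ mass $\sum_{i,j}\EE{|(M-\hat M)_{ij}|}$; both give the stated rate.
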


\section{Discussion}

The problem of treatment effect estimation in systems where agents may interfere
with each other over a network has received a considerable amount of attention in
the literature \citep{aronow2017estimating,athey2018exact,leung2020treatment,li2020random}.
Available work on this problem, however, is focused on a static setting with a single
time-step, whereas several important applications---especially mobile health
applications---rely on micro-randomized trials where treatment dynamics play a
key role \citep{liao2016sample,klasnja2015microrandomized}.
In this paper we took a first step towards studying network interference
in a dynamic (Markovian) setting by characterizing a number of natural causal estimands,
and by proposing consistent estimators for them.

The core technical results in this paper were all built around a mean-field
approximation result for the stationary distribution of the Markov process
we used to model micro-randomized trials with network interference
(Theorems \ref{theo:mean_field} and \ref{theo:mean_field2}). Specifically, we showed
that, at stationarity, our system looks as though it were governed by independent
draws from Bernoulli random variables parametrized by a fixed point $P^{\star}$ of the system
\eqref{eqn:dynamic_system}.

We note that the fixed point $P^{\star}$ of the system \eqref{eqn:dynamic_system} is closely related
to the Nash equilibrium of some network games \citep{ballester2006s,parise2019graphon}.
Consider an $n$-agent network game, where $P_i$ represents the strategy of the $i$-th agent.
If we define the payoff function of agent $i$ to be $U_i = -(P_i - f_i(P_i, \pi_i, Q_i))^2$
where, like in our setting, $Q_i = \sum_{j \in \ngb_i} P_j$ sums the strategies of the $i$-th unit's neighbors,
the fixed point $P^{\star}$ is a coordinate-wise critical point of the payoff function.
\citet{parise2019graphon} then discuss settings where, given regularity conditions analogous to our
Assumption \ref{assu:contraction}, $P^{\star}$ is the unique Nash equilibrium of the network game.
Interestingly, in network games it is natural to assume that agents reason directly about each
others' strategies (rather then stochastic realized outcomes), and so the system \eqref{eqn:dynamic_system}
arises directly from the model; in contrast, in our setting, the system \eqref{eqn:dynamic_system} is only
useful in a large-sample (mean-field) approximation to the natural Markov process that arises from modeling problem primitives.
Further investigation of connections between dynamic treatment effect estimation
under network interference and network games may lead to new insights relevant to both
problem settings.


\ifEC
\bibliographystyle{ACM-Reference-Format}
\bibliography{references}

\else

\bibliographystyle{plainnat}
\bibliography{references}


\begin{thebibliography}{35}


\ifx \showCODEN    \undefined \def \showCODEN     #1{\unskip}     \fi
\ifx \showDOI      \undefined \def \showDOI       #1{#1}\fi
\ifx \showISBNx    \undefined \def \showISBNx     #1{\unskip}     \fi
\ifx \showISBNxiii \undefined \def \showISBNxiii  #1{\unskip}     \fi
\ifx \showISSN     \undefined \def \showISSN      #1{\unskip}     \fi
\ifx \showLCCN     \undefined \def \showLCCN      #1{\unskip}     \fi
\ifx \shownote     \undefined \def \shownote      #1{#1}          \fi
\ifx \showarticletitle \undefined \def \showarticletitle #1{#1}   \fi
\ifx \showURL      \undefined \def \showURL       {\relax}        \fi
\providecommand\bibfield[2]{#2}
\providecommand\bibinfo[2]{#2}
\providecommand\natexlab[1]{#1}
\providecommand\showeprint[2][]{arXiv:#2}

\bibitem[\protect\citeauthoryear{Antos, Szepesv{\'a}ri, and Munos}{Antos
  et~al\mbox{.}}{2008}]%
        {antos2008learning}
\bibfield{author}{\bibinfo{person}{Andr{\'a}s Antos}, \bibinfo{person}{Csaba
  Szepesv{\'a}ri}, {and} \bibinfo{person}{R{\'e}mi Munos}.}
  \bibinfo{year}{2008}\natexlab{}.
\newblock \showarticletitle{Learning near-optimal policies with
  Bellman-residual minimization based fitted policy iteration and a single
  sample path}.
\newblock \bibinfo{journal}{\emph{Machine Learning}} \bibinfo{volume}{71},
  \bibinfo{number}{1} (\bibinfo{year}{2008}), \bibinfo{pages}{89--129}.
\newblock


\bibitem[\protect\citeauthoryear{Aral and Nicolaides}{Aral and
  Nicolaides}{2017}]%
        {aral2017exercise}
\bibfield{author}{\bibinfo{person}{Sinan Aral} {and} \bibinfo{person}{Christos
  Nicolaides}.} \bibinfo{year}{2017}\natexlab{}.
\newblock \showarticletitle{Exercise contagion in a global social network}.
\newblock \bibinfo{journal}{\emph{Nature Communications}} \bibinfo{volume}{8},
  \bibinfo{number}{1} (\bibinfo{year}{2017}), \bibinfo{pages}{1--8}.
\newblock


\bibitem[\protect\citeauthoryear{Aral and Walker}{Aral and Walker}{2011}]%
        {aral2011creating}
\bibfield{author}{\bibinfo{person}{Sinan Aral} {and} \bibinfo{person}{Dylan
  Walker}.} \bibinfo{year}{2011}\natexlab{}.
\newblock \showarticletitle{Creating social contagion through viral product
  design: A randomized trial of peer influence in networks}.
\newblock \bibinfo{journal}{\emph{Management Science}} \bibinfo{volume}{57},
  \bibinfo{number}{9} (\bibinfo{year}{2011}), \bibinfo{pages}{1623--1639}.
\newblock


\bibitem[\protect\citeauthoryear{Aronow and Samii}{Aronow and Samii}{2017}]%
        {aronow2017estimating}
\bibfield{author}{\bibinfo{person}{Peter~M Aronow} {and} \bibinfo{person}{Cyrus
  Samii}.} \bibinfo{year}{2017}\natexlab{}.
\newblock \showarticletitle{Estimating average causal effects under general
  interference, with application to a social network experiment}.
\newblock \bibinfo{journal}{\emph{The Annals of Applied Statistics}}
  \bibinfo{volume}{11}, \bibinfo{number}{4} (\bibinfo{year}{2017}),
  \bibinfo{pages}{1912--1947}.
\newblock


\bibitem[\protect\citeauthoryear{Athey, Eckles, and Imbens}{Athey
  et~al\mbox{.}}{2018}]%
        {athey2018exact}
\bibfield{author}{\bibinfo{person}{Susan Athey}, \bibinfo{person}{Dean Eckles},
  {and} \bibinfo{person}{Guido~W Imbens}.} \bibinfo{year}{2018}\natexlab{}.
\newblock \showarticletitle{Exact p-values for network interference}.
\newblock \bibinfo{journal}{\emph{J. Amer. Statist. Assoc.}}
  \bibinfo{volume}{113}, \bibinfo{number}{521} (\bibinfo{year}{2018}),
  \bibinfo{pages}{230--240}.
\newblock


\bibitem[\protect\citeauthoryear{Ballester, Calv{\'o}-Armengol, and
  Zenou}{Ballester et~al\mbox{.}}{2006}]%
        {ballester2006s}
\bibfield{author}{\bibinfo{person}{Coralio Ballester}, \bibinfo{person}{Antoni
  Calv{\'o}-Armengol}, {and} \bibinfo{person}{Yves Zenou}.}
  \bibinfo{year}{2006}\natexlab{}.
\newblock \showarticletitle{Who's who in networks. Wanted: The key player}.
\newblock \bibinfo{journal}{\emph{Econometrica}} \bibinfo{volume}{74},
  \bibinfo{number}{5} (\bibinfo{year}{2006}), \bibinfo{pages}{1403--1417}.
\newblock


\bibitem[\protect\citeauthoryear{Basse, Feller, and Toulis}{Basse
  et~al\mbox{.}}{2019}]%
        {basse2019randomization}
\bibfield{author}{\bibinfo{person}{Guillaume~W Basse}, \bibinfo{person}{Avi
  Feller}, {and} \bibinfo{person}{Panos Toulis}.}
  \bibinfo{year}{2019}\natexlab{}.
\newblock \showarticletitle{Randomization tests of causal effects under
  interference}.
\newblock \bibinfo{journal}{\emph{Biometrika}} \bibinfo{volume}{106},
  \bibinfo{number}{2} (\bibinfo{year}{2019}), \bibinfo{pages}{487--494}.
\newblock


\bibitem[\protect\citeauthoryear{Battalio, Conroy, Dempsey, Liao, Menictas,
  Murphy, Nahum-Shani, Qian, Kumar, and Spring}{Battalio et~al\mbox{.}}{2021}]%
        {battalio2021sense2stop}
\bibfield{author}{\bibinfo{person}{Samuel~L Battalio}, \bibinfo{person}{David~E
  Conroy}, \bibinfo{person}{Walter Dempsey}, \bibinfo{person}{Peng Liao},
  \bibinfo{person}{Marianne Menictas}, \bibinfo{person}{Susan Murphy},
  \bibinfo{person}{Inbal Nahum-Shani}, \bibinfo{person}{Tianchen Qian},
  \bibinfo{person}{Santosh Kumar}, {and} \bibinfo{person}{Bonnie Spring}.}
  \bibinfo{year}{2021}\natexlab{}.
\newblock \showarticletitle{Sense2Stop: a micro-randomized trial using wearable
  sensors to optimize a just-in-time-adaptive stress management intervention
  for smoking relapse prevention}.
\newblock \bibinfo{journal}{\emph{Contemporary Clinical Trials}}
  \bibinfo{volume}{109} (\bibinfo{year}{2021}), \bibinfo{pages}{106534}.
\newblock


\bibitem[\protect\citeauthoryear{Bramoull{\'e} and Kranton}{Bramoull{\'e} and
  Kranton}{2007}]%
        {bramoulle2007public}
\bibfield{author}{\bibinfo{person}{Yann Bramoull{\'e}} {and}
  \bibinfo{person}{Rachel Kranton}.} \bibinfo{year}{2007}\natexlab{}.
\newblock \showarticletitle{Public goods in networks}.
\newblock \bibinfo{journal}{\emph{Journal of Economic theory}}
  \bibinfo{volume}{135}, \bibinfo{number}{1} (\bibinfo{year}{2007}),
  \bibinfo{pages}{478--494}.
\newblock


\bibitem[\protect\citeauthoryear{Bramoull{\'e}, Kranton, and
  D'amours}{Bramoull{\'e} et~al\mbox{.}}{2014}]%
        {bramoulle2014strategic}
\bibfield{author}{\bibinfo{person}{Yann Bramoull{\'e}}, \bibinfo{person}{Rachel
  Kranton}, {and} \bibinfo{person}{Martin D'amours}.}
  \bibinfo{year}{2014}\natexlab{}.
\newblock \showarticletitle{Strategic interaction and networks}.
\newblock \bibinfo{journal}{\emph{American Economic Review}}
  \bibinfo{volume}{104}, \bibinfo{number}{3} (\bibinfo{year}{2014}),
  \bibinfo{pages}{898--930}.
\newblock


\bibitem[\protect\citeauthoryear{Dempsey, Liao, Kumar, and Murphy}{Dempsey
  et~al\mbox{.}}{2020}]%
        {dempsey2020stratified}
\bibfield{author}{\bibinfo{person}{Walter Dempsey}, \bibinfo{person}{Peng
  Liao}, \bibinfo{person}{Santosh Kumar}, {and} \bibinfo{person}{Susan~A
  Murphy}.} \bibinfo{year}{2020}\natexlab{}.
\newblock \showarticletitle{The stratified micro-randomized trial design:
  sample size considerations for testing nested causal effects of time-varying
  treatments}.
\newblock \bibinfo{journal}{\emph{The annals of applied statistics}}
  \bibinfo{volume}{14}, \bibinfo{number}{2} (\bibinfo{year}{2020}),
  \bibinfo{pages}{661}.
\newblock


\bibitem[\protect\citeauthoryear{Eckles, Kizilcec, and Bakshy}{Eckles
  et~al\mbox{.}}{2016}]%
        {eckles2016estimating}
\bibfield{author}{\bibinfo{person}{Dean Eckles}, \bibinfo{person}{Ren{\'e}~F
  Kizilcec}, {and} \bibinfo{person}{Eytan Bakshy}.}
  \bibinfo{year}{2016}\natexlab{}.
\newblock \showarticletitle{Estimating peer effects in networks with peer
  encouragement designs}.
\newblock \bibinfo{journal}{\emph{Proceedings of the National Academy of
  Sciences}} \bibinfo{volume}{113}, \bibinfo{number}{27}
  (\bibinfo{year}{2016}), \bibinfo{pages}{7316--7322}.
\newblock


\bibitem[\protect\citeauthoryear{Galeotti, Golub, and Goyal}{Galeotti
  et~al\mbox{.}}{2020}]%
        {galeotti2020targeting}
\bibfield{author}{\bibinfo{person}{Andrea Galeotti}, \bibinfo{person}{Benjamin
  Golub}, {and} \bibinfo{person}{Sanjeev Goyal}.}
  \bibinfo{year}{2020}\natexlab{}.
\newblock \showarticletitle{Targeting interventions in networks}.
\newblock \bibinfo{journal}{\emph{Econometrica}} \bibinfo{volume}{88},
  \bibinfo{number}{6} (\bibinfo{year}{2020}), \bibinfo{pages}{2445--2471}.
\newblock


\bibitem[\protect\citeauthoryear{Galeotti, Goyal, Jackson, Vega-Redondo, and
  Yariv}{Galeotti et~al\mbox{.}}{2010}]%
        {galeotti2010network}
\bibfield{author}{\bibinfo{person}{Andrea Galeotti}, \bibinfo{person}{Sanjeev
  Goyal}, \bibinfo{person}{Matthew~O Jackson}, \bibinfo{person}{Fernando
  Vega-Redondo}, {and} \bibinfo{person}{Leeat Yariv}.}
  \bibinfo{year}{2010}\natexlab{}.
\newblock \showarticletitle{Network games}.
\newblock \bibinfo{journal}{\emph{The review of economic studies}}
  \bibinfo{volume}{77}, \bibinfo{number}{1} (\bibinfo{year}{2010}),
  \bibinfo{pages}{218--244}.
\newblock


\bibitem[\protect\citeauthoryear{Hern\'an and Robins}{Hern\'an and
  Robins}{2020}]%
        {hernan2020whatif}
\bibfield{author}{\bibinfo{person}{Miguel~A Hern\'an} {and}
  \bibinfo{person}{James~M Robins}.} \bibinfo{year}{2020}\natexlab{}.
\newblock \bibinfo{booktitle}{\emph{Causal Inference: What If}}.
\newblock \bibinfo{publisher}{Chapman \& Hall/CRC}, \bibinfo{address}{Boca
  Raton}.
\newblock


\bibitem[\protect\citeauthoryear{Hu, Li, and Wager}{Hu et~al\mbox{.}}{2021}]%
        {hu2021average}
\bibfield{author}{\bibinfo{person}{Yuchen Hu}, \bibinfo{person}{Shuangning Li},
  {and} \bibinfo{person}{Stefan Wager}.} \bibinfo{year}{2021}\natexlab{}.
\newblock \showarticletitle{Average Direct and Indirect Causal Effects under
  Interference}.
\newblock \bibinfo{journal}{\emph{Biometrika}}  \bibinfo{volume}{forthcoming}
  (\bibinfo{year}{2021}).
\newblock


\bibitem[\protect\citeauthoryear{Hu and Wager}{Hu and Wager}{2021}]%
        {hu2021offpolicy}
\bibfield{author}{\bibinfo{person}{Yuchen Hu} {and} \bibinfo{person}{Stefan
  Wager}.} \bibinfo{year}{2021}\natexlab{}.
\newblock \showarticletitle{Off-Policy Evaluation in Partially Observed Markov
  Decision Processes}.
\newblock \bibinfo{journal}{\emph{arXiv preprint 2110.12343}}
  (\bibinfo{year}{2021}).
\newblock


\bibitem[\protect\citeauthoryear{Hudgens and Halloran}{Hudgens and
  Halloran}{2008}]%
        {hudgens2008toward}
\bibfield{author}{\bibinfo{person}{Michael~G Hudgens} {and}
  \bibinfo{person}{M~Elizabeth Halloran}.} \bibinfo{year}{2008}\natexlab{}.
\newblock \showarticletitle{Toward causal inference with interference}.
\newblock \bibinfo{journal}{\emph{J. Amer. Statist. Assoc.}}
  \bibinfo{volume}{103}, \bibinfo{number}{482} (\bibinfo{year}{2008}),
  \bibinfo{pages}{832--842}.
\newblock


\bibitem[\protect\citeauthoryear{Jackson and Zenou}{Jackson and Zenou}{2014}]%
        {jackson2014games}
\bibfield{author}{\bibinfo{person}{Matthew~O Jackson} {and}
  \bibinfo{person}{Yves Zenou}.} \bibinfo{year}{2014}\natexlab{}.
\newblock \showarticletitle{Games on networks}.
\newblock \bibinfo{journal}{\emph{Handbook of game theory}}
  \bibinfo{volume}{4} (\bibinfo{year}{2014}).
\newblock


\bibitem[\protect\citeauthoryear{Johari, Li, Liskovich, and Weintraub}{Johari
  et~al\mbox{.}}{2022}]%
        {johari2022experimental}
\bibfield{author}{\bibinfo{person}{Ramesh Johari}, \bibinfo{person}{Hannah Li},
  \bibinfo{person}{Inessa Liskovich}, {and} \bibinfo{person}{Gabriel~Y
  Weintraub}.} \bibinfo{year}{2022}\natexlab{}.
\newblock \showarticletitle{Experimental design in two-sided platforms: An
  analysis of bias}.
\newblock \bibinfo{journal}{\emph{Management Science}}
  \bibinfo{number}{forthcoming} (\bibinfo{year}{2022}).
\newblock


\bibitem[\protect\citeauthoryear{Kallus and Uehara}{Kallus and Uehara}{2020}]%
        {kallus2020double}
\bibfield{author}{\bibinfo{person}{Nathan Kallus} {and}
  \bibinfo{person}{Masatoshi Uehara}.} \bibinfo{year}{2020}\natexlab{}.
\newblock \showarticletitle{Double Reinforcement Learning for Efficient
  Off-Policy Evaluation in Markov Decision Processes}.
\newblock \bibinfo{journal}{\emph{J. Mach. Learn. Res.}}  \bibinfo{volume}{21}
  (\bibinfo{year}{2020}), \bibinfo{pages}{167--1}.
\newblock


\bibitem[\protect\citeauthoryear{Klasnja, Hekler, Shiffman, Boruvka, Almirall,
  Tewari, and Murphy}{Klasnja et~al\mbox{.}}{2015}]%
        {klasnja2015microrandomized}
\bibfield{author}{\bibinfo{person}{Predrag Klasnja}, \bibinfo{person}{Eric~B
  Hekler}, \bibinfo{person}{Saul Shiffman}, \bibinfo{person}{Audrey Boruvka},
  \bibinfo{person}{Daniel Almirall}, \bibinfo{person}{Ambuj Tewari}, {and}
  \bibinfo{person}{Susan~A Murphy}.} \bibinfo{year}{2015}\natexlab{}.
\newblock \showarticletitle{Microrandomized trials: An experimental design for
  developing just-in-time adaptive interventions.}
\newblock \bibinfo{journal}{\emph{Health Psychology}} \bibinfo{volume}{34},
  \bibinfo{number}{S} (\bibinfo{year}{2015}), \bibinfo{pages}{1220}.
\newblock


\bibitem[\protect\citeauthoryear{Klasnja, Smith, Seewald, Lee, Hall, Luers,
  Hekler, and Murphy}{Klasnja et~al\mbox{.}}{2019}]%
        {klasnja2019efficacy}
\bibfield{author}{\bibinfo{person}{Predrag Klasnja}, \bibinfo{person}{Shawna
  Smith}, \bibinfo{person}{Nicholas~J Seewald}, \bibinfo{person}{Andy Lee},
  \bibinfo{person}{Kelly Hall}, \bibinfo{person}{Brook Luers},
  \bibinfo{person}{Eric~B Hekler}, {and} \bibinfo{person}{Susan~A Murphy}.}
  \bibinfo{year}{2019}\natexlab{}.
\newblock \showarticletitle{Efficacy of contextually tailored suggestions for
  physical activity: A micro-randomized optimization trial of HeartSteps}.
\newblock \bibinfo{journal}{\emph{Annals of Behavioral Medicine}}
  \bibinfo{volume}{53}, \bibinfo{number}{6} (\bibinfo{year}{2019}),
  \bibinfo{pages}{573--582}.
\newblock


\bibitem[\protect\citeauthoryear{Leung}{Leung}{2020}]%
        {leung2020treatment}
\bibfield{author}{\bibinfo{person}{Michael~P Leung}.}
  \bibinfo{year}{2020}\natexlab{}.
\newblock \showarticletitle{Treatment and spillover effects under network
  interference}.
\newblock \bibinfo{journal}{\emph{Review of Economics and Statistics}}
  \bibinfo{volume}{102}, \bibinfo{number}{2} (\bibinfo{year}{2020}),
  \bibinfo{pages}{368--380}.
\newblock


\bibitem[\protect\citeauthoryear{Li and Wager}{Li and Wager}{2020}]%
        {li2020random}
\bibfield{author}{\bibinfo{person}{Shuangning Li} {and} \bibinfo{person}{Stefan
  Wager}.} \bibinfo{year}{2020}\natexlab{}.
\newblock \showarticletitle{Random graph asymptotics for treatment effect
  estimation under network interference}.
\newblock \bibinfo{journal}{\emph{arXiv preprint arXiv:2007.13302}}
  (\bibinfo{year}{2020}).
\newblock


\bibitem[\protect\citeauthoryear{Liao, Klasnja, and Murphy}{Liao
  et~al\mbox{.}}{2021}]%
        {liao2021off}
\bibfield{author}{\bibinfo{person}{Peng Liao}, \bibinfo{person}{Predrag
  Klasnja}, {and} \bibinfo{person}{Susan Murphy}.}
  \bibinfo{year}{2021}\natexlab{}.
\newblock \showarticletitle{Off-policy estimation of long-term average outcomes
  with applications to mobile health}.
\newblock \bibinfo{journal}{\emph{J. Amer. Statist. Assoc.}}
  \bibinfo{volume}{116}, \bibinfo{number}{533} (\bibinfo{year}{2021}),
  \bibinfo{pages}{382--391}.
\newblock


\bibitem[\protect\citeauthoryear{Liao, Klasnja, Tewari, and Murphy}{Liao
  et~al\mbox{.}}{2016}]%
        {liao2016sample}
\bibfield{author}{\bibinfo{person}{Peng Liao}, \bibinfo{person}{Predrag
  Klasnja}, \bibinfo{person}{Ambuj Tewari}, {and} \bibinfo{person}{Susan~A
  Murphy}.} \bibinfo{year}{2016}\natexlab{}.
\newblock \showarticletitle{Sample size calculations for micro-randomized
  trials in mHealth}.
\newblock \bibinfo{journal}{\emph{Statistics in medicine}}
  \bibinfo{volume}{35}, \bibinfo{number}{12} (\bibinfo{year}{2016}),
  \bibinfo{pages}{1944--1971}.
\newblock


\bibitem[\protect\citeauthoryear{Luckett, Laber, Kahkoska, Maahs, Mayer-Davis,
  and Kosorok}{Luckett et~al\mbox{.}}{2019}]%
        {luckett2019estimating}
\bibfield{author}{\bibinfo{person}{Daniel~J Luckett}, \bibinfo{person}{Eric~B
  Laber}, \bibinfo{person}{Anna~R Kahkoska}, \bibinfo{person}{David~M Maahs},
  \bibinfo{person}{Elizabeth Mayer-Davis}, {and} \bibinfo{person}{Michael~R
  Kosorok}.} \bibinfo{year}{2019}\natexlab{}.
\newblock \showarticletitle{Estimating dynamic treatment regimes in mobile
  health using v-learning}.
\newblock \bibinfo{journal}{\emph{J. Amer. Statist. Assoc.}}
  (\bibinfo{year}{2019}).
\newblock


\bibitem[\protect\citeauthoryear{Parise and Ozdaglar}{Parise and
  Ozdaglar}{2019}]%
        {parise2019graphon}
\bibfield{author}{\bibinfo{person}{Francesca Parise} {and}
  \bibinfo{person}{Asuman Ozdaglar}.} \bibinfo{year}{2019}\natexlab{}.
\newblock \showarticletitle{Graphon games}. In
  \bibinfo{booktitle}{\emph{Proceedings of the 2019 ACM Conference on Economics
  and Computation}}. \bibinfo{pages}{457--458}.
\newblock


\bibitem[\protect\citeauthoryear{Qian, Yoo, Klasnja, Almirall, and Murphy}{Qian
  et~al\mbox{.}}{2021}]%
        {qian2021estimating}
\bibfield{author}{\bibinfo{person}{Tianchen Qian}, \bibinfo{person}{Hyesun
  Yoo}, \bibinfo{person}{Predrag Klasnja}, \bibinfo{person}{Daniel Almirall},
  {and} \bibinfo{person}{Susan~A Murphy}.} \bibinfo{year}{2021}\natexlab{}.
\newblock \showarticletitle{Estimating time-varying causal excursion effects in
  mobile health with binary outcomes}.
\newblock \bibinfo{journal}{\emph{Biometrika}} \bibinfo{volume}{108},
  \bibinfo{number}{3} (\bibinfo{year}{2021}), \bibinfo{pages}{507--527}.
\newblock


\bibitem[\protect\citeauthoryear{Robins}{Robins}{1986}]%
        {robins1986new}
\bibfield{author}{\bibinfo{person}{James Robins}.}
  \bibinfo{year}{1986}\natexlab{}.
\newblock \showarticletitle{A new approach to causal inference in mortality
  studies with a sustained exposure period---application to control of the
  healthy worker survivor effect}.
\newblock \bibinfo{journal}{\emph{Mathematical modelling}} \bibinfo{volume}{7},
  \bibinfo{number}{9-12} (\bibinfo{year}{1986}), \bibinfo{pages}{1393--1512}.
\newblock


\bibitem[\protect\citeauthoryear{Russell and Norvig}{Russell and
  Norvig}{2010}]%
        {russel2010artificial}
\bibfield{author}{\bibinfo{person}{Stuart Russell} {and} \bibinfo{person}{Peter
  Norvig}.} \bibinfo{year}{2010}\natexlab{}.
\newblock \bibinfo{booktitle}{\emph{Artificial Intelligence: A Modern Approach}
  (\bibinfo{edition}{3rd} ed.)}.
\newblock \bibinfo{publisher}{Prentice Hall}.
\newblock


\bibitem[\protect\citeauthoryear{S{\"a}vje, Aronow, and Hudgens}{S{\"a}vje
  et~al\mbox{.}}{2021}]%
        {savje2021average}
\bibfield{author}{\bibinfo{person}{Fredrik S{\"a}vje}, \bibinfo{person}{Peter~M
  Aronow}, {and} \bibinfo{person}{Michael~G Hudgens}.}
  \bibinfo{year}{2021}\natexlab{}.
\newblock \showarticletitle{Average treatment effects in the presence of
  unknown interference}.
\newblock \bibinfo{journal}{\emph{The Annals of Statistics}}
  \bibinfo{volume}{49}, \bibinfo{number}{2} (\bibinfo{year}{2021}),
  \bibinfo{pages}{673--701}.
\newblock


\bibitem[\protect\citeauthoryear{Wager and Xu}{Wager and Xu}{2021}]%
        {wager2021experimenting}
\bibfield{author}{\bibinfo{person}{Stefan Wager} {and} \bibinfo{person}{Kuang
  Xu}.} \bibinfo{year}{2021}\natexlab{}.
\newblock \showarticletitle{Experimenting in equilibrium}.
\newblock \bibinfo{journal}{\emph{Management Science}} \bibinfo{volume}{67},
  \bibinfo{number}{11} (\bibinfo{year}{2021}), \bibinfo{pages}{6694--6715}.
\newblock


\bibitem[\protect\citeauthoryear{Walton, Collins, Klasnja, Nahum-Shani, Rabbi,
  Walton, and Murphy}{Walton et~al\mbox{.}}{2020}]%
        {walton2020micro}
\bibfield{author}{\bibinfo{person}{Ashley~E Walton}, \bibinfo{person}{Linda~M
  Collins}, \bibinfo{person}{Predrag Klasnja}, \bibinfo{person}{Inbal
  Nahum-Shani}, \bibinfo{person}{Mashfiqui Rabbi}, \bibinfo{person}{Maureen~A
  Walton}, {and} \bibinfo{person}{Susan~A Murphy}.}
  \bibinfo{year}{2020}\natexlab{}.
\newblock \showarticletitle{The Micro-Randomized Trial for Developing Digital
  Interventions: Experimental Design Considerations}.
\newblock \bibinfo{journal}{\emph{arXiv preprint arXiv:2005.05880}}
  (\bibinfo{year}{2020}).
\newblock


\end{thebibliography}

\fi

\newpage
\begin{appendix}
\section{Proofs}

\subsection{Some Lemmas}

\begin{lemm}[$L_1$ Contraction]
\label{lemma:L1_contraction}
Consider two processes $X_t$ and $Y_t$ satisfying Assumptions \ref{assu:MDP} and \ref{assu:Bern}. Assume that at each time $t$, they share the same treatment vector $W_t$ and random seed, i.e., conditional on $X_t$ and $Y_t$, there exists $U_{it} \sim \operatorname{Unif}[0,1]$ independently, such that $Y_{i(t+1)} = 1$ if $U_{it} \leq f_i(Y_{it}, W_{it}, Z_{it})$ and $X_{i(t+1)} = 1$ if $U_{it} \leq f_i(X_{it}, W_{it}, V_{it})$, where $V_{it} = \sum_{j \in \mathcal{N}_i} X_{jt}$ and $Z_{it} = \sum_{j \in \mathcal{N}_i} Y_{jt}$. Under Assumptions \ref{assu:bound_fi} - \ref{assu:contraction}, we have 
\begin{equation}
\EE{ \Norm{X_{t+1} - Y_{t+1}}_1} \leq C \EE{\Norm{X_t - Y_t}},
\end{equation}
where $C$ is the contraction constant in Assumption \ref{assu:contraction}. 
Furthermore, 
\begin{equation}
\label{eqn:inequality_contraction_WL1}
W_{L_1}(\law(X_{t+1}), \law(Y_{t+1})) \leq C W_{L_1}(\law(X_t), \law(Y_t)). 
\end{equation}
\end{lemm}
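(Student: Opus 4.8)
The plan is to exploit the shared-seed coupling directly. Since $X_{i(t+1)}$ and $Y_{i(t+1)}$ are both obtained by thresholding the \emph{same} uniform $U_{it}$, they disagree precisely when $U_{it}$ lands between the two activation probabilities $f_i(X_{it}, W_{it}, V_{it})$ and $f_i(Y_{it}, W_{it}, Z_{it})$. Because $U_{it}$ is uniform on $[0,1]$ and both outputs are binary, this yields the exact conditional identity
\[
\EE{\abs{X_{i(t+1)} - Y_{i(t+1)}} \mid X_t, Y_t} = \abs{f_i(X_{it}, W_{it}, V_{it}) - f_i(Y_{it}, W_{it}, Z_{it})}.
\]
All of the remaining work is to bound the right-hand side.

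First I would split the difference of activation functions by changing one argument at a time, writing
\[
f_i(X_{it}, W_{it}, V_{it}) - f_i(Y_{it}, W_{it}, Z_{it}) = \underbrace{f_i(X_{it}, W_{it}, V_{it}) - f_i(Y_{it}, W_{it}, V_{it})}_{\text{change in } y} + \underbrace{f_i(Y_{it}, W_{it}, V_{it}) - f_i(Y_{it}, W_{it}, Z_{it})}_{\text{change in } z}.
\]
For the first piece, the decomposition \eqref{eqn:f_decomposition} gives $f_i(1,w,z) - f_i(0,w,z) = c_i(z) + d_i(z)w$, so by Assumption \ref{assu:bound_fi} it is bounded in absolute value by $B\abs{X_{it} - Y_{it}}$. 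For the second piece, Lipschitzness in the third argument (Assumption \ref{assu:bound_fi}) bounds it by $L_n \abs{V_{it} - Z_{it}} \leq L_n \sum_{j \in \mathcal{N}_i} \abs{X_{jt} - Y_{jt}}$.

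Summing the resulting bound over $i = 1, \dots, n$ and exchanging the order of the double sum $\sum_i \sum_{j \in \mathcal{N}_i}$, so that each coordinate $\abs{X_{jt} - Y_{jt}}$ is counted once per neighbor of $j$, the neighbor term becomes $L_n \sum_j \abs{\mathcal{N}_j}\, \abs{X_{jt} - Y_{jt}} \leq L_n D_n \Norm{X_t - Y_t}_1$ by Assumption \ref{assu:largest_degree}. Combining with the individual term gives $\EE{\Norm{X_{t+1} - Y_{t+1}}_1 \mid X_t, Y_t} \leq (B + L_n D_n)\, \Norm{X_t - Y_t}_1$, and Assumption \ref{assu:contraction} replaces $B + L_n D_n$ by $C < 1$; taking expectations proves the first inequality. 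I expect this reindexing—applying the degree bound to the \emph{summed-over} vertex $j$ rather than $i$, which relies on the symmetry of the edge set ($j \in \mathcal{N}_i \iff i \in \mathcal{N}_j$)—to be the one place where care is needed.

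Finally, for the Wasserstein statement I would note that the conditional bound above holds for \emph{any} joint law of $(X_t, Y_t)$, since its derivation never used the marginal dynamics of either chain. Thus I take a coupling of $\law(X_t)$ and $\law(Y_t)$ attaining $W_{L_1}(\law(X_t), \law(Y_t))$ (the state space $\cb{0,1}^n$ is finite, so the optimum is attained), evolve it one step under the shared seed—which produces a valid coupling of $\law(X_{t+1})$ and $\law(Y_{t+1})$—and apply the first inequality to obtain $W_{L_1}(\law(X_{t+1}), \law(Y_{t+1})) \leq \EE{\Norm{X_{t+1} - Y_{t+1}}_1} \leq C\, \EE{\Norm{X_t - Y_t}_1} = C\, W_{L_1}(\law(X_t), \law(Y_t))$, which is the claimed contraction.
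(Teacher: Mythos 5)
Your proposal is correct and follows essentially the same route as the paper's proof: the shared-seed coupling gives $\EE{\abs{X_{i(t+1)}-Y_{i(t+1)}}} = \EE{\abs{f_i(X_{it},W_{it},V_{it}) - f_i(Y_{it},W_{it},Z_{it})}}$, which is bounded by $B\abs{X_{it}-Y_{it}} + L_n\abs{V_{it}-Z_{it}}$, and the double sum over neighbors is reindexed via edge symmetry to pull out $D_n$, with the Wasserstein claim following from an optimal coupling of the time-$t$ marginals. Your write-up is if anything slightly more explicit than the paper's (spelling out the thresholding identity and the one-argument-at-a-time decomposition), but there is no substantive difference.
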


\begin{proof}
The term $\EE{ \Norm{X_{t+1} - Y_{t+1}}_1}$ can be rewritten into
\[
\begin{split}
\EE{ \Norm{X_{t+1} - Y_{t+1}}_1} 
&= \sum_{i = 1}^n \EE{ \abs{X_{i(t+1)} - Y_{i(t+1)}}}\\
&= \sum_{i = 1}^n \EE{ \abs{f_i(X_{it}, W_{it}, V_{it}) - f_i(Y_{it}, W_{it}, Z_{it})}}\\
& \leq L_n \sum_{i = 1}^n\EE{\abs{V_{it} - Z_{it}}} + B \EE{\abs{X_{it} - Y_{it}}},
\end{split}
\]
where $V_{it} = \sum_{j \in \mathcal{N}_i} X_{jt}$ and $Z_{it} = \sum_{j \in \mathcal{N}_i} Y_{jt}$. Note that the term $\sum_i \abs{V_{it} - Z_{it}}$ can be further decomposed into 
\[
\begin{split}
\sum_{i = 1}^n \abs{V_{it} - Z_{it}} &= \sum_{i = 1}^n \sum_{j \in \mathcal{N}_i} \abs{X_{jt} - Y_{jt}} 
= \sum_{j = 1}^n  \sum_{i \in \mathcal{N}_j} \abs{X_{jt} - Y_{jt}}\\
&= \sum_{j = 1}^n \abs{\mathcal{N}_j}  \abs{X_{jt} - Y_{jt}} \leq D_n \Norm{X_t - Y_t}_1. 
\end{split}
\]
Therefore,
\[\EE{ \Norm{X_{t+1} - Y_{t+1}}_1}  \leq (L_n D_n + B) \Norm{X_t - Y_t}_1 \leq C\EE{\Norm{X_t - Y_t}}. \]

The inequality \eqref{eqn:inequality_contraction_WL1} can be shown easily by taking the coupling of $X_t$ and $Y_t$ such that $\EE{\Norm{X_t - Y_t}_1} = W_{L_1}(\law(X_t), \law(Y_t))$. 
\end{proof}

\begin{lemm}[$d_E$ Contraction]
\label{lemma:contraction_d_e}
Under the conditions of Lemma \ref{lemma:L1_contraction}, we have 
\begin{equation}
d_E(X_{t+1}, Y_{t+1})\leq C d_E(X_t , Y_t),
\end{equation}
where $C$ is the contraction constant in Assumption \ref{assu:contraction}. 
Furthermore, 
\begin{equation}
\label{eqn:inequality_contraction_Wde}
W_{d_E}(\law(X_{t+1}), \law(Y_{t+1})) \leq C W_{d_E}(\law(X_t), \law(Y_t)). 
\end{equation}
\end{lemm}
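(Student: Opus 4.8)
The plan is to mirror the proof of Lemma~\ref{lemma:L1_contraction}, using the same synchronous coupling (shared treatments $W_t$ and shared uniform seeds $U_{it}$), but tracking the graph-weighted increments rather than the global $\ell_1$ norm. First I would record the coordinatewise bound that the coupling produces: since $X_{i(t+1)}$ and $Y_{i(t+1)}$ fire the same seed $U_{it}$ against the thresholds $f_i(X_{it}, W_{it}, V_{it})$ and $f_i(Y_{it}, W_{it}, Z_{it})$, we have $\mathbb{E}[|X_{i(t+1)} - Y_{i(t+1)}| \mid X_t, Y_t, W_t] = |f_i(X_{it}, W_{it}, V_{it}) - f_i(Y_{it}, W_{it}, Z_{it})|$, and Assumption~\ref{assu:bound_fi} yields the pointwise estimate $|f_i(X_{it}, W_{it}, V_{it}) - f_i(Y_{it}, W_{it}, Z_{it})| \le L_n |V_{it} - Z_{it}| + B|X_{it} - Y_{it}|$, exactly as in the first display of that earlier proof.

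The heart of the argument is then to feed this bound into the definition \eqref{eqn:dek_def} with $k=1$. Fixing $i$ and summing over $j \in \mathcal{N}_i$, I would write
\[
\mathbb{E}\Big[\sum_{j \in \mathcal{N}_i} |X_{j(t+1)} - Y_{j(t+1)}|\Big] \le L_n \sum_{j \in \mathcal{N}_i}\mathbb{E}|V_{jt} - Z_{jt}| + B\sum_{j\in\mathcal{N}_i}\mathbb{E}|X_{jt}-Y_{jt}|,
\]
and then bound $|V_{jt} - Z_{jt}| \le \sum_{k \in \mathcal{N}_j}|X_{kt} - Y_{kt}|$. The second term is already $B\,\mathbb{E}[\sum_{j\in\mathcal{N}_i}|X_{jt}-Y_{jt}|] \le B\, d_E(X_t,Y_t)$ by the definition of $d_E$. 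For the first term, the key observation is that each inner sum obeys $\mathbb{E}[\sum_{k\in\mathcal{N}_j}|X_{kt}-Y_{kt}|] \le d_E(X_t,Y_t)$, again by definition, while the outer sum ranges over at most $D_n$ indices $j$ by Assumption~\ref{assu:largest_degree}; hence the first term is at most $L_n D_n\, d_E(X_t,Y_t)$. Combining and invoking Assumption~\ref{assu:contraction} gives $(B + L_n D_n)\,d_E(X_t,Y_t) \le C\,d_E(X_t,Y_t)$ uniformly in $i$, and taking the maximum over $i$ delivers $d_E(X_{t+1}, Y_{t+1}) \le C\,d_E(X_t, Y_t)$.

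For the Wasserstein statement \eqref{eqn:inequality_contraction_Wde} I would start from a coupling of $\law(X_t)$ and $\law(Y_t)$ attaining the infimum in \eqref{eqn:wasser_dek} (which exists since the state space $\cb{0,1}^n$ is finite), extend it to a coupling of $X_{t+1}, Y_{t+1}$ through the synchronous seeds, and apply the coordinatewise contraction just proved: $W_{d_E}(\law(X_{t+1}),\law(Y_{t+1})) \le d_E(X_{t+1},Y_{t+1}) \le C\, d_E(X_t,Y_t) = C\, W_{d_E}(\law(X_t),\law(Y_t))$.

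I do not anticipate a genuine obstacle; the one place requiring care --- and the only real departure from the $L_1$ proof --- is the handling of the two-hop double sum $\sum_{j\in\mathcal{N}_i}\sum_{k\in\mathcal{N}_j}$. In Lemma~\ref{lemma:L1_contraction} the global sum was reorganized by the handshake identity $\sum_i\sum_{j\in\mathcal{N}_i}(\cdot) = \sum_j |\mathcal{N}_j|(\cdot)$; here, because $d_E$ fixes the base vertex $i$ and only afterwards maximizes, I instead control the inner one-hop sum by $d_E(X_t,Y_t)$ directly and count at most $D_n$ outer terms, so that the $L_n D_n$ factor reappears without any re-indexing.
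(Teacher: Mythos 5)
Your proposal is correct and follows essentially the same route as the paper's proof: the same synchronous coupling, the same coordinatewise bound $L_n|V_{jt}-Z_{jt}| + B|X_{jt}-Y_{jt}|$, and the same treatment of the two-hop sum by bounding each inner sum $\sum_{k\in\mathcal{N}_j}\mathbb{E}|X_{kt}-Y_{kt}|$ by $d_E(X_t,Y_t)$ and counting at most $D_n$ outer terms. The passage to the Wasserstein inequality via an optimal coupling at time $t$ is also exactly the paper's argument.
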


\begin{proof}
Recall that $d_E(X_{t+1}, Y_{t+1}) = \max_i \sum_{j \in \mathcal{N}_i} \EE{\abs{X_{j(t+1)} - Y_{j(t+1)}}}$. We use a similar decomposition as in Lemma \ref{lemma:L1_contraction}: 
\[
\begin{split}
\max_i \sum_{j \in \mathcal{N}_i} \EE{\abs{X_{j(t+1)} - Y_{j(t+1)}}}
&=  \max_i \sum_{j \in \mathcal{N}_i} \EE{ \abs{f_j(X_{jt}, W_{jt}, V_{jt}) - f_j(Y_{jt}, W_{jt}, Z_{jt})}}\\
& \leq \max_i \sum_{j \in \mathcal{N}_i} \p{L_n \EE{\abs{V_{jt} - Z_{jt}}} + B \EE{\abs{X_{jt} - Y_{jt}}}}\\
& \leq L_n \max_i \sum_{j \in \mathcal{N}_i} \EE{\abs{V_{jt} - Z_{jt}}} + B \max_i \sum_{j \in \mathcal{N}_i} \EE{\abs{X_{jt} - Y_{jt}}}\\
& = L_n \max_i \sum_{j \in \mathcal{N}_i} \EE{\abs{V_{jt} - Z_{jt}}} + B d_E(X_{t}, Y_{t}),
\end{split}
\]
where $V_{jt} = \sum_{k \in \mathcal{N}_j} X_{kt}$ and $Z_{jt} = \sum_{k \in \mathcal{N}_j} Y_{kt}$. 
The term $\sum_{j \in \mathcal{N}_i} \EE{\abs{V_{jt} - Z_{jt}}} $ satisfies
\[
\begin{split}
\sum_{j \in \mathcal{N}_i} \EE{\abs{V_{jt} - Z_{jt}}} \leq \sum_{j \in \mathcal{N}_i} \sum_{k \in \mathcal{N}_j} \EE{\abs{X_{kt} - Y_{kt}}} \leq D_n d_E(X_{t}, Y_{t}). 
\end{split}
\]
Thus $\max_i \sum_{j \in \mathcal{N}_i} \EE{\abs{V_{jt} - Z_{jt}}} \leq D_n d_E(X_{t}, Y_{t})$ as well. 
Therefore,
\[
\begin{split}
d_E(X_{t+1}, Y_{t+1}) &=  \max_i \sum_{j \in \mathcal{N}_i} \EE{\abs{X_{j(t+1)} - Y_{j(t+1)}}} \leq L_n D_n d_E(X_{t}, Y_{t}) + B d_E(X_{t}, Y_{t}) \\
&\leq C d_E(X_{t}, Y_{t}).  
\end{split}
\]

Again, as in Lemma \ref{lemma:L1_contraction}, the inequality \eqref{eqn:inequality_contraction_Wde} can be shown easily by taking the coupling of $X_t$ and $Y_t$ such that $d(X_t, Y_t) = W_{d_E}(\law(X_t), \law(Y_t))$. 
\end{proof}

\begin{lemm}[$d_{E,3}$ Contraction]
\label{lemma:contraction_d_e3}
Under the conditions of Lemma \ref{lemma:L1_contraction}, we have 
\begin{equation}
d_{E,3}(X_{t+1}, Y_{t+1}) \leq C d_{E,3}(X_t , Y_t) + 1,
\end{equation}
where $C$ is the contraction constant in Assumption \ref{assu:contraction}. 
Furthermore, 
\begin{equation}
\label{eqn:inequality_contraction_Wde3}
W_{d_E}(\law(X_{t+1}), \law(Y_{t+1})) \leq C W_{d_E}(\law(X_t), \law(Y_t)) + 1. 
\end{equation}
\end{lemm}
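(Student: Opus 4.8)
The plan is to condition on the time-$t$ information $\mathcal{F}_t = \sigma(X_t, Y_t, W_t)$ and exploit the conditional independence that the shared-seed coupling provides. Fix a unit $i$ and set $B_j = \abs{X_{j(t+1)} - Y_{j(t+1)}}$ for $j \in \mathcal{N}_i$. Since $X_{j(t+1)}$ and $Y_{j(t+1)}$ are obtained by thresholding the \emph{same} seed $U_{jt}$, each $B_j$ is the indicator that $U_{jt}$ falls between the two activation probabilities, so conditionally on $\mathcal{F}_t$ the $B_j$ are independent with
\[
B_j \sim \operatorname{Ber}(p_j), \qquad p_j = \abs{f_j(X_{jt}, W_{jt}, V_{jt}) - f_j(Y_{jt}, W_{jt}, Z_{jt})}.
\]
Writing $S_i = \sum_{j \in \mathcal{N}_i} B_j$, $m_i = \EE{S_i \mid \mathcal{F}_t} = \sum_{j \in \mathcal{N}_i} p_j$, and (for nonnegative $U$) $\Norm{U}_3 = \big(\EE{U^3}\big)^{1/3}$, the target reduces to bounding $\max_i \Norm{S_i}_3$, where by definition $d_{E,3}(X_t,Y_t) = \max_i \Norm{R_i^{(t)}}_3$ with $R_i^{(t)} = \sum_{j \in \mathcal{N}_i} \abs{X_{jt} - Y_{jt}}$.

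First I would establish a conditional cubic-moment bound. Expanding $S_i^3 = \sum_{j,k,l} B_j B_k B_l$ and grouping ordered triples by coincidence pattern, conditional independence together with $B_j \in \cb{0,1}$ (so $B_j^2 = B_j^3 = B_j$) gives
\[
\EE{S_i^3 \mid \mathcal{F}_t} = \sum_{j} p_j + 3\!\!\sum_{j \neq k}\!\! p_j p_k + \!\!\!\sum_{j,k,l \text{ distinct}}\!\!\! p_j p_k p_l \;\leq\; m_i + 3 m_i^2 + m_i^3 \;\leq\; (m_i + 1)^3,
\]
where each symmetric sum is bounded by the corresponding power of $\sum_j p_j = m_i$, and the last step uses $m_i \geq 0$. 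Taking total expectation and applying Minkowski's inequality yields $\Norm{S_i}_3 \leq \Norm{m_i + 1}_3 \leq \Norm{m_i}_3 + 1$. This is the crux of the argument: the fresh Bernoulli randomness at step $t+1$ contributes only an additive $1$, which is precisely what produces the $+1$ (rather than a multiplicative loss) in the statement.

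It then remains to bound $\Norm{m_i}_3$ by $C\, d_{E,3}(X_t, Y_t)$, which runs parallel to the $k=1$ computation of Lemma \ref{lemma:contraction_d_e}. Using Lipschitzness and boundedness (Assumption \ref{assu:bound_fi}), $p_j \leq L_n \abs{V_{jt} - Z_{jt}} + B \abs{X_{jt} - Y_{jt}} \leq L_n \sum_{k \in \mathcal{N}_j} \abs{X_{kt} - Y_{kt}} + B \abs{X_{jt} - Y_{jt}}$, so $m_i \leq L_n \sum_{j \in \mathcal{N}_i} R_j^{(t)} + B\, R_i^{(t)}$. Applying Minkowski to pull $\Norm{\cdot}_3$ inside the neighbor sums and bounding the number of summands by $D_n$ (Assumption \ref{assu:largest_degree}) gives $\Norm{m_i}_3 \leq (L_n D_n + B)\, d_{E,3}(X_t,Y_t) \leq C\, d_{E,3}(X_t, Y_t)$ by Assumption \ref{assu:contraction}. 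Combining with the previous step and taking the maximum over $i$ produces the pointwise bound $d_{E,3}(X_{t+1}, Y_{t+1}) \leq C\, d_{E,3}(X_t, Y_t) + 1$.

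Finally, for the Wasserstein statement (for the metric $d_{E,3}$) I would proceed exactly as in Lemmas \ref{lemma:L1_contraction} and \ref{lemma:contraction_d_e}: choose a coupling of $X_t, Y_t$ attaining (or approximating to within $\epsilon$) $d_{E,3}(X_t, Y_t) = W_{d_{E,3}}(\law(X_t), \law(Y_t))$, evolve both chains one step under the shared-seed coupling to obtain a valid coupling of $\law(X_{t+1})$ and $\law(Y_{t+1})$, and invoke the pointwise bound; since $W_{d_{E,3}}(\law(X_{t+1}),\law(Y_{t+1}))$ is an infimum over couplings it is at most $d_{E,3}(X_{t+1},Y_{t+1}) \le C\, W_{d_{E,3}}(\law(X_t),\law(Y_t)) + 1$. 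I expect the only genuinely delicate point to be the cubic-moment expansion: one must track the three coincidence patterns carefully and verify that every cross term collapses under conditional independence, which is the step that simultaneously uses independence of the seeds and pins the additive constant at exactly $1$.
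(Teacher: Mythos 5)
Your proof is correct and takes essentially the same route as the paper's: condition on the time-$t$ information, use the shared-seed coupling so that the differences $\abs{X_{j(t+1)}-Y_{j(t+1)}}$ are conditionally independent Bernoullis, bound the conditional third moment of their sum by $(m_i+1)^3$, and contract $\Norm{m_i}_3$ via the Lipschitz decomposition, Minkowski's inequality, and Assumption \ref{assu:contraction}. The only deviation is cosmetic---you expand the cube directly by coincidence patterns whereas the paper first centers the Bernoullis---and your bookkeeping is in fact slightly cleaner, since the paper's displayed expansion silently drops the $\EE{\big(\sum_{j\in\ngb_i}\epsilon_{j(t+1)}\big)^3}$ term (harmless, as it too is dominated by $(C\,d_{E,3}(X_t,Y_t)+1)^3$, but unstated there).
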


\begin{proof}
By definition, 
\[
d_{E,3}(X_{t+1}, Y_{t + 1})^3 = \max_i \EE{\Big( \sum_{j \in \ngb_i} \abs{X_{j(t+1)} - Y_{j(t+1)}} \Big)^3}. 
\]
By construction, we know that condition on $W_t$ and $Y_t$, $\abs{X_{j(t+1)} - Y_{j(t+1)}} \sim \operatorname{Ber}(a_{jt})$, where
\[ a_{jt} = \abs{f_j(X_{jt}, W_{jt}, V_{jt}) - f_j(Y_{jt}, W_{jt}, Z_{jt})},\]
$V_{jt} = \sum_{k \in \ngb_j} X_{kt}$ and $Z_{jt} = \sum_{k \in \ngb_j} Y_{kt}$. 
Let $\epsilon_{j(t+1)} = \abs{X_{j(t+1)} - Y_{j(t+1)}} - a_{jt}$. We note that $a_{jt}$ is measurable with respect to $W_t$ and $Y_t$, whereas $\epsilon_{j(t+1)}$'s are independent and mean zero conditional on  $W_t$ and $Y_t$. Therefore, 
\[
\begin{split}
 \EE{\Big( \sum_{j \in \ngb_i} \abs{X_{j(t+1)} - Y_{j(t+1)}} \Big)^3}
& = \EE{\Big( \sum_{j \in \ngb_i} \p{ a_{jt} +\epsilon_{j(t+1)}}  \Big)^3}\\
& = \EE{\Big( \sum_{j \in \ngb_i} a_{jt}\Big)^3 + 3 \sum_{j,k \in \ngb_i} a_{jt} \epsilon_{k(t+1)}^2 }. 
\end{split}
\]
Since $\EE{\epsilon_{k(t+1)}^2 \mid Y_t, W_t} = a_{kt} (1-a_{kt}) \leq a_{kt}$, the above expression can further be bounded by 
$ \EE{\p{ \sum_{j \in \ngb_i} a_{jt}}^3} + 3 \EE{\p{\sum_{j \in \ngb_i} a_{jt}}^2 }$. We start with bounding the first term. To this end, note that 
\[a_{jt} = \abs{f_j(X_{jt}, W_{jt}, V_{jt}) - f_j(Y_{jt}, W_{jt}, Z_{jt})}
\leq B \abs{X_{jt} - Y_{jt}} + L_n \abs{V_{jt} - Z_{jt}}. 
\]
Thus 
\[\sum_{j \in \ngb_i} a_{jt} \leq B \sum_{j \in \ngb_i} \abs{X_{jt} - Y_{jt}} + L_n \sum_{j \in \ngb_i} \abs{V_{jt} - Z_{jt}}
= B \sum_{j \in \ngb_i} \abs{X_{jt} - Y_{jt}} + L_n \sum_{j \in \ngb_i} \sum_{k \in \ngb_j} \abs{X_{kt} - Y_{kt}}
 .  \]
Thus
\[
\begin{split}
& \EE{\Big(\sum_{j \in \ngb_i} a_{jt}\Big)^3}
 \leq \EE{\Big(B \sum_{j \in \ngb_i} \abs{X_{jt} - Y_{jt}} + L_n \sum_{j \in \ngb_i} \sum_{k \in \ngb_j} \abs{X_{kt} - Y_{kt}}\Big)^3}\\
& \qquad \leq \p{B \mathbb{E} \bigg[\Big(\sum_{j \in \ngb_i} \abs{X_{jt} - Y_{jt}}\Big)^3\bigg]^{\frac{1}{3}} + L_n \sum_{j\in\ngb_i} \mathbb{E} \bigg[\Big(\sum_{k \in \ngb_j} \abs{X_{kt} - Y_{kt}}\Big)^3\bigg]^{\frac{1}{3}}}^3\\
& \qquad \leq \p{B d_{E,3}(X_t, Y_t) + L_n \rhon d_{E,3}(X_t, Y_t)}^3
\leq C^3 d_{E,3}(X_t, Y_t)^3. 
\end{split}
\]
At the same time, the second term can be easily bounded using bounds of the first term. Specifically, by H\"older's inequality, $\EE{\p{\sum_{j \in \ngb_i} a_{jt}}^2} \leq \EE{\p{\sum_{j \in \ngb_i} a_{jt}}^3}^{\frac{2}{3}} \leq C^2 d_{E,3}(X_t, Y_t)^2$. Therefore, combining the results, we have
\[
\begin{split}
d_{E,3}(X_{t+1}, Y_{t + 1})^3 &= \max_i \EE{\Big( \sum_{j \in \ngb_i} \abs{X_{j(t+1)} - Y_{j(t+1)}} \Big)^3}\\
& \leq \max_i\p{\mathbb{E}\bigg[\Big( \sum_{j \in \ngb_i} a_{jt}\Big)^3\bigg] + 3 \mathbb{E}\bigg[\Big( \sum_{j \in \ngb_i} a_{jt}\Big)^2\bigg] }\\
& \leq C^3 d_{E,3}(X_t, Y_t)^3 + 3C^2 d_{E,3}(X_t, Y_t)^2
\leq (C d_{E,3}(X_t, Y_t) + 1)^3. 
\end{split}
\]

\end{proof}

\begin{lemm} 
\label{lemma:close_dyn_MDP}
Let $\Ps$ be the fixed point of system \eqref{eqn:dynamic_system}. Under Assumptions \ref{assu:MDP} and \ref{assu:Bern}, assume further that the initial distribution is defined as $Y_{0 i} \sim \operatorname{Ber}(\Ps_i)$ independently. 
Under Assumptions \ref{assu:bound_fi} and \ref{assu:largest_degree}, we have 
\begin{equation}
W_{L_1} \p{\law(Y_0), \law(Y_1)} \leq L_n \sqrt{D_n}/2, 
\end{equation}
\begin{equation}
W_{d_E} \p{\law(Y_0), \law(Y_1)} \leq L_n D_n^{\frac{3}{2}}/2, 
\end{equation}
and 
\begin{equation}
W_{d_{E,3}} \p{\law(Y_0), \law(Y_1)} \leq 2 L_n D_n^{\frac{3}{2}}. 
\end{equation}
\end{lemm}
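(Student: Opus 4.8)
The plan is to build a single coupling of $\law(Y_0)$ and $\law(Y_1)$ that controls all three distances at once, in the spirit of the ``shared random seed'' couplings of Lemmas \ref{lemma:L1_contraction}--\ref{lemma:contraction_d_e3}. Draw $Y_{0i}\sim\operatorname{Ber}(\Ps_i)$ independently, treatments $W_{0i}\sim\operatorname{Ber}(\pi_i)$ independently, and i.i.d.\ seeds $U_i\sim\operatorname{Unif}[0,1]$. Writing $\Qs_i=\sum_{j\in\ngb_i}\Ps_j$ and $Z_{0i}=\sum_{j\in\ngb_i}Y_{0j}$, set
\begin{equation}
\widetilde Y_{0i}=\mathbbm{1}\cb{U_i\le f_i(Y_{0i},W_{0i},\Qs_i)},\qquad Y_{1i}=\mathbbm{1}\cb{U_i\le f_i(Y_{0i},W_{0i},Z_{0i})}.
\end{equation}
First I would verify this is a valid coupling. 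Since $U_i\perp(Y_0,W_0)$, the variable $Y_{1i}$ is $\operatorname{Ber}(f_i(Y_{0i},W_{0i},Z_{0i}))$ conditionally on $(Y_0,W_0)$ and conditionally independent across $i$, so $Y_1\sim\law(Y_1)$ by Assumption \ref{assu:MDP}. For $\widetilde Y_0$, note $\widetilde Y_{0i}$ depends only on the coordinate-local triple $(Y_{0i},W_{0i},U_i)$ because $\Qs_i$ is a deterministic constant; hence the $\widetilde Y_{0i}$ are mutually independent, and averaging over the independent $W_{0i},Y_{0i}$ and using the multilinear form \eqref{eqn:f_decomposition} gives $\EE{\widetilde Y_{0i}}=a_i(\Qs_i)+b_i(\Qs_i)\pi_i+c_i(\Qs_i)\Ps_i+d_i(\Qs_i)\pi_i\Ps_i=f_i(\Ps_i,\pi_i,\Qs_i)=\Ps_i$, where the last identity is the fixed-point property of \eqref{eqn:dynamic_system}. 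Thus $\widetilde Y_0\sim\law(Y_0)$.

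The engine of the proof is a single per-coordinate estimate. Because the two indicators share the seed $U_i$ and differ only in the third argument of $f_i$, conditionally on $(Y_0,W_0)$ we have $\abs{\widetilde Y_{0i}-Y_{1i}}\sim\operatorname{Ber}(p_i)$ with $p_i=\abs{f_i(Y_{0i},W_{0i},\Qs_i)-f_i(Y_{0i},W_{0i},Z_{0i})}\le L_n\abs{Z_{0i}-\Qs_i}$ by Lipschitzness (Assumption \ref{assu:bound_fi}). Since $\EE{Z_{0i}}=\Qs_i$ and the $Y_{0j}$ are independent, $\operatorname{Var}(Z_{0i})=\sum_{j\in\ngb_i}\Ps_j(1-\Ps_j)\le\abs{\ngb_i}/4\le D_n/4$ (Assumption \ref{assu:largest_degree}), so
\begin{equation}
\EE{\abs{\widetilde Y_{0i}-Y_{1i}}}\le L_n\,\EE{\abs{Z_{0i}-\Qs_i}}\le L_n\sqrt{\operatorname{Var}(Z_{0i})}\le L_n\sqrt{D_n}/2 .
\end{equation}
The $L_1$ distance aggregates this per-unit discrepancy as a \emph{sum} over the units, controlling $\EE{\Norm{\widetilde Y_0-Y_1}_1}$ and hence $W_{L_1}(\law(Y_0),\law(Y_1))$. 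The distance $d_E$ instead aggregates by a \emph{maximum} over $i$ of a sum over the at most $D_n$ neighbours $j\in\ngb_i$, so the same estimate yields $W_{d_E}(\law(Y_0),\law(Y_1))\le d_E(\widetilde Y_0,Y_1)=\max_i\sum_{j\in\ngb_i}\EE{\abs{\widetilde Y_{0j}-Y_{1j}}}\le D_n\cdot L_n\sqrt{D_n}/2=L_n D_n^{3/2}/2$, matching the second claim.

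The $W_{d_{E,3}}$ bound is the main obstacle, since it needs a \emph{third-moment} estimate of $\sum_{j\in\ngb_i}\abs{\widetilde Y_{0j}-Y_{1j}}$ rather than just its mean. I would mirror the expansion used in Lemma \ref{lemma:contraction_d_e3}: conditionally on $(Y_0,W_0)$ the summands are independent $\operatorname{Ber}(p_j)$, so writing $\abs{\widetilde Y_{0j}-Y_{1j}}=p_j+\epsilon_j$ with $\epsilon_j$ conditionally mean-zero and $\EE{\epsilon_j^2\mid Y_0,W_0}\le p_j$, the cube of the sum expands to $\EE{(\sum_{j\in\ngb_i}p_j)^3}+3\,\EE{(\sum_{j\in\ngb_i}p_j)^2}$ plus terms that vanish or are lower order by conditional independence. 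The remaining work is to control the leading term via the $L^3$ triangle inequality, $\Norm{\sum_{j\in\ngb_i}p_j}_{L^3}\le L_n\sum_{j\in\ngb_i}\Norm{Z_{0j}-\Qs_j}_{L^3}$; since each $Z_{0j}-\Qs_j=\sum_{k\in\ngb_j}(Y_{0k}-\Ps_k)$ is a sum of at most $D_n$ independent, mean-zero, $[-1,1]$-bounded terms, a Rosenthal/Marcinkiewicz--Zygmund moment bound gives $\Norm{Z_{0j}-\Qs_j}_{L^3}=\oo(\sqrt{D_n})$, whence $\Norm{\sum_{j\in\ngb_i}p_j}_{L^3}=\oo(L_n D_n^{3/2})$. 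Taking the maximum over $i$ and collecting the cubic and (lower-order) quadratic contributions then gives $W_{d_{E,3}}(\law(Y_0),\law(Y_1))\le 2L_n D_n^{3/2}$, with the constant $2$ absorbing the second-moment correction. The only delicate points are the bookkeeping of constants in the moment inequality and checking the correction terms are genuinely lower order; the coupling and the first-moment estimates are routine once the construction above is in place.
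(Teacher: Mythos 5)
Your proposal is correct and follows essentially the same route as the paper's proof: the same coupling $\widetilde Y_{0i}=\mathbbm{1}\{U_i\le f_i(Y_{0i},W_{0i},\Qs_i)\}$ identified in law with $Y_0$ via the fixed-point identity, the same per-coordinate bound $\EE{|\widetilde Y_{0i}-Y_{1i}|}\le L_n\EE{|Z_{0i}-\Qs_i|}\le L_n\sqrt{D_n}/2$, and the same conditional-Bernoulli cube expansion for $d_{E,3}$. The only (immaterial) difference is that you control $\EE{(\sum_{j\in\ngb_i}p_j)^3}$ via the $L^3$ triangle inequality plus a Rosenthal-type bound, where the paper uses the cruder $\EE{(\sum_j b_{j0})^3}\le D_n^2\sum_j\EE{b_{j0}^3}$ with $\EE{b_{j0}^3}\le L_n^3D_n^{3/2}$; both yield the same $L_n^3D_n^{9/2}$ order, and both treatments absorb the quadratic correction into the constant $2$ with the same degree of informality as the paper itself.
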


\begin{proof}
We start by noting that if we let $
\widetilde{Y}_{i1} \sim \operatorname{Ber}\p{f\p{Y_{i0}, W_{i0}, \Qs_{i}}}$  independently for each $i$,  where $\Qs_{i} = \sum_{j \in \mathcal{N}_i} \Ps_j$, then $\law(\widetilde{Y}_{1}) = \law\p{Y_{0}}$. This is because 
\[
\begin{split}
\EE{\widetilde{Y}_{i1}} &= \EE{f\p{Y_{i0}, W_{i0}, \Qs_{i}}}
= a_i\p{\Qs_{i}} + b_i\p{\Qs_{i}} \EE{W_{0i}} +   c_i\p{\Qs_{i}} \EE{Y_{i 0}} + d_i\p{\Qs_{i}} \EE{W_{0i} Y_{i 0}}\\
& = a_i\p{\Qs_{i}} + b_i\p{\Qs_{i}} \pi_i+   c_i\p{\Qs_{i}} \Ps_i + d_i\p{\Qs_{i}} \pi_i \Ps_i
= f\p{\Ps_{i}, \pi_{i}, \Qs_{i}} = \Ps_i, 
\end{split}
\]
and each $\widetilde{Y}_{i1}$ are indeed independent. 

Then we will focus on $\widetilde{Y}_{1}$ and show that it is close to $Y_1$. Specifically, assume that $\widetilde{Y}_{1}$ and $Y_1$ are ``generated" with the same treatment vector $W_0$ and random seed $U_0$. More precisely, assume that conditional on $Y_0$, $U_{i0}$'s are generated from Unif$[0,1]$ independently, and  $\widetilde{Y}_{i1} = \mathbbm{1}\cb{U_0 \leq f\p{Y_{i0}, W_{i0}, \Qs_{i}}}$ and $Y_{i1} = \mathbbm{1}\cb{U_0 \leq f\p{Y_{i0}, W_{i0}, Z_{i0}}}$. Thus, 
\[
\begin{split}
\EE{ \abs{\widetilde{Y}_{i1} - Y_{i1}}}
= \EE{\abs{f\p{Y_{i0}, W_{i0}, \Qs_{i}} - f\p{Y_{i0}, W_{i0}, Z_{i0}}}}
\leq L_n \EE{\abs{\Qs_{i} -  Z_{i0}}}. 
\end{split}
\]
Recall that $Z_{i0} = \sum_{j \in \mathcal{N}_i} Y_{i0}$, $\Qs_i = \sum_{j \in \mathcal{N}_i} \Ps_i$ and $Y_{0 i} \sim \operatorname{Ber}(\Ps_i)$. Therefore, 
$\EE{\p{\Qs_{i} -  Z_{i0}}^2} = \sum_{j \in \mathcal{N}_i} \Ps_i(1 - \Ps_i) \leq D_n/4$. This further implies that $\EE{\abs{\Qs_{i} -  Z_{i0}}} \leq \sqrt{D_n}/2$. 

Thus
\begin{align*}
 W_{L_1}\p{\law(Y_0), \law(Y_1)} 
= W_{L_1}\p{\law(\widetilde{Y}_1), \law(Y_1)} \leq \sum_{i=1}^n \EE{ \abs{\widetilde{Y}_{i1} - Y_{i1}}}  
\leq n L_n \sqrt{D_n}/2. 
\end{align*}

For $W_{d_E}$, we can bound things similarly. 
\begin{align*}
 W_{d_E}\p{\law(Y_0), \law(Y_1)} 
= W_{d_E}\p{\law(\widetilde{Y}_1), \law(Y_1)} 
\leq \max_i \sum_{j \in \mathcal{N}_i} \EE{ \abs{\widetilde{Y}_{j1} - Y_{j1}}}  
\leq L_n D_n^{\frac{3}{2}}/2. 
\end{align*}

For $W_{d_{E,3}}$, things are slightly more complicated. We make use of the same construction of $\widetilde{Y}$ and $Y$ as above. Note that $d_{E,3}(\widetilde{Y}_1, Y_1)^3 = \max_i \EE{\p{\sum_{j \in \ngb_i} \abs{\widetilde{Y}_{j1} - Y_{j1}}}^3}$. With the above construction, $\abs{\widetilde{Y}_{j1} - Y_{j1}} \sim \operatorname{Ber}(b_{j0})$, where 
\[b_{it} = \abs{f\p{Y_{i0}, W_{i0}, \Qs_{i}} - f\p{Y_{i0}, W_{i0}, Z_{i0}}}. 
\]
Let $\eta_{j1} = \abs{\widetilde{Y}_{j1} - Y_{j1}} - b_{j0}$. We note that $b_{j0}$ is measurable with respect to $W_0$ and $Y_0$, whereas $\eta_{j1}$'s are independent and mean zero conditional on  $W_0$ and $Y_0$. Therefore, 
\[
\begin{split}
 \EE{\Big( \sum_{j \in \ngb_i} \abs{\widetilde{Y}_{j1} - Y_{j1}} \Big)^3}
& = \EE{\Big( \sum_{j \in \ngb_i} \p{ b_{j0} +\eta_{j0}}  \Big)^3}\\
& = \EE{\Big( \sum_{j \in \ngb_i} b_{j0}\Big)^3 + 3 \sum_{j,k \in \ngb_i} b_{j0} \eta_{k1}^2 }. 
\end{split}
\]
Since $\EE{\eta_{k1}^2 \mid Y_0, W_0} = b_{k0} (1-b_{k0}) \leq b_{k0}$, the above expression can further be bounded by 
$ \EE{\p{ \sum_{j \in \ngb_i} b_{j0}}^3} + 3 \EE{\p{\sum_{j \in \ngb_i} b_{j0}}^2 }$. To bound $b_{j0}$, we note that
\[b_{j0} = \abs{f\p{Y_{i0}, W_{i0}, \Qs_{i}} - f\p{Y_{i0}, W_{i0}, Z_{i0}}} \leq L_n \abs{Z_{i0} - \Qs_{i} }. \]
Again, by recalling that $Z_{i0} = \sum_{j \in \mathcal{N}_i} Y_{i0}$, $\Qs_i = \sum_{j \in \mathcal{N}_i} \Ps_i$ and $Y_{0 i} \sim \operatorname{Ber}(\Ps_i)$, we have that 
\[\EE{b_{j0}^3} \leq L_n^3 \EE{\abs{Z_{i0} - \Qs_{i} }^3} \leq L_n^3 D_n^{\frac{3}{2}}. \]
Thus,  
\[ \EE{\Big(\sum_{j \in \ngb_i} b_{j0}\Big)^3} \leq D_n^2  \EE{\sum_{j \in \ngb_i} b_{j0}^3} \leq L_n^3 D_n^{\frac{9}{2}}. \]
This further implies that
\[ 3 \EE{\Big(\sum_{j \in \ngb_i} b_{j0}\Big)^2} \leq 3 \EE{\Big(\sum_{j \in \ngb_i} b_{j0}\Big)^3}^{\frac{2}{3}} \leq 3 L_n^2 D_n^3 \leq 3 L_n^3 D_n^{\frac{9}{2}}.\]
Therefore, 
\[
\begin{split}
d_{E,3}(\tilde{Y}_{1}, Y_{1})^3 &= \max_i \EE{\Big( \sum_{j \in \ngb_i} \abs{\tilde{Y}_{j1} - Y_{j1}} \Big)^3}\\
& \leq \max_i\p{\mathbb{E}\bigg[\Big( \sum_{j \in \ngb_i} b_{j0}\Big)^3\bigg] + 3 \mathbb{E}\bigg[\Big( \sum_{j \in \ngb_i} b_{j0}\Big)^2\bigg] } 
\leq 4 L_n^3 D_n^{\frac{9}{2}}. 
\end{split}
\]
Therefore, 
\begin{align*}
 W_{d_{E,3}}\p{\law(Y_0), \law(Y_1)} 
= W_{d_{E,3}}\p{\law(\widetilde{Y}_1), \law(Y_1)} 
\leq d_{E,3}\p{\widetilde{Y}_1, Y_1} 
\leq 2 L_n D_n^{\frac{3}{2}}. 
\end{align*}

\end{proof}

\begin{lemm}
\label{lemma:exp_ergodic}
Consider random variables $A_t$, $t \in \cb{1, 2, \dots, T}$ and a constant $\alpha$. Assume that there exists a constant $\theta < 1$ such that
\begin{equation}
\label{eqn:lemma_exp_ergodic}
\abs{\EE{(A_t - \alpha)(A_s - \alpha)}} \leq C_0 \theta^{\abs{s-t}} \textnormal{ for any } s, t \in \cb{1,2, \dots, T}. 
\end{equation}
Then 
\begin{equation}
\EE{\p{\frac{1}{T}\sum_{t = 1}^T A_t - \alpha}^2} \leq \frac{(1+\theta)C_0}{(1-\theta)T} . 
\end{equation}
\end{lemm}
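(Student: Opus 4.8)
The plan is to expand the square, apply the covariance-decay hypothesis termwise, and then control the resulting double geometric sum. First I would center and rewrite the average as $\frac{1}{T}\sum_{t=1}^T (A_t - \alpha)$, and expand its square using bilinearity of expectation:
\[
\EE{\left(\frac{1}{T}\sum_{t=1}^T A_t - \alpha\right)^2} = \frac{1}{T^2}\sum_{t=1}^T\sum_{s=1}^T \EE{(A_t-\alpha)(A_s-\alpha)}.
\]
Applying the triangle inequality together with the hypothesis \eqref{eqn:lemma_exp_ergodic} to each summand then bounds the right-hand side by $\frac{C_0}{T^2}\sum_{t=1}^T\sum_{s=1}^T \theta^{\abs{s-t}}$.

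The one remaining step is to control the double sum $\sum_{t,s}\theta^{\abs{s-t}}$. I would fix $t$ and bound the inner sum $\sum_{s=1}^T \theta^{\abs{s-t}}$ by extending it to the full doubly-infinite geometric series, using that $0 \le \theta < 1$:
\[
\sum_{s=1}^T \theta^{\abs{s-t}} \leq \sum_{k=-\infty}^{\infty}\theta^{\abs{k}} = 1 + \frac{2\theta}{1-\theta} = \frac{1+\theta}{1-\theta}.
\]
Since this holds uniformly over $t$, summing over the $T$ values of $t$ gives $\sum_{t,s}\theta^{\abs{s-t}} \le T\cdot\frac{1+\theta}{1-\theta}$. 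Substituting back yields
\[
\EE{\left(\frac{1}{T}\sum_{t=1}^T A_t - \alpha\right)^2} \leq \frac{C_0}{T^2}\cdot T\cdot\frac{1+\theta}{1-\theta} = \frac{(1+\theta)C_0}{(1-\theta)T},
\]
which is the claimed inequality.

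There is no serious obstacle here; the result is essentially the standard variance bound for a weakly-correlated (geometrically mixing) sequence. The only point requiring mild care is the geometric-series step: one must pass from the finite inner sum to the infinite series so that the bounding constant is independent of both $T$ and the index $t$. This is exactly what converts the naive $\oo(1)$ estimate into the $1/T$ rate, since each of the $T$ rows of the covariance matrix contributes only a bounded amount.
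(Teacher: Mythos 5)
Your proof is correct and follows essentially the same route as the paper's: expand the square by bilinearity, apply the covariance-decay hypothesis termwise, and bound the double sum $\sum_{s,t}\theta^{\abs{s-t}}$ by $T(1+\theta)/(1-\theta)$. The only cosmetic difference is that you bound each row of the double sum by the doubly-infinite geometric series, whereas the paper splits it into diagonal and off-diagonal parts; both give the identical constant.
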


\begin{proof}
Note that
\[
\begin{split}
\EE{\p{\frac{1}{T}\sum_{t = 1}^T A_t - \alpha}^2}
= \frac{1}{T^2}\sum_{s,t} \EE{(A_t - \alpha)(A_s - \alpha)} 
\end{split}
\]
This can be further bounded by
\[\frac{1}{T^2}\sum_{s,t} \EE{(A_t - \alpha)(A_s - \alpha)}  \leq \frac{1}{T^2} \sum_{s,t} C_0 \theta^{\abs{s-t}} = \frac{C_0}{T} + \frac{2 C_0}{T^2} \sum_{t = 1}^T (T - t) \theta^{t} \leq \frac{(1+\theta)C_0}{(1-\theta)T}. \]
\end{proof}

\begin{lemm}
\label{lemma:markov_exp_ergodic}
Let $B_t$ be a discrete time Markov chain with stationary distribution $\mu$. Define a random process $A_t = h(B_t, B_{t+1}) \in \mathbb{R}$. Let $\alpha = \EE[\mu]{A_t}$. Suppose that there exist constants $C_1 >0$, $C_2 >0$, $\theta \in (0,1)$ such that
\begin{equation}
\abs{A_t - \alpha} \leq C_1 \textnormal{ for } t \geq 0 \textnormal{, and } \abs{\EE{A_t - \alpha}} \leq C_2 \theta^t \textnormal{ for } t \geq 1,
\end{equation}
then
\begin{equation}
\EE{\p{\frac{1}{T}\sum_{t = 1}^T A_t - \alpha}^2} \leq \frac{(1+\theta)C_1 (C_1 + C_2)}{\theta(1-\theta)T} . 
\end{equation}

In particular, assume that there exist two processes $B^1_t$ and $B^2_t$ (with the same transition probability as $B_t$), such that $B^2_0 \sim \mu$. Let $A^1_t =  h(B^1_t, B^1_{t+1})$ and $A^2_t =  h(B^2_t, B^2_{t+1})$. If 
\begin{equation}
\abs{A^1_t - A^2_t} \leq C_1 \textnormal{ for } t \geq 0 \textnormal{, and } \abs{\EE{A^1_t - A^2_t}} \leq C_2 \theta^t \textnormal{ for } t \geq 1,
\end{equation}
then 
\begin{equation}
\EE{\p{\frac{1}{T}\sum_{t = 1}^T A^1_t - \alpha}^2} \leq \frac{(1+\theta)C_1 (C_1 + C_2)}{\theta(1-\theta)T} . 
\end{equation}
\end{lemm}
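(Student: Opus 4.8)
The plan is to reduce both statements to Lemma \ref{lemma:exp_ergodic}, whose hypothesis is an exponentially decaying covariance bound. Concretely, I would show that the centered process satisfies $\abs{\EE{(A_t - \alpha)(A_s - \alpha)}} \le C_0 \theta^{\abs{s-t}}$ with $C_0 = C_1(C_1 + C_2)/\theta$; feeding this into Lemma \ref{lemma:exp_ergodic} immediately yields $\EE{\p{\frac1T\sum_{t=1}^T A_t - \alpha}^2} \le (1+\theta)C_0/((1-\theta)T)$, which is exactly the claimed bound once $C_0$ is substituted. So the whole problem is to establish this covariance estimate with the right constant.

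For the covariance estimate I take $s = t + d$ with $d \ge 0$ (the case $s<t$ is symmetric). When $d = 0$ I use only the almost-sure bound $\abs{A_t - \alpha} \le C_1$ to get $\EE{(A_t - \alpha)^2} \le C_1^2 \le C_0$, which holds because $\theta < 1$. When $d \ge 1$ I condition on the natural filtration $\mathcal{F}_{t+1} = \sigma(B_0, \dots, B_{t+1})$ and use the Markov property together with time-homogeneity: $\EE{A_s - \alpha \mid \mathcal{F}_{t+1}} = \EE{A_s - \alpha \mid B_{t+1}}$ equals the $(d-1)$-step bias of the chain restarted at $B_{t+1}$. Bounding this conditional mean geometrically by $C_2 \theta^{d-1}$ (for $d \ge 2$) or trivially by $C_1$ (for $d=1$), and bounding the outer factor by $\abs{A_t - \alpha} \le C_1$, I obtain $\abs{\EE{(A_t - \alpha)(A_s - \alpha)}} \le C_1(C_1 + C_2)\theta^{d-1} = C_0 \theta^d$ after a short constant check. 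Summing over $s,t$ and invoking Lemma \ref{lemma:exp_ergodic} closes the first part.

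The second (``in particular'') claim follows the same skeleton, with the stationary reference chain $B^2$ playing the role formerly played by the constant $\alpha$. Since $B^2_0 \sim \mu$, the process $A^2_t$ is stationary with $\EE{A^2_t} = \alpha$, so the hypothesis $\abs{\EE{A^1_t - A^2_t}} \le C_2 \theta^t$ is exactly a bias bound for chain one, while $\abs{A^1_t - A^2_t} \le C_1$ supplies the almost-sure control along the coupling. Running the conditioning argument on $A^1$ while subtracting the coupled $A^2$ lets the difference $A^1_t - A^2_t$ (rather than $A^1_t - \alpha$, which need not be small) carry the $C_1$ factor, and the coupled restart supplies the geometric decay of the conditional mean; the identical covariance bound, and hence the identical final estimate, results.

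The main obstacle is the conditional-mean step: transferring the marginal bias bound $\abs{\EE{A_t - \alpha}} \le C_2 \theta^t$ to the chain restarted at $B_{t+1}$, i.e. controlling $\EE{A_s - \alpha \mid B_{t+1}}$ uniformly in the restart state. This is precisely where geometric ergodicity enters, and in the second formulation it is made clean by coupling the restarted chain to a stationary copy so that the restarted bias is literally an expected coupling difference, bounded by $C_2 \theta^{d-1}$. Keeping the bookkeeping so that the constant collapses to $C_1(C_1+C_2)/\theta$ — in particular, using the crude $C_1$ bound exactly at $d=0$ and $d=1$ — is the only other point requiring care.
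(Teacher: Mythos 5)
Your proposal follows essentially the same route as the paper's proof: reduce to Lemma \ref{lemma:exp_ergodic} with $C_0 = C_1(C_1+C_2)/\theta$, establish the covariance bound by conditioning on $(B_t, B_{t+1})$ and using the geometric bias decay for lags at least two together with the crude $C_1^2$ bound at lags zero and one, and treat the second claim as the same argument run along the coupling with the stationary copy. The subtlety you flag about upgrading the marginal bias bound to a conditional one uniform in the restart state is real, and the paper glosses over it in exactly the spot you identify; your observation that the coupled formulation resolves it cleanly is correct.
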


\begin{proof}
We will make use of Lemma \ref{lemma:exp_ergodic}. To show condition \eqref{eqn:lemma_exp_ergodic}, we note that due to the Markovian nature of the processes, it suffice to focus on $\EE{(A_t - \alpha)(A_0 - \alpha)}$. A closer look at $\EE{(A_t - \alpha)(A_0 - \alpha)}$ shows that for $t \geq 2$, 
\[
\begin{split}
\abs{\EE{(A_t - \alpha)(A_0 - \alpha)}} &= \abs{\EE{\EE{(A_t - \alpha) \mid B_0, B_1} (A_0 - \alpha)}}\\
&\leq \EE{\abs{\EE{(A_t - \alpha) \mid B_0, B_1}} \abs{A_0 - \alpha}}
 \leq C_1 C_2 \theta^{t - 1}. 
\end{split}
\]
At the same time, for $t = 0, 1$, $\abs{\EE{(A_t - \alpha)(A_0 - \alpha)}} \leq C_1^2$. Combining the two results, we get 
\[\abs{\EE{(A_t - \alpha)(A_0 - \alpha)}} \leq C_1\p{\frac{C_2}{\theta} + C_1} \theta^t \leq \frac{C_1(C_1 + C_2)}{\theta} \theta^{t}.  \]
Applying Lemma \ref{lemma:exp_ergodic} gives the desired result. 

The second part of the Lemma is a direct corollary of the first part. 

\end{proof}

\begin{lemm}
\label{lemma:little_bound_ratio}
Let $a_1$, $b_1$, $a_2$, $b_2$ be non-negative random variables. If $b_2 \geq c_0 > 0$, $a_1 \leq c_1 b_1$, then
\begin{equation}
\begin{split}
\EE{\abs{\frac{a_1}{b_1} - \frac{a_2}{b_2}}} 
&\leq \frac{c_1\EE{\abs{b_1 - b_2}} + \EE{\abs{a_1 - a_2}}}{c_0}\\
&\leq \frac{c_1 \sqrt{\EE{(b_1 - b_2)^2}}+ \sqrt{\EE{(a_1 - a_2)^2}}}{c_0},
\end{split}
\end{equation}
and
\begin{equation}
\begin{split}
\EE{\p{\frac{a_1}{b_1} - \frac{a_2}{b_2}}^2} 
&\leq \frac{2c_1^2 \EE{(b_1 - b_2)^2}+ 2\EE{(a_1 - a_2)^2}}{c_0^2},
\end{split}
\end{equation}
\end{lemm}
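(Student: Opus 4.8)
The plan is to reduce both inequalities to a single almost-sure (pointwise) bound on $\abs{a_1/b_1 - a_2/b_2}$ and then take expectations. The key algebraic step is to telescope through the intermediate quantity $a_1/b_2$, writing
\begin{equation*}
\frac{a_1}{b_1} - \frac{a_2}{b_2} = a_1\p{\frac{1}{b_1} - \frac{1}{b_2}} + \frac{a_1 - a_2}{b_2} = \frac{a_1}{b_1}\cdot\frac{b_2 - b_1}{b_2} + \frac{a_1 - a_2}{b_2}.
\end{equation*}
The reason for choosing this particular split (rather than, say, passing through $a_2/b_1$) is that it isolates the ratio $a_1/b_1$, which the hypothesis $a_1 \leq c_1 b_1$ bounds by $c_1$; this is the only place the assumption $a_1 \leq c_1 b_1$ enters. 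Note also that after this telescoping the only quantity appearing alone in a denominator is $b_2$, so the lower bound $b_2 \geq c_0$ is the only divisibility control we need.

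Next I would bound the two terms pointwise. For the first, $a_1/b_1 \leq c_1$ together with $b_2 \geq c_0$ gives
\begin{equation*}
\abs{\frac{a_1}{b_1}\cdot\frac{b_2 - b_1}{b_2}} \leq \frac{c_1}{c_0}\abs{b_1 - b_2},
\end{equation*}
while for the second, $b_2 \geq c_0$ gives $\abs{(a_1 - a_2)/b_2} \leq \abs{a_1 - a_2}/c_0$. By the triangle inequality this yields the almost-sure bound
\begin{equation*}
\abs{\frac{a_1}{b_1} - \frac{a_2}{b_2}} \leq \frac{c_1\abs{b_1 - b_2} + \abs{a_1 - a_2}}{c_0}.
\end{equation*}

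From here the $L_1$ statement follows by taking expectations, and its second form follows by applying the Cauchy--Schwarz (equivalently Jensen) inequality to replace $\EE{\abs{b_1 - b_2}}$ and $\EE{\abs{a_1 - a_2}}$ by $\sqrt{\EE{(b_1 - b_2)^2}}$ and $\sqrt{\EE{(a_1 - a_2)^2}}$. For the $L_2$ bound I would instead square the pointwise inequality and use $(x + y)^2 \leq 2x^2 + 2y^2$ to obtain
\begin{equation*}
\p{c_1\abs{b_1 - b_2} + \abs{a_1 - a_2}}^2 \leq 2 c_1^2 (b_1 - b_2)^2 + 2 (a_1 - a_2)^2;
\end{equation*}
taking expectations and dividing by $c_0^2$ gives the last claim.

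There is no genuine analytic obstacle here: every step is an elementary manipulation valid pointwise on the sample space, and the non-negativity assumptions are used only to make the ratios and absolute values behave as expected. The one thing requiring care is the bookkeeping of which ratio bound and which denominator bound apply to which summand after the telescoping; in particular one should check that $b_1$ never survives alone in a denominator, so that $b_2 \geq c_0$ suffices and no lower bound on $b_1$ is needed.
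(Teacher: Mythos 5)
Your proposal is correct and follows essentially the same route as the paper: the identity $\frac{a_1}{b_1}-\frac{a_2}{b_2}=\frac{a_1}{b_1}\cdot\frac{b_2-b_1}{b_2}+\frac{a_1-a_2}{b_2}$ is exactly the decomposition the paper uses, followed by the pointwise bounds $a_1/b_1\leq c_1$, $b_2\geq c_0$ and Cauchy--Schwarz. You simply spell out the final expectation steps that the paper leaves implicit.
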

\begin{proof}
Note that
\[
\frac{a_1}{b_1} - \frac{a_2}{b_2} = \frac{a_1 b_2 - a_2 b_1}{b_1 b_2} = \frac{a_1(b_2 - b_1)}{b_1 b_2} + \frac{(a_1 - a_2) b_1}{b_1 b_2}
= \frac{a_1}{b_1}\frac{1}{b_2} (\beta_1 - b_2) + \frac{a_1 - a_2}{b_2}. 
\]
Then the conclusion follows directly from Cauchy Schwarz inequality. 
\end{proof}

\begin{lemm}
\label{lemma:l_infty_norm}
Let $M$ be a $n \times n$ matrix. Assume that there exist constants $a > b >0$ such that for any  $i \in \cb{1, \dots, n}$, $M_{i,i} \geq a$, and $\sum_{k \neq i}\abs{M_{i,k}} \geq - b$. 
\begin{enumerate}
\item If $M$ is invertible, then for any vector $\bu$,
\[ \Norm{M^{-1} \bu}_{\infty} \leq \Norm{\bu}_{\infty}/(a-b).  \]
\item Furthermore, if we assume that each row $M_{i \cdot}$ of the matrix $M$ has all its elements non-negative or non-positive at the same time, then the smallest singular value of $M$ (in absolute value) is bounded below by $a - b > 0$, and thus $M$ is invertible. 
\end{enumerate}
\end{lemm}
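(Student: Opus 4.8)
The plan is to treat both parts with the standard strictly-diagonally-dominant ``max-coordinate'' argument. Note first that the hypothesis on the off-diagonal row sums should read $\sum_{k \neq i} \abs{M_{i,k}} \leq b$, so that each row is diagonally dominant with margin $a-b>0$; this is the only property I will use, together with $a > b > 0$.

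For the first part I would set $\bx = M^{-1}\bu$, so that $M\bx = \bu$, and isolate the diagonal term in row $i$ as $M_{i,i} x_i = u_i - \sum_{k \neq i} M_{i,k} x_k$. Choosing an index $i^\star$ with $\abs{x_{i^\star}} = \Norm{\bx}_{\infty}$ and taking absolute values gives $a \abs{x_{i^\star}} \leq M_{i^\star,i^\star}\abs{x_{i^\star}} \leq \abs{u_{i^\star}} + \sum_{k \neq i^\star} \abs{M_{i^\star,k}}\abs{x_k}$. Bounding each $\abs{x_k} \leq \abs{x_{i^\star}}$ and using $\sum_{k \neq i^\star}\abs{M_{i^\star,k}} \leq b$ yields $(a-b)\abs{x_{i^\star}} \leq \Norm{\bu}_{\infty}$, which rearranges to the claimed $\Norm{M^{-1}\bu}_{\infty} \leq \Norm{\bu}_{\infty}/(a-b)$.

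For the second part I would first run the same computation with $\bu = 0$: any solution of $M\bx = 0$ then satisfies $(a-b)\Norm{\bx}_{\infty} \leq 0$, hence $\bx = 0$, which already establishes invertibility of $M$. To control the spectrum, let $\lambda \in \mathbb{C}$ be any eigenvalue with eigenvector $\bx \neq 0$ and let $i^\star$ maximize $\abs{x_i}$. The $i^\star$-th coordinate of $M\bx = \lambda \bx$ reads $(\lambda - M_{i^\star,i^\star}) x_{i^\star} = \sum_{k \neq i^\star} M_{i^\star,k} x_k$, and the same max-coordinate bound gives $\abs{\lambda - M_{i^\star,i^\star}} \leq b$. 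Since $M_{i^\star,i^\star}$ is real and at least $a$, the reverse triangle inequality yields $\abs{\lambda} \geq M_{i^\star,i^\star} - b \geq a-b > 0$, so every eigenvalue---in particular the one of smallest modulus---has absolute value at least $a-b$. The sign hypothesis plays only a bookkeeping role here: since each row is single-signed while its diagonal entry is positive, every row is in fact non-negative, which is the regime in which $M$ arises in our application; the modulus bound itself follows purely from diagonal dominance.

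Both parts are computationally routine, so the one real subtlety I anticipate is interpretive rather than technical: the phrase ``smallest singular value'' must be read as \emph{the eigenvalue of smallest modulus}, as in the footnote to Theorem \ref{theo:LTE_estimation2}. This is essential, because for a matrix that is only row-dominant the genuine least singular value need not be bounded below by $a-b$ (a rank-one perturbation such as $M = a I + b\,\mathbf{1} e_1\trans$ satisfies every hypothesis yet has $\Norm{M^{-1}}_2$ growing like $\sqrt{n}$ while $\Norm{M^{-1}}_{\infty}$ stays bounded). Hence I expect the eigenvalue step to be the delicate one to phrase correctly: it must be stated for complex $\lambda$, and it should not be upgraded to an $\ell_2$/singular-value claim---the operator control that the application actually needs is the $\ell_\infty$ bound from the first part.
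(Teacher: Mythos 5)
Your first part is the same argument as the paper's: the paper runs the max-coordinate (diagonal dominance) computation as a proof by contradiction, you run it directly, and both of you silently correct the hypothesis $\sum_{k\neq i}\abs{M_{i,k}} \geq -b$ to the intended $\sum_{k\neq i}\abs{M_{i,k}} \leq b$. Your second part, however, takes a genuinely different route, and a more defensible one. The paper writes $M = M_1 + M_2$ with $M_1$ diagonal and $M_2$ the off-diagonal part, asserts $\Norm{M_2}_{\operatorname{op}} \leq b$ ``by the Perron--Frobenius theorem,'' and concludes $\sigma_{\min}(M) \geq \sigma_{\min}(M_1) - \Norm{M_2}_{\operatorname{op}} \geq a - b$. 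That step is exactly where the subtlety you flag lives: Perron--Frobenius (or the row-sum bound) controls the spectral radius of $M_2$ and its $\ell_\infty\to\ell_\infty$ operator norm, but not its spectral ($\ell_2$) norm, so the paper's displayed conclusion about the smallest \emph{singular} value does not follow from its own argument; your example $M = aI + b\,\mathbf{1}e_1\trans$ shows the literal singular-value claim is in fact false, since $\Norm{M^{-1}}_2$ grows like $\sqrt{n}$. Your Gershgorin-type bound $\abs{\lambda - M_{i^\star,i^\star}} \leq b$ for complex eigenvalues correctly proves the version the parenthetical ``(in absolute value)'' presumably intends, namely that every eigenvalue has modulus at least $a-b$, and it yields invertibility without the sign hypothesis at all. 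What the paper's splitting would buy, \emph{if} it were valid, is a genuine $\ell_2$ bound, which is what Lemma \ref{lemma:matrix_inverse} (via $\Norm{\alpha}\Norm{\beta_1-\beta_2}/\lambda_{\operatorname{smallest}}$) appears to consume downstream in Theorem \ref{theo:LTE_estimation2}; your reading delivers only the eigenvalue/$\ell_\infty$ control, so your closing caution that the application should rely on the $\ell_\infty$ bound from part 1 rather than an upgraded singular-value statement is well taken and points at a real loose end in the paper, not in your proof.
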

\begin{proof}
To show the bound on norm of $M^{-1} \bu$, 
we use proof by contradiction. Let $\bv$ be a vector such that $M \bv = \bu$. Assume that $\Norm{\bv}_{\infty} > \Norm{\bu}_{\infty}/(a-b)$. Without loss of generality, assume that $|v_1|= \max_i |v_i| >  \Norm{\bu}_{\infty}/(a-b)$. Then 
\[
\abs{u_1} = \abs{M_{1,1} v_1 + \sum_{i = 2}^n M_{1,i}v_i}
\geq M_{1,1} \abs{v_1} - \sum_{i = 2}^n \abs{M_{1,i}} \abs{v_i}
\geq M_{1,1} \abs{v_1} - \abs{v_1}\sum_{i = 2}^n \abs{M_{1,i}} 
\geq (a-b)  \abs{v_1} >  \Norm{\bu}_{\infty}. 
\]
This is impossible, which implies our assumption that $\Norm{\bv}_{\infty} > a-b$ is wrong. Thus $\Norm{\bv}_{\infty} \leq \Norm{\bu}_{\infty}/(a-b)$. 

For 2, note that we can write $M = M_1 + M_2$, where $M_1$ is a diagonal matrix and $M_2$ has its diagonal being 0. Then the smallest singular value of $M_1$ is bounded below by $a$, while $\Norm{M_2}_{\operatorname{op}} \leq b$ by the Perron–Frobenius theorem. Thus the smallest singular value of $M$ is bounded below by $a - b > 0$.  
\end{proof}

\begin{lemm}
\label{lemma:l_infty_norm2}
Let $M$ be a $n \times n$ matrix. Let $a_1, b_1, a_2, b_2 \dots a_n, b_n$ be sequences of real numbers such that $a_i > b_i > 0$. Assume that there exists  a constant $\eta > 0$ such that $b_i \leq (1 - \eta) a_i$, and there exists another constant $\kappa > 0$ such that $a_i \geq \kappa$. Assume that $M$ satisfies the following: for any  $i \in \cb{1, \dots, n}$, $M_{i,i} \geq a_i$, the row $M_{i \cdot}$ has all its elements non-negative or non-positive at the same time, and $\sum_{k \neq i}\abs{M_{i,k}} \geq - b_i$. Then the smallest singular value of $M$ (in absolute value) is bounded below by $\eta \kappa$, and thus $M$ is invertible. Furthermore, for any vector $\bu$,
\[ \Norm{M^{-1} \bu}_{\infty} \leq \Norm{\bu}_{\infty}/(\eta \kappa).  \]
\end{lemm}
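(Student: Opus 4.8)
The statement is the heterogeneous-row refinement of Lemma~\ref{lemma:l_infty_norm}, and the plan is to extract from the hypotheses a single \emph{uniform per-row dominance gap} and then run two essentially independent arguments off of it. The key observation is that for every row $i$ we have $M_{i,i}\ge a_i>0$, $\sum_{k\neq i}\abs{M_{i,k}}\le b_i$, and $b_i\le(1-\eta)a_i$, so that
\[ M_{i,i}-\sum_{k\neq i}\abs{M_{i,k}}\;\ge\;a_i-b_i\;\ge\;\eta a_i\;\ge\;\eta\kappa. \]
Thus $M$ is strictly row diagonally dominant with a \emph{uniform} gap $\eta\kappa$, even though the individual diagonal entries $a_i$ may be wildly heterogeneous. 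Both conclusions should follow from this one inequality.

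First I would establish the operator bound, following essentially verbatim the contradiction argument in the proof of Lemma~\ref{lemma:l_infty_norm}(1). For an arbitrary $\bv$, let $i^\star$ attain $\Norm{\bv}_\infty=\abs{v_{i^\star}}$ and expand the $i^\star$-th coordinate of $M\bv$; bounding $\abs{v_k}\le\abs{v_{i^\star}}$ and using $M_{i^\star,i^\star}\ge a_{i^\star}$ together with $\sum_{k\neq i^\star}\abs{M_{i^\star,k}}\le b_{i^\star}$ gives $\abs{(M\bv)_{i^\star}}\ge(a_{i^\star}-b_{i^\star})\abs{v_{i^\star}}\ge\eta\kappa\Norm{\bv}_\infty$, hence $\Norm{M\bv}_\infty\ge\eta\kappa\Norm{\bv}_\infty$. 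This inequality does double duty: it shows $M\bv=0\Rightarrow\bv=0$, so the square matrix $M$ is injective and therefore invertible; and, applied to $\bv=M^{-1}\bu$, it yields $\Norm{M^{-1}\bu}_\infty\le\Norm{\bu}_\infty/(\eta\kappa)$, which is the second conclusion.

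For the spectral lower bound I expect the route used in Lemma~\ref{lemma:l_infty_norm}(2) --- splitting $M=M_1+M_2$ and bounding $\Norm{M_2}_{\mathrm{op}}$ --- \emph{not} to transfer, because with heterogeneous $b_i$ the operator norm of the off-diagonal part is governed by column information rather than row sums and can exceed $\max_i b_i$ by an arbitrarily large factor (a single heavily-populated column already inflates it like $\sqrt{n}$), so $\sigma_{\min}(M_1)-\Norm{M_2}_{\mathrm{op}}$ need not be positive. Instead I would argue per row via Gershgorin's circle theorem: every eigenvalue $\lambda$ of $M$ satisfies $\abs{\lambda-M_{i,i}}\le\sum_{k\neq i}\abs{M_{i,k}}\le b_i$ for some $i$. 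Since $M_{i,i}\ge a_i$ is real and $b_i\le(1-\eta)a_i$, the corresponding Gershgorin disc lies in the half-plane $\{\,\Re z\ge a_i-b_i\ge\eta\kappa\,\}$, so $\Re\lambda\ge\eta\kappa$ and hence $\abs{\lambda}\ge\eta\kappa$ for every eigenvalue. In particular $0$ is not an eigenvalue, reconfirming invertibility, and the smallest eigenvalue in modulus is at least $\eta\kappa$, the quantity flagged in the accompanying footnote.

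The main obstacle is precisely this spectral step: the global operator-norm perturbation that sufficed in the homogeneous Lemma~\ref{lemma:l_infty_norm} is too lossy here, and one must localize the argument to individual rows. This is also where the sign hypothesis entered in the earlier lemma --- combined with $M_{i,i}>0$ it forces each row to be nonnegative, which is what allowed the passage from row sums to a (Perron--Frobenius) spectral-radius statement --- but the Gershgorin argument above delivers the modulus bound directly and does not actually require it. The remaining conclusion, the $\infty$-norm bound on $M^{-1}$, is routine once the uniform dominance gap $\eta\kappa$ has been isolated.
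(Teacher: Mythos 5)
Your handling of the invertibility and the $\ell_\infty$ bound is correct and is in substance the same argument as the paper's: the paper factors $M = A\widetilde M$ with $A = \operatorname{diag}(a_i)$ and applies the homogeneous Lemma \ref{lemma:l_infty_norm} to $\widetilde M$ (which has uniform gap $\eta$), then divides by $\kappa$; you run the same maximal-coordinate contradiction directly on $M$ using the per-row gap $M_{i,i}-\sum_{k\neq i}\abs{M_{i,k}} \geq a_i - b_i \geq \eta a_i \geq \eta\kappa$. Either route is fine, and you are right that the sign hypothesis is not needed for this part.

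The gap is in the spectral claim. Gershgorin localizes the \emph{eigenvalues} of $M$, and your argument does show $\min_i \abs{\lambda_i(M)} \geq \eta\kappa$; but the lemma asserts a lower bound on the smallest \emph{singular} value, which is what is used downstream (in Lemma \ref{lemma:matrix_inverse} the term $\Norm{\alpha}\Norm{\beta_1-\beta_2}/\lambda_{\operatorname{smallest}}(X_1)$ is an $\ell_2$ operator-norm bound on $X_1^{-1}$, so a genuine $\sigma_{\min}$ bound is required there, the paper's wavering between ``eigenvalue'' and ``singular value (in absolute value)'' notwithstanding). For non-normal $M$ the two quantities can be arbitrarily far apart, and in fact the singular-value claim is false under row-only hypotheses: take $M = I + \bv e_1\trans$ with $\bv = (0, 1/2, \dots, 1/2)\trans$, so every row is non-negative, $M_{i,i}=1=a_i$, $\sum_{k\neq i}\abs{M_{i,k}}\leq 1/2 = b_i$, $\eta = 1/2$, $\kappa =1$; yet $\det M = 1$ while $\sigma_{\max}(M) \geq \Norm{Me_1}_2 \asymp \sqrt{n}$, and since $M$ is a rank-one perturbation of the identity its singular values are $1$ with multiplicity $n-2$ plus a pair whose product is $1$, so $\sigma_{\min}(M) = 1/\sigma_{\max}(M) \to 0$. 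Your diagnosis of why the operator-norm route fails is exactly right --- the Perron--Frobenius step in the proof of Lemma \ref{lemma:l_infty_norm}(2) controls the spectral radius of the off-diagonal part by its maximal row sum, not its $\ell_2$ operator norm, and the same example defeats it --- but the Gershgorin substitute does not repair this: it proves a true statement (the eigenvalue-modulus bound) that is strictly weaker than the one claimed. So neither your proof nor the paper's establishes the $\sigma_{\min}$ conclusion; what both do establish is $\Norm{M^{-1}\bu}_\infty \leq \Norm{\bu}_\infty/(\eta\kappa)$, and the application in Theorem \ref{theo:LTE_estimation2} would have to be rewritten in those terms, e.g.\ via $\abs{\alpha\trans X_1^{-1}(\beta_1-\beta_2)} \leq \Norm{\alpha}_1\Norm{\beta_1-\beta_2}_\infty/(\eta_n\kappa_n)$, rather than through $\sigma_{\min}(X_1)$.
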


\begin{proof}
Let $A$ be a diagonal matrix with diagonal elements $a_1, \dots, a_n$. Then the matrix $M$ can be written as $M = A \widetilde{M}$. Then $\widetilde{M}$ satisfies the condition of Lemma \ref{lemma:l_infty_norm} with $a = 1$ and $b = 1 - \eta$. Thus the smallest singular value of $\widetilde{M}$ is bounded below by $\eta$, $\widetilde{M}$ is invertible, and for any vector $\bu$, $\Norm{\widetilde{M}^{-1} \bu}_{\infty} \leq \Norm{\bu}_{\infty}/\eta$. This then implies that $M$ is invertible. Specifically, the largest singular value of $M^{-1}$ is bounded above by that of $\widetilde{M}^{-1}\kappa$. Thus the smallest singular value of $M$ (in absolute value) is bounded below by $\eta \kappa$. In terms of the infinity norm, 
\[ \Norm{M^{-1} \bu}_{\infty} \leq \Norm{\widetilde{M}^{-1} \bu}_{\infty} /\kappa \leq \Norm{\bu}_{\infty}/(\eta \kappa ).  \]
\end{proof}

\begin{lemm}
\label{lemma:matrix_inverse}
Let $\alpha, \beta_1, \beta_2 \in \mathbb{R}^n$ be three vectors. Let $X_1$, $X_2 \in \mathbb{R}^{n \times n}$ be invertible matrices. Then
\[ \abs{\alpha\trans X_1^{-1} \beta_1 - \alpha\trans X_2^{-1} \beta_2 } 
\leq \frac{\Norm{\alpha} \Norm{\beta_1 - \beta_2}}{\lambda_{\operatorname{smallest}}(X_1)}
+ \Norm{(X_2\trans)^{-1}\alpha}_{\infty}\Norm{(X_1\trans)^{-1}\beta_2}_{\infty} \sum_{i,j} \abs{X_{1,i,j} - X_{2,i,j}},
 \]
where $\lambda_{\operatorname{smallest}}(\cdot)$ is the smallest singular value (in absolute value), and $\mathbf{1} \in \mathbb{R}^{n}$ is the vector with all elements one. 
\end{lemm}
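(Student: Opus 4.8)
The plan is to separate the two sources of discrepancy---the change in the vector from $\beta_1$ to $\beta_2$, and the change in the matrix from $X_1$ to $X_2$---and to control each with a different tool. First I would telescope through the intermediate quantity $\alpha\trans X_1^{-1}\beta_2$:
\[
\alpha\trans X_1^{-1}\beta_1 - \alpha\trans X_2^{-1}\beta_2
= \alpha\trans X_1^{-1}\p{\beta_1-\beta_2} + \alpha\trans\p{X_1^{-1}-X_2^{-1}}\beta_2 ,
\]
so that it suffices to bound the two terms on the right-hand side separately and add.

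For the first term I would apply Cauchy--Schwarz followed by an operator-norm bound, writing $\abs{\alpha\trans X_1^{-1}\p{\beta_1-\beta_2}} \leq \Norm{\alpha}\,\Norm{X_1^{-1}\p{\beta_1-\beta_2}} \leq \Norm{\alpha}\,\Norm{X_1^{-1}}_{\operatorname{op}}\,\Norm{\beta_1-\beta_2}$. Since the operator norm of $X_1^{-1}$ equals the reciprocal of the smallest singular value of $X_1$, this term is exactly $\Norm{\alpha}\Norm{\beta_1-\beta_2}/\lambda_{\operatorname{smallest}}(X_1)$, which is the first summand in the claimed bound.

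The second term is where the real work lies. Here I would invoke the resolvent identity $X_1^{-1}-X_2^{-1} = X_2^{-1}\p{X_2-X_1}X_1^{-1}$, checked by expanding the right-hand side, to rewrite the term as $\p{\alpha\trans X_2^{-1}}\p{X_2-X_1}\p{X_1^{-1}\beta_2}$. Reading off $\alpha\trans X_2^{-1} = \p{\p{X_2\trans}^{-1}\alpha}\trans$ and setting $p = \p{X_2\trans}^{-1}\alpha$, $r = X_1^{-1}\beta_2$, this is the bilinear form $p\trans\p{X_2-X_1}r = \sum_{i,j} p_i\p{X_{2,ij}-X_{1,ij}}r_j$. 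I would then bound it entrywise by the triangle inequality, pulling the two sup-norms out of the sum:
\[
\abs{p\trans\p{X_2-X_1}r} \leq \Norm{p}_\infty\,\Norm{r}_\infty\sum_{i,j}\abs{X_{1,ij}-X_{2,ij}} .
\]
Combining the two contributions yields the stated inequality.

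The main point requiring care is the bookkeeping in the resolvent step: the factorization must place $X_2^{-1}$ on the left so that $\alpha$ is paired with $\p{X_2\trans}^{-1}$ (the alternative factorization $X_1^{-1}(X_2-X_1)X_2^{-1}$ would pair the vectors with the other matrices), and one must correctly convert the row vector $\alpha\trans X_2^{-1}$ into the column vector $\p{X_2\trans}^{-1}\alpha$ when identifying the sup-norm factor. Once the bilinear form is expanded entrywise, the remaining triangle-inequality estimate is routine.
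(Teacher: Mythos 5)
Your proof is correct and follows essentially the same route as the paper's: the same telescoping through $\alpha\trans X_1^{-1}\beta_2$, the operator-norm bound for the first term, and the resolvent identity plus an entrywise sup-norm estimate for the second (the paper merely states the decomposition and leaves the rest to the reader). Note that your argument naturally produces the factor $\Norm{X_1^{-1}\beta_2}_{\infty}$ rather than the $\Norm{(X_1\trans)^{-1}\beta_2}_{\infty}$ appearing in the lemma statement; this appears to be a transposition typo in the statement (the paper's own application of the lemma uses $\Norm{X_1^{-1}\beta_2}_{\infty}$), so your version is the one actually needed.
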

\begin{proof}
The result follows easily from the decomposition of $\abs{a\trans X_1^{-1} \beta_1 - \alpha\trans X_2^{-1} \beta_2}$:
\[
\begin{split}
\abs{\alpha\trans X_1^{-1} \beta_1 - \alpha\trans X_2^{-1} \beta_2}
&= \alpha \trans X_1^{-1} (\beta_1 - \beta_2) + \alpha \trans (X_1^{-1} - X_2^{-1}) \beta_2\\
&=  \alpha \trans X_1^{-1} (\beta_1 - \beta_2) + \alpha \trans (X_2\trans)^{-1} (X_2 - X_1) X_1^{-1} \beta_2.
\end{split}
\]
\end{proof}

\begin{lemm}
Under Assumption \ref{assu:MDP} - \ref{assu:contraction}, 
\[
\EE{\p{\frac{1}{T} \sum_{t = 1}^T Y_{it} - \EE[\mu\p{\pi}]{ Y_{it}}}^2} \leq \frac{2(1+C)}{C(1-C)T}.  
\]
\label{lemma:p_hat_converge}
\end{lemm}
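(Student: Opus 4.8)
The plan is to invoke the exponential-ergodicity bound of Lemma \ref{lemma:markov_exp_ergodic} with the Markov chain $B_t = Y_t$, the scalar functional $A_t = Y_{it}$ (which happens to depend on $B_t$ alone), and stationary mean $\alpha = \EE[\mu(\pi)]{Y_{it}}$. To do so I must supply the two ingredients required by the second part of that lemma: a uniform bound on the discrepancy between the original chain and a stationary copy, and a geometric-in-$t$ bound on the \emph{expected} discrepancy. Everything then reduces to reading off the constants.

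Concretely, I would run two coupled copies of the MDP, $Y^1_t$ started from the (arbitrary) initial law and $Y^2_t$ started from $Y^2_0 \sim \mu(\pi)$, driven by the same treatment vectors $W_t$ and the same uniform seeds $U_{it}$, exactly as in the coupling of Lemma \ref{lemma:L1_contraction}. Writing $A^1_t = Y^1_{it}$ and $A^2_t = Y^2_{it}$, the uniform bound $\abs{A^1_t - A^2_t} \leq 1$ is immediate since both are $\cb{0,1}$-valued, so I may take $C_1 = 1$. Because $Y^2_0 \sim \mu(\pi)$ is stationary, $\EE{A^2_t} = \alpha$ for every $t$, and hence $\abs{\EE{A^1_t - A^2_t}} = \abs{\EE{Y^1_{it}} - \alpha} \leq \EE{\abs{Y^1_{it} - Y^2_{it}}}$.

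The one step needing a short argument is the geometric bound on $\EE{\abs{Y^1_{it} - Y^2_{it}}}$: I want a \emph{per-coordinate} version of the contraction, since the stated $L_1$ contraction of Lemma \ref{lemma:L1_contraction} only controls the sum $\sum_i \EE{\abs{Y^1_{it} - Y^2_{it}}}$. This is obtained by keeping the per-coordinate inequality that already appears inside the proof of that lemma, namely $\EE{\abs{Y^1_{i(t+1)} - Y^2_{i(t+1)}}} \leq L_n \sum_{j \in \ngb_i}\EE{\abs{Y^1_{jt} - Y^2_{jt}}} + B\,\EE{\abs{Y^1_{it} - Y^2_{it}}}$, and then setting $G_t = \max_i \EE{\abs{Y^1_{it} - Y^2_{it}}}$. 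Bounding each neighbor sum by $D_n G_t$ gives $G_{t+1} \leq (L_n D_n + B) G_t \leq C G_t$ by Assumption \ref{assu:contraction}, and since $G_0 \leq 1$ this yields $\EE{\abs{Y^1_{it} - Y^2_{it}}} \leq G_t \leq C^t$. Thus the second hypothesis of Lemma \ref{lemma:markov_exp_ergodic} holds with $C_2 = 1$ and $\theta = C$.

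With $C_1 = C_2 = 1$ and $\theta = C$, the second part of Lemma \ref{lemma:markov_exp_ergodic} gives $\EE{\p{\frac{1}{T}\sum_{t=1}^T Y_{it} - \alpha}^2} \leq \frac{(1+\theta)C_1(C_1+C_2)}{\theta(1-\theta)T} = \frac{2(1+C)}{C(1-C)T}$, which is exactly the claimed bound. The only content beyond bookkeeping is the max-over-coordinates strengthening of the contraction, so that is the step I would treat most carefully; the rest is direct substitution into Lemma \ref{lemma:markov_exp_ergodic}.
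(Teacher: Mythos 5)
Your proof is correct and follows essentially the same route as the paper: couple the chain with a stationary copy, establish the per-coordinate geometric decay $\abs{\EE{Y^1_{it}-Y^2_{it}}}\leq C^t$, and feed $C_1=C_2=1$, $\theta=C$ into Lemma \ref{lemma:markov_exp_ergodic} to read off the constant $2(1+C)/(C(1-C)T)$. The only difference is that you derive the geometric decay from a direct max-norm recursion $G_{t+1}\leq (B+L_nD_n)G_t$ with $G_t=\max_i\EE{\abs{Y^1_{it}-Y^2_{it}}}$, whereas the paper routes it through the $d_E$-contraction of Lemma \ref{lemma:contraction_d_e}; your version is if anything slightly more careful, since it explicitly retains the $B$-term in the one-step bound.
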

\begin{proof}
We will make use of Lemma \ref{lemma:markov_exp_ergodic}. 
Let $X_t$ and $Y_t$ be two processes satisfying Assumption \ref{assu:MDP} and \ref{assu:Bern}. Assume that the initial distribution of $X$ is the stationary distribution $\mu(\pi)$. Lemma \ref{lemma:contraction_d_e} implies that $W_{d_E}(\law(X_t), \law(Y_t)) \leq C^t W_{d_E}(\law(X_0), \law(Y_0))$ for any $t \geq 0$, where $C<1$ is the constant in Assumption \ref{assu:contraction}. In particular, there exists a coupling of the processes $X$ and $Y$ such that $d_E(X_t, Y_t) \leq C^t W_{d_E}(\law(X_0), \law(Y_0))$ for any $t$. Note that we can further bound $C^t W_{d_E}(\law(X_0), \law(Y_0))$ by $C^t D_n$. 
With such coupling, we can bound the difference between $\EE{ Y_{it}}$ and $\EE{X_{it}}$. In particular, 
\[
\begin{split}
  \abs{\EE{Y_{it}} - \EE[\pi]{X_{it}}}
=\abs{\EE{X_{it} - Y_{it}}}
 \leq  L_n d_E(X_{t-1}, Y_{t-1}) \leq C^{t-1} L_n D_n \leq C^t. 
\end{split}
\]
Thus Lemma \ref{lemma:markov_exp_ergodic} implies that 
\[
\EE{\p{\frac{1}{T} \sum_{t = 1}^T Y_{it} - \EE[\mu\p{\pi}]{ Y_{it}}}^2} \leq \frac{2(1+C)}{C(1-C)T}.  
\]
\end{proof}

\begin{lemm}
\label{lemma:distance_between_mu_ps}
Under Assumptions \ref{assu:MDP} - \ref{assu:contraction}, for any $y, w \in \cb{0,1}$, 
\begin{enumerate}
\item 
\[ \abs{\EE[\mu(\pi)]{f_i(y,w,Z_i)} -  f_i(y,w, \Qs(\pi))} \leq \sqrt{L_n} C/(2(1-C)).\]
\item For any $g \in \cb{a_1, \dots, d_i}$\[ \abs{\EE[\mu(\pi)]{g(Z_i)} -  g(\Qs(\pi))} \leq 2\sqrt{L_n} C/(1-C).\]
\item \[ \abs{\EE[\mu(\pi)]{Y_i} - \Ps(\pi) } \leq C_1 \sqrt{L_n},\]
 for some constant $C_1$. 
\end{enumerate}
\end{lemm}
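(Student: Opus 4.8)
The plan is to prove the three bounds in order, using the first as the engine for the other two; all three are $O(\sqrt{L_n})$ statements whose quantitative input is Theorem \ref{theo:mean_field2} together with the crude concentration of a sum of independent Bernoullis. For part 1, I would first invoke the Lipschitz property (Assumption \ref{assu:bound_fi}) to reduce the claim to a bound on $\EE[\mu(\pi)]{\abs{Z_i - \Qs_i}}$, since
\[
\abs{\EE[\mu(\pi)]{f_i(y,w,Z_i)} - f_i(y,w,\Qs_i)} \le L_n\, \EE[\mu(\pi)]{\abs{Z_i - \Qs_i}}.
\]
To control the right-hand side, let $\Ys_j \sim \ber(\Ps_j)$ independently, so that $\Qs_i = \sum_{j\in\ngb_i}\Ps_j = \EE{\sum_{j\in\ngb_i}\Ys_j}$, and take the coupling of $\mu(\pi)$ and $\law(\Ys)$ attaining $W_{d_E}$. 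Writing $Z_i - \Qs_i = \sum_{j\in\ngb_i}(Y_j - \Ys_j) + \sum_{j\in\ngb_i}(\Ys_j - \Ps_j)$ and applying the triangle inequality splits the error into a coupling term bounded by $d_E(Y,\Ys) = W_{d_E}(\mu,\law(\Ys)) \le \sqrt{D_n}\,C/(2(1-C))$ via Theorem \ref{theo:mean_field2}, and a fluctuation term bounded by $\sqrt{\operatorname{Var}\p{\sum_{j\in\ngb_i}\Ys_j}} \le \sqrt{D_n}/2$. Summing gives $\EE[\mu(\pi)]{\abs{Z_i-\Qs_i}} \le \sqrt{D_n}/(2(1-C))$; multiplying by $L_n$ and using $L_n D_n \le C$ (Assumption \ref{assu:contraction}) to replace $L_n\sqrt{D_n}$ by $\sqrt{L_n}$ up to a constant yields the stated $O(\sqrt{L_n})$ bound.

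For part 2, I would simply expand each of $a_i,\dots,d_i$ as a signed sum of the four values $f_i(y,w,\cdot)$, $y,w\in\cb{0,1}$, using the decomposition \eqref{eqn:f_decomposition}: $a_i = f_i(0,0,\cdot)$ is a single term, $b_i$ and $c_i$ are differences of two, and $d_i$ is an alternating sum of all four. Applying part 1 to each summand and summing via the triangle inequality gives the claim, the worst case (the four-term expression for $d_i$) producing the factor $2$ in the stated constant.

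For part 3, I would start from stationarity: since $Y\sim\mu(\pi)$ and the treatment $W_i\sim\ber(\pi_i)$ is drawn independently of $(Y_i,Z_i)$,
\[
\EE[\mu(\pi)]{Y_i} = \EE[\mu(\pi)]{f_i(Y_i,W_i,Z_i)} = \EE[\mu(\pi)]{a_i(Z_i)} + \pi_i\EE[\mu(\pi)]{b_i(Z_i)} + \EE[\mu(\pi)]{c_i(Z_i)Y_i} + \pi_i\EE[\mu(\pi)]{d_i(Z_i)Y_i},
\]
which I would compare term-by-term with the fixed-point identity $\Ps_i = a_i(\Qs_i) + b_i(\Qs_i)\pi_i + c_i(\Qs_i)\Ps_i + d_i(\Qs_i)\pi_i\Ps_i$. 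The two non-product terms are handled directly by part 2. For each product term I would decouple the Lipschitz factor from $Y_i$, e.g. $\EE{c_i(Z_i)Y_i} = c_i(\Qs_i)\EE{Y_i} + \EE{(c_i(Z_i) - c_i(\Qs_i))Y_i}$, where the remainder is bounded, using $Y_i\in\cb{0,1}$ and the $O(L_n)$-Lipschitzness of $c_i$, by $O(L_n)\,\EE[\mu(\pi)]{\abs{Z_i-\Qs_i}} = O(\sqrt{L_n})$ through the part-1 estimate. Collecting terms, $\Delta_i := \EE[\mu(\pi)]{Y_i} - \Ps_i$ satisfies $\Delta_i = \p{c_i(\Qs_i) + \pi_i d_i(\Qs_i)}\Delta_i + O(\sqrt{L_n})$; since $\abs{c_i(\Qs_i) + \pi_i d_i(\Qs_i)} \le B < 1$ by Assumption \ref{assu:bound_fi} (writing $c_i + \pi_i d_i$ as a convex combination of the $w=0$ and $w=1$ values), I would solve to obtain $\abs{\Delta_i} \le O(\sqrt{L_n})/(1-B)$, which is the claim.

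The main obstacle is the product terms in part 3. Under $\mu(\pi)$ the factors $c_i(Z_i)$ and $Y_i$ are correlated, so the expectation does not factor; the decoupling step works only because replacing the slowly varying factor $c_i(Z_i)$ by the constant $c_i(\Qs_i)$ costs merely $O(\sqrt{L_n})$ (Lipschitzness plus the part-1 bound on $\EE[\mu(\pi)]{\abs{Z_i-\Qs_i}}$), after which the resulting identity becomes self-referential in $\Delta_i$ and is closed using the contraction $B<1$. Everything else is bookkeeping of constants, where the one point requiring care is the conversion of $L_n\sqrt{D_n}$ into $\sqrt{L_n}$ via $L_nD_n\le C$.
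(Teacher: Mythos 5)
Your proposal is correct and follows the same overall route as the paper: part 1 via Lipschitzness plus a bound on $\EE[\mu(\pi)]{\abs{Z_i - \Qs_i}}$ obtained from the $W_{d_E}$ coupling of Theorem \ref{theo:mean_field2} together with the Bernoulli variance bound; part 2 by expressing $a_i,\dots,d_i$ as signed sums of the $f_i(y,w,\cdot)$ and applying part 1 termwise; part 3 by comparing the stationarity identity with the fixed-point equation and closing the resulting self-referential relation using $\abs{c_i + \pi_i d_i}\leq B<1$. The one place you genuinely deviate is the cross term in part 3: the paper writes $\EE[\mu(\pi)]{c_i(Z_i)Y_i}\approx\EE[\mu(\pi)]{c_i(Z_i)}\EE[\mu(\pi)]{Y_i}$ and controls the covariance via the $L_1$-fluctuation bound \eqref{eqn:corr_c_i_Y} (itself of order $L_n^2 D_n^{3/2}$, proved with the finer $d_{E}$/independence structure), whereas you replace $c_i(Z_i)$ by the deterministic $c_i(\Qs_i)$ and pay only Lipschitzness times $\EE[\mu(\pi)]{\abs{Z_i-\Qs_i}} = \oo(\sqrt{D_n})$, i.e.\ $\oo(L_n\sqrt{D_n}) = \oo(\sqrt{L_n})$. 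Your version is slightly more direct and self-contained (it does not import a bound from the proof of Theorem \ref{theo:LDE_char}), at the cost of a marginally larger constant; both yield the stated $\oo(\sqrt{L_n})$ rate. The only blemish is cosmetic: your part-1 chain gives $\sqrt{L_n}\sqrt{C}/(2(1-C))$ rather than the lemma's $\sqrt{L_n}\,C/(2(1-C))$ (the paper's own intermediate step has the same looseness), which does not affect any downstream use since all consumers only need the $\oo(\sqrt{L_n})$ order.
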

\begin{proof}
For 1, we have that
\[ 
\begin{split}
& \abs{\EE[\mu(\pi)]{f_i(y,w,Z_i)} -  f_i(y,w, \Qs(\pi))}
 \leq L_n \EE[\mu(\pi)]{\abs{Z_i - \Qs(\pi)}}\\
&\qquad \qquad \leq L_n \sqrt{D_n} \sqrt{C}/(2(1-C)) 
\leq \sqrt{L_n} C/(2(1-C)),
\end{split}
\]
where the second inequality follows from Theorem \ref{theo:mean_field2}. Then 2 is then a direct corollary of 1. 

For 3, we note that 
\[
\begin{split}
\EE[\mu(\pi)]{Y_i}
&= \EE[\mu(\pi)]{a_i(Z_i)} + \pi_i \EE[\mu(\pi)]{b_i(Z_i)}  + \EE[\mu(\pi)]{c_i(Z_i) Y_i} + \pi \EE[\mu(\pi)]{d_i(Z_i) Y_i} \\
& = \EE[\mu(\pi)]{a_i(Z_i)} + \pi_i \EE[\mu(\pi)]{b_i(Z_i)}  + \EE[\mu(\pi)]{c_i(Z_i)}\EE[\mu(\pi)]{Y_i} + \pi_i \EE[\mu(\pi)]{d_i(Z_i)}\EE[\mu(\pi)]{Y_i} + \operatorname{error},
\end{split}
\]
where $\abs{\operatorname{error}} \leq \Cov[\mu(\pi)]{c_i(Z_i) + \pi_i d_i(Z_i), Y_i} \leq \EE[\mu]{\abs{ c_i(Z_i) +d_i(Z_i)  \pi_i  - \EE[\mu]{c_i(Z_i)  + d_i(Z_i)  \pi_i}  }} \leq 3L_n^2 D_n^{\frac{3}{2}}/(1-C) \leq 3C^{\frac{3}{2}}\sqrt{L_n}/(1-C)$ by \eqref{eqn:corr_c_i_Y}. Therefore,
\[
\begin{split}
\EE[\mu(\pi)]{Y_i} = \frac{\EE[\mu(\pi)]{a_i(Z_i)} + \pi_i \EE[\mu(\pi)]{b_i(Z_i)} + \operatorname{error}}{1 - \EE[\mu(\pi)]{c_i(Z_i)} - \pi_i \EE[\mu(\pi)]{d_i(Z_i)}}
= \frac{a_i(\Qs(\pi)) + \pi_i b_i(\Qs(\pi)) }{1 - c_i(\Qs(\pi)) - \pi_i d_i(\Qs(\pi)) } + \operatorname{error}_1,
\end{split}
\]
where $\abs{\operatorname{error}_1} \leq C_1\sqrt{L_n}$ for some constant $C_1$. The inequality is a result of the second part of the lemma and the fact that $c_i(Z_i) + \pi_i d_i(Z_i) \leq C < 1$. Finally, we note that since 
\[\Ps_{i}(\pi) = a_i\p{\Qs_{i}(\pi)} + b_i\p{\Qs_{i}(\pi)} \pi_i +  c_i\p{\Qs_{i}(\pi)} \Ps_{i}(\pi) + d_i\p{\Qs_{i}(\pi)} \pi_i \Ps_{i}(\pi), \]
\[\Ps_{i}(\pi) = \frac{a_i(\Qs(\pi)) + \pi_i b_i(\Qs(\pi)) }{1 - c_i(\Qs(\pi)) - \pi_i d_i(\Qs(\pi)) }.\]
The desired result then follows.

\end{proof}

\subsection{Proof of Proposition \ref{prop:MDP_stationary}}
The existence of a stationary distribution follows directly from the fact that the induced Markov chain is finite-state and time homogenous. 

We show the uniqueness of the stationary distribution using proof by contradiction. Assume that there are two different stationary distributions $\mu_1$ and $\mu_2$. Then the $L_1$-Wasserstein distance defined in \eqref{eqn:wasser_L1} between the two measures are not zero: $W_{L_1}(\mu_1, \mu_2) > 0$. Let $X_t$ and $Y_t$ be two processes satisfying Assumptions  \ref{assu:MDP} and \ref{assu:Bern}. Furthermore, assume that $X_t \sim \mu_1$ and $Y_t \sim \mu_2$. Then by Lemma \ref{lemma:L1_contraction},
\[ W_{L_1}(\mu_1, \mu_2) = W_{L_1}(\law(X_{t+1}), \law(Y_{t+1}))
\leq C W_{L_1}(\law(X_{t}), \law(Y_{t})) = W_{L_1}(\mu_1, \mu_2),
 \]
which is impossible. This implies that our assumption that there are two different stationary distributions is wrong. Thus the stationary distribution is unique.

To show convergence in distribution, we will again make use of Lemma \ref{lemma:L1_contraction}. Let $\mu$ be the unique stationary distribution. Let $X_t$ and $Y_t$ be two processes satisfying Assumptions  \ref{assu:MDP} and \ref{assu:Bern}. Furthermore, assume that $X_0 \sim \mu$. Then by Lemma \ref{lemma:L1_contraction},
\[
\begin{split}
 W_{L_1}(\mu, \law(Y_{t})) 
&= W_{L_1}(\law(X_{t}), \law(Y_{t}))
\leq C W_{L_1}(\law(X_{t-1}), \law(Y_{t-1}))
\leq \dots\\
&\leq C^t W_{L_1}(\law(X_0), \law(Y_0))
\leq C^t W_{L_1}(\mu, \law(Y_0)).
\end{split}
 \]
Therefore, $ W_{L_1}(\mu, \law(Y_{t})) \to 0$ as $t \to \infty$. Hence $Y_t \Rightarrow \mu$.

\subsection{Proof of Proposition \ref{prop:system_fixed}}
The existence of the fixed point follows directly from Brouwer's fixed-point theorem. 

For uniqueness, like in the proof of Proposition \ref{prop:MDP_stationary}, we use proof by contradiction. Assume that there are two different fixed point $\Ps_1$ and $\Ps_2$. Then for any $i$,
\begin{equation}
\label{eqn:dyn_contraction1}
\begin{split}
\abs{\Ps_{1,i} - \Ps_{2,i}} &= 
\abs{f_i(\Ps_{1,i}, W_i, \Qs_{1,i})
- f_i(\Ps_{2,i}, W_i, \Qs_{2,i})}
\leq B \abs{\Ps_{1,i} - \Ps_{2,i}} + L_n \abs{\Qs_{1,i} - \Qs_{2,i}}\\
& \leq B \abs{\Ps_{1,i} - \Ps_{2,i}} + L_n \sum_{j \in \ngb_i} \abs{\Ps_{1,j} - \Ps_{2,j}}. 
\end{split}
\end{equation}
This then implies that
\begin{equation}
\label{eqn:dyn_contraction2}
\begin{split}
\sum_{i =1}^n \abs{\Ps_{1,i} - \Ps_{2,i}}
&\leq \sum_{i =1}^n\abs{\Ps_{1,i} - \Ps_{2,i}} + L_n \sum_{i =1}^n \sum_{j \in \ngb_i} \abs{\Ps_{1,j} - \Ps_{2,j}}\\
&= \sum_{i =1}^n\abs{\Ps_{1,i} - \Ps_{2,i}} + L_n \sum_{j =1}^n \sum_{i \in \ngb_j} \abs{\Ps_{1,j} - \Ps_{2,j}}\\
&\leq \sum_{i =1}^n\abs{\Ps_{1,i} - \Ps_{2,i}} + L_n D_n \sum_{j =1}^n \abs{\Ps_{1,j} - \Ps_{2,j}}
\leq C \sum_{i =1}^n \abs{\Ps_{1,i} - \Ps_{2,i}}. 
\end{split}
\end{equation}
But this cannot be true for two different fixed points $\Ps_1$ and $\Ps_2$. Thus, the fixed point must be unique.

To show the convergence, we note that for any process $P_t$ satisfying \eqref{eqn:dynamic_system}, we have 
\[
\begin{split}
\sum_{i = 1}^n \abs{P_{t,i} - \Ps_i}
\leq C \sum_{i = 1}^n \abs{P_{t-1,i} - \Ps_i}
\leq \dots
\leq C^t \sum_{i = 1}^n \abs{P_{0,i} - \Ps_i}. 
\end{split}
 \]
Here the inequality follows from the same arguments as in \eqref{eqn:dyn_contraction1} and \eqref{eqn:dyn_contraction2}. 
Thus $\sum_{i = 1}^n \abs{P_{t,i} - \Ps_i} \to 0$ as $t \to \infty$. Hence, $P_t \to \Ps$ as $t \to \infty$. 

\subsection{The distances defined in \eqref{eqn:wasser_L1}, \eqref{eqn:dek_def} and \eqref{eqn:wasser_dek}}
\label{subsection:triangular}
In this section, we show that the distances defined in \eqref{eqn:wasser_L1}, \eqref{eqn:dek_def} and \eqref{eqn:wasser_dek} are well-defined metrics. They clearly satisfy the identity of indiscernibles and symmetry. We will show triangular inequalities.

For $d_{E,k}$, let $X,Y,Z$ be three random vectors in $\cb{0,1}^n$. Then
\[
\begin{split}
d_{E,k}(X,Y) + d_{E,k}(Y,Z)
& = \max_i \bigg(\mathbb{E}\bigg[ \Big(\sum_{j \in \mathcal{N}_i} \abs{X_j - Y_j}\Big)^k\bigg]\bigg)^{\frac{1}{k}}
+ \max_i  \bigg(\mathbb{E}\bigg[ \Big(\sum_{j \in \mathcal{N}_i} \abs{Y_j - Z_j}\Big)^k\bigg]\bigg)^{\frac{1}{k}}\\
& \geq \max_i \cb{ \bigg(\mathbb{E}\bigg[ \Big(\sum_{j \in \mathcal{N}_i} \abs{X_j - Y_j}\Big)^k\bigg]\bigg)^{\frac{1}{k}}
+ \bigg(\mathbb{E}\bigg[ \Big(\sum_{j \in \mathcal{N}_i} \abs{Y_j - Z_j}\Big)^k\bigg]\bigg)^{\frac{1}{k}} }\\
& \geq \max_i \cb{ \bigg(\mathbb{E}\bigg[ \Big(\sum_{j \in \mathcal{N}_i} \abs{X_j - Y_j} + \sum_{j \in \mathcal{N}_i} \abs{Y_j - Z_j}\Big)^k\bigg]\bigg)^{\frac{1}{k}} }\\
& \geq \max_i \cb{ \bigg(\mathbb{E}\bigg[ \Big(\sum_{j \in \mathcal{N}_i} \abs{X_j - Z_j}\Big)^k\bigg]\bigg)^{\frac{1}{k}} }  = d_{E,k}(X,Z),
\end{split}
\]   
where the third line follows from the fact that $\EE{(\cdot)^k}^{\frac{1}{k}}$ is a well-defined norm. Finally, $W_{L_1}$ is the first Wasserstein distance corresponding to the $L_1$ norm, while $W_{d_{E,k}}$ is the first Wasserstein distance corresponding to the $d_{E,k}$ metric. Thus the triangular inequalities for them follow from standard analysis for the Wasserstein metric.

\subsection{Proof of Theorem \ref{theo:mean_field} and Theorem \ref{theo:mean_field2}}
Assume $X_t$ and $Y_t$ are two processes satisfying Assumptions \ref{assu:MDP} and \ref{assu:Bern}. Suppose the initial distributions of the two are different: $Y_{0i} \sim \operatorname{Ber}(\Ps_i)$ independently, while $X_0 \sim \mu(\pi)$. With this new set of notations, it suffices to show that $W_{L_1}\p{\law(X_0), \law(Y_0)} \leq n L_n \sqrt{D_n}/(2(1-C))$ and $W_{d_E}\p{\law(X_0), \law(Y_0)} \leq L_n D_n^{\frac{3}{2}}/(2(1-C))$. 

For $W_{L_1}$, we have
\begin{align*}
W_{L_1}\p{ \law(Y_0), \law(Y_1)}
&\geq W_{L_1}\p{ \law(Y_0), \law(X_0)} - W_{L_1}\p{ \law(Y_1), \law(X_0)}\\
& = W_{L_1}\p{ \law(Y_0), \law(X_0)} - W_{L_1}\p{ \law(Y_1), \law(X_1)}\\
& \geq W_{L_1}\p{ \law(Y_0), \law(X_0)} - C W_{L_1}\p{ \law(Y_0), \law(X_0)}\\
& = (1 - C) W_{L_1}\p{ \law(Y_0), \law(X_0)},
\end{align*}
where the first line follows from the triangular inequality, second line follows from the fact that $X_0 \sim \mu(\pi)$ is the stationary distribution, and the third line follows from Lemma \ref{lemma:L1_contraction}.   
Thus by Lemma \ref{lemma:close_dyn_MDP}, 
\[W_{L_1}\sqb{ \law(X_0), \law(Y_0} \leq \frac{1}{1 - C} W_{L_1}\p{ \law(Y_0), \law(Y_1)} \leq  n L_n \sqrt{D_n}/(2(1-C)). \]

Similarly, for $W_{d_E}$, 
\begin{align*}
W_{d_E}\p{ \law(Y_0), \law(Y_1)}
&\geq W_{d_E}\p{ \law(Y_0), \law(X_0)} - W_{d_E}\p{ \law(Y_1), \law(X_0)}\\
& = W_{d_E}\p{ \law(Y_0), \law(X_0)} - W_{d_E}\p{ \law(Y_1), \law(X_1)}\\
& \geq W_{d_E}\p{ \law(Y_0), \law(X_0)} - C W_{d_E}\p{ \law(Y_0), \law(X_0)}\\
& = (1 - C) W_{d_E}\p{ \law(Y_0), \law(X_0)},
\end{align*}
where the first line follows from the triangular inequality, second line follows from the fact that $X_0 \sim \mu(\pi)$ is the stationary distribution, and the third line follows from Lemma \ref{lemma:contraction_d_e}.    
Thus by Lemma \ref{lemma:close_dyn_MDP}, 
\[W_{d_E}\p{\law(X_0), \law(Y_0)} \leq \frac{1}{1 - C} W_{d_E}\p{ \law(Y_0), \law(Y_1)} \leq L_n D_n^{\frac{3}{2}}/(2(1-C)). \]

For $W_{d_{E,3}}$, 
\begin{align*}
W_{d_{E,3}}\p{ \law(Y_0), \law(Y_1)}
&\geq W_{d_{E,3}}\p{ \law(Y_0), \law(X_0)} - W_{d_{E,3}}\p{ \law(Y_1), \law(X_0)}\\
& = W_{d_{E,3}}\p{ \law(Y_0), \law(X_0)} - W_{d_{E,3}}\p{ \law(Y_1), \law(X_1)}\\
& \geq W_{d_{E,3}}\p{ \law(Y_0), \law(X_0)} - C W_{d_{E,3}}\p{ \law(Y_0), \law(X_0)} - 1\\
& = (1 - C) W_{d_{E,3}}\p{ \law(Y_0), \law(X_0)} - 1,
\end{align*}
where the first line follows from the triangular inequality, second line follows from the fact that $X_0 \sim \mu(\pi)$ is the stationary distribution, and the third line follows from Lemma \ref{lemma:contraction_d_e3}.    
Thus by Lemma \ref{lemma:close_dyn_MDP}, 
\[W_{d_{E,3}}\p{\law(X_0), \law(Y_0)} \leq \frac{1}{1 - C} \sqb{W_{d_E}\p{ \law(Y_0), \law(Y_1)} + 1} \leq (2L_n D_n^{\frac{3}{2}} + 1)/(1-C). \]

\subsection{Proof of Theorem \ref{theo:SDE_estimation}}

Simple calculation shows that this estimator $\htau_{\operatorname{IPW}, t}$ is unbiased for $\tau_{\SDE,t}$ conditional on $Y_t$.
\begin{equation}
\begin{split}
\EE{\htau_{\operatorname{IPW}, t} \mid Y_t} &= \frac{1}{n}\sum_{i=1}^n \EE{Y_{i (t+1)} \p{\frac{W_{t}}{\pi_i} - \frac{1 - W_{t}}{1 - \pi_i}} \mid Y_t}\\
& =  \frac{1}{n}\sum_{i=1}^n \EE{Y_{i (t+1)} \frac{W_{t}}{\pi_i} - Y_{i (t+1)} \frac{1 - W_{t}}{1 - \pi_i} \mid Y_t}\\
& = \frac{1}{n}\sum_{i=1}^n \EE{f_{i}(Y_{it}, 1, Z_{it}) \frac{W_{t}}{\pi_i} - f_{i}(Y_{it}, 0, Z_{it})\frac{1 - W_{t}}{1 - \pi_i} \mid Y_t}\\
& = \frac{1}{n}\sum_{i=1}^n f_{i}(Y_{it}, 1, Z_{it}) \EE{\frac{W_{t}}{\pi_i}\mid Y_t} - f_{i}(Y_{it}, 0, Z_{it})\EE{\frac{1 - W_{t}}{1 - \pi_i} \mid Y_t}\\
& = \frac{1}{n}\sum_{i=1}^n f_{i}(Y_{it}, 1, Z_{it}) - f_{i}(Y_{it}, 0, Z_{it}) = \tau_{\SDE, t}.
\end{split}
\end{equation}
Then we bound the conditional variance of $\htau_{\operatorname{IPW}, t}$ given $Y_t$. Note that 
\begin{equation}
\begin{split}
&\Var{Y_{i(t+1)}  \p{\frac{W_{t}}{\pi_i} - \frac{1 - W_{t}}{1 - \pi_i}} \mid Y_t }\\
& \quad = \Var{\EE{Y_{i(t+1)}  \p{\frac{W_{t}}{\pi_i} - \frac{1 - W_{t}}{1 - \pi_i}}  \mid W_t, Y_t} \mid Y_t}\\
& \qquad\qquad\qquad \qquad \qquad  + \EE{\Var{Y_{i(t+1)}  \p{\frac{W_{t}}{\pi_i} - \frac{1 - W_{t}}{1 - \pi_i}}  \mid W_t, Y_t} \mid Y_t} \\
& \quad = \Var{f_i(Y_{it}, W_{it}, Z_{it}) \p{\frac{W_{t}}{\pi_i} - \frac{1 - W_{t}}{1 - \pi_i}} \mid Y_t}\\
& \qquad\qquad\qquad \qquad \qquad + \EE{f_i(Y_{it}, W_{it}, Z_{it})(1 - f_i(Y_{it}, W_{it}, Z_{it}))\p{\frac{W_{t}}{\pi_i} - \frac{1 - W_{t}}{1 - \pi_i}}^2 \mid Y_t}\\
& \quad \leq 2 \Var{\frac{W_{t}}{\pi_i} - \frac{1 - W_{t}}{1 - \pi_i}} = 2. 
\end{split}
\end{equation}
Since $Y_{i(t+1)}  \p{W_{t}/\pi_i - (1 - W_{t})/(1 - \pi_i)}$ are independent from each other conditioning on $Y_t$, we have that $\Var{\htau_{\operatorname{IPW}, t} \mid Y_t} \leq 2/n$. Therefore, our inverse propensity weighted estimator is consistent of the short-term direct effect. In particular, 
\begin{equation}
\htau_{\operatorname{IPW}} =  \tau_{\SDE, t} + \oo_p\p{\frac{1}{\sqrt{n}}}. 
\end{equation}

\subsection{Proof of Theorem \ref{theo:LDE_char}}
In this section, we often write $a_i(Z_i)$, $b_i(Z_i)$, $c_i(Z_i)$ and $d_i(Z_i)$ as $a_i$, $b_i$, $c_i$ and $d_i$ for short. 

For subject $i$, since $\mu(\pi_i = \gamma, \pi_{-i})$ is the stationary distribution, the expectation satisfies 
\[\EE[\mu(\pi_i = \gamma, \pi_{-i})]{Y_i} = \EE[\mu(\pi_i = \gamma, \pi_{-i})]{a_i + b_i\gamma + \p{c_i + d_i \gamma} Y_i}. \]
Rearranging the terms gives
\begin{equation}
\label{eqn:equil_eqn}
\begin{split}
&\EE[\mu(\pi_i = \gamma, \pi_{-i})]{Y_i} \EE[\mu(\pi_i = \gamma, \pi_{-i})]{1 - c_i - d_i \gamma} \\
&\qquad \qquad =  \EE[\mu(\pi_i = \gamma, \pi_{-i})]{a_i + b_i \gamma}  + \Cov[\mu(\pi_i = \gamma, \pi_{-i})]{c_i + d_i \gamma, Y_i}.
\end{split}
\end{equation}

We will first show that the distribution of $a_i, b_i, c_i$ and $d_i$ is not too different under $\mu(\pi)$ and $\mu(\pi_i = \gamma, \pi_{-i})$. Let $Y$ and $\widetilde{Y}$ be two processes satisfying Assumption \ref{assu:MDP} with the same initial distribution $Y_0 = \widetilde{Y}_0 \sim \mu(\pi)$. 
Assume that at time $0$, they have the same treatment vector except for the $i$-th one, i.e., $\widetilde{W}_j = W_j \sim \operatorname{Ber}(\pi_j)$ for $j \neq i$ while  $\widetilde{W}_i \sim \operatorname{Ber}(\gamma)$ and $W_j \sim \operatorname{Ber}(\pi_j)$ independently. Suppose further that they share the same random seed $U_0 \sim \operatorname{Unif}[0,1]$ and that $Y_{i1} = \mathbbm{1}\p{U_0 \leq f_i(Y_{i0}, W_{i0}, Z_{i0})}$ and $\widetilde{Y}_{i1} = \mathbbm{1}(U_0 \leq f_i(Y_{i0}, \widetilde{W}_{i0}, Z_{i0}))$. Then we can immediately verify that $Y_1$ and $\tilde{Y}_1$ can only differ at index $i$. Thus 
\[
d_E\p{Y_{1} , \widetilde{Y}_{1}} \leq  \EE{\abs{Y_{i1} - \widetilde{Y}_{i1}}} \leq 1,
\]
This further implies that 
\[W_{d_E} \p{ \law(Y_{1}), \law(\widetilde{Y}_{1})} \leq 1. \]
By the same argument as in the proof of Theorem \ref{theo:mean_field2}, we have that
\[W_{L_1} \sqb{ \mu(\pi_i = \gamma, \pi_{-i}), \mu}  \leq 1/(1-C) \textnormal{ and } W_{d_E} \sqb{ \mu(\pi_i = \gamma, \pi_{-i}), \mu}  \leq 1/(1-C) .\]
Thus there exist a coupling of random variables $Y$ and $\breve{Y}$, such that $Y \sim \mu(\pi)$, $\breve{Y} \sim \mu(\pi_i = \gamma, \pi_{-i})$, and $d_E(Y, \breve{Y} ) \leq 1/(1-C)$.  With this coupling, we have
\[ \abs {\EE[\mu(\pi_i = \gamma, \pi_{-i})]{a_i + b_i \gamma} - \EE[\mu]{a_i + b_i \gamma} } \leq 2L_n d_E(Y, \breve{Y} ) \leq \frac{2L_n}{1 - C}. \]
Similarly, $\abs {\EE[\mu(\pi_i = \gamma, \pi_{-i})]{1 - c_i  - d_i \gamma} - \EE[\mu]{1 - c_i - d_i \gamma} } \leq 2L_n d_E(Y, \breve{Y} ) \leq \frac{2L_n}{1 - C}. $

Plugging the above results back into \eqref{eqn:equil_eqn}, we get
\begin{equation}
\label{eqn:LDE_key_equation}
\begin{split}
&\abs{\EE[\mu(\pi_i = \gamma, \pi_{-i})]{Y_i} -  \frac{\EE[\mu]{a_i + b_i \gamma}}{\EE[\mu]{1 - c_i - d_i \gamma}} }\\
&\leq \abs{\frac{\EE[\mu]{a_i + b_i \gamma}}{\EE[\mu]{1 - c_i - d_i \gamma}}  - \frac{\EE[\mu(\pi_i = \gamma, \pi_{-i})]{a_i + b_i \gamma}}{\EE[\mu(\pi_i = \gamma, \pi_{-i})]{1 - c_i - d_i \gamma}}} 
 + \frac{\abs{\Cov[\mu(\pi_i = \gamma, \pi_{-i})]{c_i + d_i \gamma, Y_i}}}{\EE[\mu(\pi_i = \gamma, \pi_{-i})]{1 - c_i - d_i \gamma}}\\
&\leq \frac{2L_n}{1 - C} \p{ \frac{1}{(1-B)^2} + \frac{1}{1-B}} + \frac{1}{1-B}\abs{\Cov[\mu(\pi_i = \gamma, \pi_{-i})]{c_i + d_i \gamma, Y_i}}. 
\end{split}
\end{equation}

To study the covariance term, note that as $Y_i$ is bounded above by 1 and below by 0, the covariance term can be bounded by
\begin{align*}
&\abs{\Cov[\mu(\pi_i = \gamma, \pi_{-i})]{c_i + d_i \gamma, Y_i} }\\
& \qquad \qquad =\abs{ \EE[\mu(\pi_i = \gamma, \pi_{-i})]{ \p{Y_i - \EE[\mu(\pi_i = \gamma, \pi_{-i})]{Y_i}} \p{c_i + d_i \gamma - \EE[\mu(\pi_i = \gamma, \pi_{-i})]{c_i + d_i \gamma}} }}\\
& \qquad \qquad \leq \EE[\mu(\pi_i = \gamma, \pi_{-i})]{\abs{ c_i+d_i \gamma  - \EE[\mu(\pi_i = \gamma, \pi_{-i})]{c_i + d_i \gamma}  }}
\end{align*}
By the same argument as above, this term can be approximated by 
$\EE[\mu]{\abs{ c_i+d_i \gamma  - \EE[\mu]{c_i + d_i \gamma}  }}$ with a difference bounded by $4L_n/(1-C)$. 


We move on to show that $\EE[\mu]{\abs{ c_i+d_i \gamma  - \EE[\mu]{c_i + d_i \gamma}  }}$ is small. One the one hand, we show that $\EE[\mu]{\abs{ c_i+d_i \gamma  - \EE[\mu]{c_i + d_i \gamma}  }}$ is 
close to $\EE[\nu]{\abs{ c_i+d_i \gamma  - \EE[\nu]{c_i + d_i \gamma}  }}$, where under $\nu$, $Y_i \sim \operatorname{Ber}(\Ps_i)$ independently. Recall that $\Ps$ is the fixed point of the system \eqref{eqn:dynamic_system}. Theorem \ref{theo:mean_field2} implies that there exist random vectors $Y$ and $\Ys$, such that $Y \sim \mu$, $Y^* \sim \nu$, and $\max_i \sum_{j \in \mathcal{N}_j} \EE{\abs{Y_j - \Ys_j}} \leq L_n D_n^{\frac{3}{2}}/(2(1-C))$. Then some simple algebra gives 
\begin{align*}
&\abs{\EE[\mu]{\abs{ c_i+d_i \gamma  - \EE[\mu]{c_i + d_i \gamma}  }} 
- \EE[\nu]{\abs{ c_i+d_i \gamma  - \EE[\nu]{c_i + d_i \gamma}  }}}\\
&\qquad \qquad \leq 2\EE{c_i(Z_i) + d_i(Z_i) \gamma - \p{c_i(\Zs_i) + d_i(\Zs_i) \gamma}}\\
&\qquad \qquad \leq 4L_n  \sum_{j \in \ngb_i} \EE{\abs{Y_j - Y_j^*}} \leq  2L_n^2 D_n^{\frac{3}{2}}/(1-C). 
\end{align*}

On the other hand, we note that the term $\EE[\nu]{\abs{ c_i+d_i \gamma  - \EE[\nu]{c_i + d_i \gamma}  }}$ is small. Specifically, under $\nu$, $Y_i$'s are independent. Thus
\[
\begin{split}
\EE[\nu]{\abs{ c_i+d_i \gamma  - \EE[\nu]{c_i + d_i \gamma}  }}
&\leq \Var[\nu]{c_i + d_i \gamma}
\leq 4L_n^2 \operatorname{Var}_{\nu}\Big[\smash{\sum_{j \in \ngb_i} Y_j}\Big]\\
& = 4L_n^2 \sum_{j \in \ngb_i} \Ps_j(1 - \Ps_j) \leq L_n^2 D_n. 
\end{split}
\]
Combining the two results, we get 
\begin{equation}
\label{eqn:corr_c_i_Y}
\EE[\mu]{\abs{ c_i+d_i \gamma  - \EE[\mu]{c_i + d_i \gamma}  }} \leq 3L_n^2 D_n^{\frac{3}{2}}/(1-C).
\end{equation}

Finally, plugging things back into \eqref{eqn:LDE_key_equation}, we get 
\[
\begin{split}
&\abs{\EE[\mu(\pi_i = \gamma, \pi_{-i})]{Y_i} -  \frac{\EE[\mu]{a_i + b_i \gamma}}{\EE[\mu]{1 - c_i - d_i \gamma}} }\\
&\qquad \qquad \leq \frac{2L_n}{1 - C} \p{2 + \frac{1}{(1-B)^2} + \frac{1}{1-B}} + \frac{1}{1-B}3L_n^2 D_n^{\frac{3}{2}}/(1-C)\\
& \qquad \qquad = \oo\p{L_n\sqrt{D_n}} = \pp\p{\sqrt{L_n}},
\end{split}
\]
since $D_n L_n \leq B \leq C < 1$.

\subsection{Proof of Proposition \ref{prop:estimate_LDE}}

We will focus on $\widehat{f_i(0,0)}$. Results for other combinations of $y$ and $w$ can be derived similarly. Note that $f_i(0,0, Z_i) = a(Z_i)$. For notation simplicity and consistency with the text later, we also write $\hat{a}_i = \widehat{f_i(0,0)}$. 

We will take the following three steps in this proof. 
\begin{enumerate}
\item Establish that $\frac{1}{T} \sum_{t = 1}^T (1 - W_{i t}) (1 - Y _{i t})$ converges to $\EE[\mu(\pi)]{(1 - W_{i t}) (1 - Y _{i t})}$. 
\item Establish that $\frac{1}{T} \sum_{t = 1}^T Y_{i(t+1)}(1 - W_{i t}) (1 - Y _{i t})$ converges to $\EE[\mu(\pi)]{Y_{i(t+1)}(1 - W_{i t}) (1 - Y _{i t})}$. 
\item Show that $\EE[\mu(\pi)]{a_i(Z_{it}) \mid Y_{it} = 1}$ is close to $\EE[\mu(\pi)]{a_i(Z_{it})}$. 
\end{enumerate}

We will start with 1. We will make use of Lemma \ref{lemma:markov_exp_ergodic}. 
Let $X_t$ and $Y_t$ be two processes satisfying Assumption \ref{assu:MDP} and \ref{assu:Bern}. Assume that the initial distribution of $X$ is the stationary distribution $\mu(\pi)$. Lemma \ref{lemma:contraction_d_e} implies that $W_{d_E}(\law(X_t), \law(Y_t)) \leq C^t W_{d_E}(\law(X_0), \law(Y_0))$ for any $t \geq 0$, where $C<1$ is the constant in Assumption \ref{assu:contraction}. In particular, there exists a coupling of the processes $X$ and $Y$ such that $d_E(X_t, Y_t) \leq C^t W_{d_E}(\law(X_0), \law(Y_0))$ for any $t$. Note that we can further bound $C^t W_{d_E}(\law(X_0), \law(Y_0))$ by $C^t D_n$. 
With such coupling, we can bound the difference between $\EE{(1 - W_{it})(1 - Y_{it})}$ and $\EE{(1 - W_{it})(1 - X_{it})}$. In particular, 
\[
\begin{split}
&  \abs{\EE{(1 - W_{it})(1 - Y_{it})} - \EE[\pi]{(1 - W_{it})(1 - X_{it})}}
= (1-\pi_i) \abs{\EE{X_{it} - Y_{it}}}\\
&\qquad \qquad \leq (1-\pi_i) L_n d_E(X_{t-1}, Y_{t-1}) \leq C^{t-1} L_n D_n \leq C^t. 
\end{split}
\]
Thus Lemma \ref{lemma:markov_exp_ergodic} implies that 
\begin{equation}
\label{eqn:ergodic_denom}
\EE{\p{\frac{1}{T} \sum_{t = 1}^T (1 - W_{it})(1 - Y_{it}) - \EE[\mu\p{\pi}]{ (1 - W_{it})(1 - Y_{it})}}^2} \leq \frac{2(1+C)}{C(1-C)T}.  
\end{equation}

To show 2, we use similar methods as used in 1. It is slightly more complicated though since $Y_{i(t+1)}$ is not independent of $Y_{it}$. Again, we will make use of Lemma \ref{lemma:contraction_d_e} and Lemma \ref{lemma:markov_exp_ergodic}. Lemma \ref{lemma:contraction_d_e} implies that for any initial distribution $\mu_1$, there exist two processes $X$ and $Y$ satisfying Assumptions \ref{assu:MDP} and \ref{assu:Bern} such that $X_0 \sim \mu(\pi)$, $Y_0 \sim \mu_1$, $d_E(X_t, Y_t) \leq C^t d_E(X_0, Y_0)$ and $\EE{|X_{it} - Y_{it}|} = \abs{\EE{X_{it} - Y_{it}}}$ for any $t \geq 1$,. The last equality comes from the fact that in construction of the coupling, $X$ and $Y$ share the same random seed $U$. Thus,
\[
\begin{split}
& \abs{\EE{Y_{i(t+1)}(1 - W_{it})(1 - Y_{it})}- \EE{X_{i(t+1)}(1 - W_{it})(1 - X_{it})}}\\
&  \qquad \qquad= (1 - \pi_i) \abs{\EE{a_i(Z_{it}) (1 - Y_{it}) - a_i(V_{it})(1 - X_{it})} }\\
&  \qquad \qquad \leq (1 - \pi_i)  \p{\abs{\EE{(a_i(Z_{it}) - a_i(V_{it}))(1 - Y_{it})} } +  \abs{\EE{a_i(V_{it}) ( Y_{it}- X_{it})} } }\\
&  \qquad \qquad \leq \EE{\abs{a_i(Z_{it}) - a_i(V_{it})}}+  \EE{\abs{Y_{it}- X_{it}}}, 
\end{split}
\]
where as usual, $Z_{it} = \sum_{j \in \ngb_i }Y_{it}$ and $V_{it} =  \sum_{j \in \ngb_i }X_{it}$. 
The first term $\EE{\abs{a_i(Z_{it}) - a_i(V_{it})}}$ can be bounded by
\[
\EE{\abs{a_i(Z_{it}) - a_i(V_{it})}} \leq L_n d_E(X_t, Y_t) \leq C^t L_n d_E(X_0, Y_0) \leq C^t L_n D_n \leq C^{t+1}. 
\]
The second term $\EE{\abs{Y_{it}- X_{it}}}$ can be bounded by 
\[\EE{\abs{Y_{it}- X_{it}}} \leq L_n d_E(X_{t-1}, Y_{t-1}) \leq L_n C^{t-1} d_E(X_0, Y_0) \leq L_n D_n C^{t-1} \leq C^t. \]
Combining the two terms, we get 
\[\abs{\EE{Y_{i(t+1)}(1 - W_{it})(1 - Y_{it})}- \EE{X_{i(t+1)}(1 - W_{it})(1 - X_{it})}} \leq (1+C) C^t. \]
Then, by Lemma \ref{lemma:markov_exp_ergodic}, 
\begin{equation}
\label{eqn:ergodic_numer}
 \EE{\p{\frac{1}{T} \sum_{t = 1}^T Y_{i(t+1)}(1 - W_{it})(1 - Y_{it}) - \EE[\mu\p{\pi}]{ Y_{i(t+1)}(1 - W_{it})(1 - Y_{it})}}^2} \leq \frac{(1+C)(2+C)}{C(1-C)T}. 
\end{equation}

The two inequalities \eqref{eqn:ergodic_denom} and \eqref{eqn:ergodic_numer} together show that $\hat{a}_i$ is close to the following quantity
\[
\begin{split}
&\frac{\EE[\mu\p{\pi}]{ Y_{i(t+1)}(1 - W_{it})(1 - Y_{it})}}{\EE[\mu\p{\pi}]{ (1 - W_{it})(1 - Y_{it})}}
= \frac{\EE[\mu\p{\pi}]{ a_i(Z_{it})(1 - W_{it})(1 - Y_{it})}}{\EE[\mu\p{\pi}]{ (1 - W_{it})(1 - Y_{it})}}\\
& \qquad \qquad  = \frac{(1 - \pi_i) \EE[\mu\p{\pi}]{ a_i(Z_{it})(1 - Y_{it})}}{ (1 - \pi_i) \EE[\mu\p{\pi}]{ (1 - Y_{it})}} = \EE[\mu\p{\pi}]{ a_i(Z_{it}) \mid Y_{it} =0 }. 
\end{split}
\]
Specifically, by Lemma \ref{lemma:little_bound_ratio}, we can show that if $\pi_i < 1$ and $\EE[\mu\p{\pi}]{ Y_{it}} < 1$, then
\[ \EE{\p{\hat{a}_i - \EE[\mu\p{\pi}]{a_i(Z_{it} )\mid Y_{it} =0 }}^2} \leq \frac{C_0}{T},  \]
for some constant $C_0$. 

It then remains to study point 3, i.e., to show that $\EE[\mu(\pi)]{a_i(Z_{it}) \mid Y_{it} = 0}$ is close to $\EE[\mu(\pi)]{a_i(Z_{it})}$. Note that 
\[
\begin{split}
& \abs{\EE[\mu(\pi)]{a_i(Z_{it}) \mid Y_{it} = 0} - \EE[\mu(\pi)]{a_i(Z_{it})}} \\
& \qquad \qquad = \abs{\frac{\EE[\mu(\pi)]{a_i(Z_{it}) (1 - Y_{it})} - \EE[\mu(\pi)]{a_i(Z_{it})} \EE[\mu(\pi)]{1 - Y_{it}} }{\EE[\mu(\pi)]{1 - Y_{it}}}}\\
& \qquad \qquad = \frac{\abs{\Cov[\mu(\pi)]{a_i(Z_i), Y_{it}}}}{\EE[\mu(\pi)]{1 - Y_{it}}}. 
\end{split}
\]
To show that $\Cov[\mu(\pi)]{a_i(Z_i), Y_{it}}$ is small, we follow the proof of Theorem \ref{theo:LDE_char}, where we showed that $\abs{\Cov[\mu(\pi_i = \gamma, \pi_{-i})]{c_i + d_i \gamma, Y_i} }$ is small. With the exact same steps (for simplicity, we omit details here), we get $\Cov[\mu(\pi)]{a_i(Z_i), Y_{it}} \leq 2L_n^2 D_n^{\frac{3}{2}}/(1-C) \leq 2 C^{\frac{3}{2}} \sqrt{L_n}/(1-C)$. 

Therefore, 
\[ \EE{\p{\hat{a}_i - \EE[\mu\p{\pi}]{a_i(Z_{it} )}}^2} \leq C_1 \p{\frac{1}{T} + L_n}.  \]

\subsection{Proof of Theorem \ref{theo:LTE_char2}}
We will show that $\frac{1}{n}\sum_{i = 1}^n\p{ \Ps_i(\pi + \Delta\bv) - \Ps_i(\pi)}$ is close to $\frac{\Delta}{n}\sum_{i = 1}^n \p{\nabla_{\pi} \Ps_i(\pi) \trans \bv}$. For notation simplicity, we write $g_i(\Delta) =  \Ps_i(\pi + \Delta\bv)$. 
Our proof consists of three main steps:
\begin{enumerate}
\item Let $g'(\Delta)$ be the vector of $(g'_1(\Delta), \dots, g'_n(\Delta)$. We will show that $\max_{\Delta} \Norm{g'(\Delta)}_{\infty}$ is small.
\item Let $g''(\Delta)$ be the vector of $(g''_1(\Delta), \dots, g''_n(\Delta)$. We will show that $\max_{\Delta} \Norm{g''(\Delta)}_{\infty}$ is small based on the results in the previous step.
\item Using the bound on the second derivative, we will establish that $\frac{1}{n}\sum_{i = 1}^n\p{ \Ps_i(\pi + \Delta\bv) - \Ps_i(\pi) }
\approx \frac{\Delta}{n}\sum_{i = 1}^n \p{\nabla_{\pi} \Ps_i(\pi) \trans \bv}$ and derive \eqref{eqn:LTE_char1}. 
\end{enumerate}

We start with the first step. Taking derivative of both hand sides of \eqref{eqn:fix_point} with respect to $\Delta$ gives
\begin{equation}
\label{eqn:LTE_char_first_dir}
\begin{split}
 g_i'(\Delta)
&= b_i\p{\Qs_{i}(\pi + \Delta \bv)}v_i  + d_i\p{\Qs_{i}(\pi+ \Delta \bv)} \Ps_{i}(\pi+ \Delta \bv) v_i \\
& \qquad \qquad + \sqb{c_i\p{\Qs_{i}(\pi+ \Delta \bv)} + d_i\p{\Qs_{i}(\pi+ \Delta \bv)} (\pi_i+ \Delta v_i)}  g_i'(\Delta) \\
& \qquad \qquad + f_i'(\Ps_i(\pi + \Delta \bv), \pi_i + \Delta v_i, \Qs_i(\pi + \Delta \bv)) \sum_{j \in \ngb_i} g_j'(\Delta). 
\end{split}
\end{equation}
Similar to what we have in \eqref{eqn:first_dir_matrix}, we get that $g'(\Delta)$ can be written as
\begin{equation}
g'(\Delta) = (I - D(\Delta) A - W(\Delta)) ^{-1} \bu(\Delta),  
\end{equation}
where $A$ is the adjacency matrix of the interference graph, $D(\Delta) = \operatorname{diag}\big(f_i'(\Ps_i(\pi + \Delta \bv), \pi_i + \Delta v_i, \Qs_i(\pi + \Delta \bv))\big)$, $W(\Delta) =  \operatorname{diag} \big( c_i\p{\Qs_{i}(\pi+ \Delta\bv)} + d_i\p{\Qs_{i}(\pi+ \Delta\bv)} (\pi_i + \Delta v_i)\big)$, and $\bu(\Delta) = \operatorname{vec} \big( v_i \p{b_i\p{\Qs_{i}(\pi + \Delta\bv)} + d_i\p{\Qs_{i}(\pi+ \Delta\bv)} \Ps_{i}(\pi+ \Delta\bv)}  \big)$.
By Assumption \ref{assu:bound_fi}, we have $\big|f_i'(\Ps_i(\pi + \Delta \bv), \pi_i + \Delta v_i, \Qs_i(\pi + \Delta \bv))\big| \leq L_n$. Thus $D(\Delta)$ is a diagonal matrix, all of whose diagonal entries have their absolute value bounded by $L_n$. For $W(\Delta)$, again by Assumption \ref{assu:bound_fi}, we have $\abs{c_i\p{\Qs_{i}(\pi+ \Delta\bv)} + d_i\p{\Qs_{i}(\pi+ \Delta\bv)} (\pi_i + \Delta v_i)} \leq B$. Thus all of $W(\Delta)$'s diagonal entries have their absolute value bounded by $B$. Therefore, the matrix $M = I - D(\Delta) A - W(\Delta)$ satisfies the following
\[
M_{i,i} \geq 1 - B, \qquad \sum_{j \neq i}\abs{M_{i,j}} \geq - L_n D_n. 
\]
For the vector $\bu(\Delta)$, the absolute value of its $i$-th entry is bounded by $\abs{u_i(\Delta)} = v_i \big| f_i(\Ps_i(\pi+ \Delta\bv), 1, \Qs_{i}(\pi+ \Delta\bv)) - f_i(0, 1, \Qs_{i}(\pi+ \Delta\bv))\big| \leq v_i \leq C_v$. Thus by Lemma \ref{lemma:l_infty_norm}, 
\[ \Norm{g'(\Delta)}_{\infty} \leq C_v/(1 - B - L_n D_n) \leq \frac{C_v}{1 - C}. \]


Now we move on to the second step. Taking derivative of both hand sides of \eqref{eqn:LTE_char_first_dir}, we get
\begin{equation}
\label{eqn:LTE_char_second_dir}
\begin{split}
 g_i''(\Delta)
&= \sqb{c_i\p{\Qs_{i}(\pi+ \Delta \bv)} + d_i\p{\Qs_{i}(\pi+ \Delta \bv)} (\pi_i+ \Delta v_i)}  g_i''(\Delta) \\
& \qquad \qquad + f_i'(\Ps_i(\pi + \Delta \bv), \pi_i + \Delta v_i, \Qs_i(\pi + \Delta \bv)) \sum_{j \in \ngb_i} g_j''(\Delta)\\
&  \qquad \qquad + f_i''(\Ps_i(\pi + \Delta \bv), \pi_i + \Delta v_i, \Qs_i(\pi + \Delta \bv)) \big(\sum_{j \in \ngb_i} g_j'(\Delta)\big)^2\\
& \qquad \qquad +2 d_i\p{\Qs_{i}(\pi+ \Delta \bv)} v_i  g_i'(\Delta)  \\
&  \qquad \qquad +  2\sqb{b_i'\p{\Qs_{i}(\pi + \Delta \bv)} + d_i'\p{\Qs_{i}(\pi+ \Delta \bv)} \Ps_{i}(\pi+ \Delta \bv) } v_i  \sum_{j \in \ngb_i} g_j'(\Delta)\\
& \qquad \qquad + 2\sqb{c_i'\p{\Qs_{i}(\pi+ \Delta \bv)} + d_i'\p{\Qs_{i}(\pi+ \Delta \bv)} (\pi_i+ \Delta v_i)}  g_i'(\Delta) \sum_{j \in \ngb_i} g_j'(\Delta). 
\end{split}
\end{equation}
This then implies that 
\begin{equation}
g'(\Delta) = (I - D(\Delta) A - W(\Delta)) ^{-1} \bxi(\Delta),  
\end{equation}
where $\bxi$ is a vector with entries $\xi_i$ being the sum of the last four lines of \eqref{eqn:LTE_char_second_dir}. We can then bound $\abs{\xi_i}$ by
\[
\begin{split}
\abs{\xi_i} 
&\leq L_{2,n} D_n^2 \Norm{g'(\Delta)}^2_{\infty} + 4 \Norm{g'(\Delta)}_{\infty} + 2L_n D_n \Norm{g'(\Delta)}_{\infty} + 2 L_n D_n (\Norm{g'(\Delta)}_{\infty})^2\\
&\leq C_{\xi 0} (L_{2,n} D_n^2 + L_nD_n + 1)
\leq C_{\xi},
\end{split}
\]
for some constant $C_{\xi 0}$ and $C_{\xi}$ not depending on $n$. Then again, by Lemma \ref{lemma:l_infty_norm}, we have
\[ \Norm{g''(\Delta)}_{\infty} \leq C_{\xi}/(1 - B - L_n D_n) \leq \frac{C_{\xi}}{1 - C}. \]

Finally, we note that the above implies that for any $i \in \cb{1, \dots, n}$, $g_i''(\Delta) \leq \frac{C_{\xi}}{1 - C}$. Therefore by Taylor expansion, we can show that
\[ \abs{ \frac{g_i(\Delta) - g_i(0)}{\Delta} - g'_i(\Delta) }
=  \abs{\frac{g_i(0) + g'_i(0)\Delta + \frac{1}{2}g''_i(\tilde{\Delta}} \Delta^2 - g_i(0)}{\Delta} - g'_i(\Delta)
= \abs{ \frac{1}{2\Delta} g''_i(\tilde{\Delta}) \Delta^2  }
\leq \frac{C_{\xi}}{2(1 - C)} \Delta. 
 \]
Switching back to the $\Ps$ notation, we have that
\[\abs{\p{ \Ps_i(\pi + \Delta\bv) - \Ps_i(\pi) }
- \Delta \p{\nabla_{\pi} \Ps_i(\pi) \trans \bv}} \leq \frac{C_{\xi}}{2(1 - C)} \Delta,\]
and thus 
\[\abs{\frac{1}{n}\sum_{i = 1}^n\p{ \Ps_i(\pi + \Delta\bv) - \Ps_i(\pi) }
- \frac{\Delta}{n}\sum_{i = 1}^n \p{\nabla_{\pi} \Ps_i(\pi) \trans \bv}} \leq \frac{C_{\xi}}{2(1 - C)} \Delta. \]

The rest follows directly from \eqref{eqn:theo_LTE_char0} and \eqref{eqn:theo_LTE_char1}. 

\subsection{Proof of Theorem \ref{theo:LTE_estimation}}
We focus on $f'_i(1,1, \Qs_i(\pi))$. The rest can be shown with the same proof techniques. For $y=1$ and $w=1$, we can write $\widehat{f'_i(y,w)}_{\delta_T}$ in a slightly simpler way:
\begin{equation}
\label{eqn:f_prime_11}
\begin{split}
\widehat{f'_i(1,1)}_{\delta_T} & = \frac{\sum_{t = 1}^T Y_{it} W_{it} \p{Y_{i(t+1)} - \bar{Y_i}(1,1)} \p{Z_{it} - \bar{Z_i}(y,w) }}{D_n T \delta_T \vee \sum_{t = 1}^T Y_{it} W_{it} \p{ Z_{it} -  \bar{Z_i}(1,1) }^2}\\
& = \frac{\sum_{t = 1}^T Y_{it} W_{it} Y_{i(t+1)} Z_{it} - T_{1,1} \bar{Y_i}(1,1) \bar{Z_i}(1,1)}{D_n T \delta_T \vee \p{\sum_{t = 1}^T Y_{it} W_{it} Z_{it}^2 - T_{1,1} \bar{Z_i}(1,1)^2}},
\end{split}
\end{equation}
where $T_{1,1} = \sum_{t = 1}^T Y_{it} W_{it}$,
$\bar{Y_i}(1,1) = \sum_{t = 1}^T Y_{it} W_{it} Y_{i(t+1)} /T_{1,1}$, and
$\bar{Z_i}(1,1) = \sum_{t = 1}^T Z_{it} W_{it} Z_{it} /T_{1,1}$. 

We will follow the following steps in the proof. 
\begin{enumerate}
\item Analyze each term in \eqref{eqn:f_prime_11} and show that $\widehat{f'_i(1,1)}_{\delta_T} $ converges to a limit free of $T$ as $T \to \infty$.
\item Analyze the behavior of the above limit. 
\end{enumerate}

We start with analyzing the behavior of $T_{1,1}$, $\bar{Y_i}(1,1)$ and $\bar{Z_i}(1,1)$. We will make use of Lemma \ref{lemma:contraction_d_e} and Lemma \ref{lemma:markov_exp_ergodic}. Lemma \ref{lemma:contraction_d_e} implies that there exists a process $X$ satisfying Assumptions \ref{assu:MDP} and \ref{assu:Bern} such that $X_0 \sim \mu(\pi)$, $d_E(X_t, Y_t) \leq C^t d_E(X_0, Y_0)$ and $\EE{|X_{it} - Y_{it}|} = \abs{\EE{X_{it} - Y_{it}}}$ for any $t \geq 1$. The last equality comes from the fact that in construction of the coupling, $X$ and $Y$ share the same random seed $U$. With such coupling, we have
\[
\begin{split}
& \abs{\EE{Y_{it}W_{it} - X_{it}W_{it} }} \leq (1 - \pi_i) \EE{\abs{Y_{it}  - X_{it}}},  
\end{split}
\]
where $\EE{\abs{Y_{it} - X_{it}}}$ can be bounded by
\begin{equation}
\label{eqn:bound_E_Y_X}
\EE{\abs{Y_{it} - X_{it}}} \leq L_n d_E(X_{t-1}, Y_{t- 1}) \leq L_n C^{t-1} d_E(X_0, Y_0) \leq L_n D_n C^{t-1} \leq C^t.  
\end{equation}
Therefore, Lemma \ref{lemma:markov_exp_ergodic} implies that 
\begin{equation}
\label{eqn:T11_bound}
\EE{ \p{ \frac{1}{T}\sum_{t= 1}^T Y_{it}W_{it} - \EE[\mu(\pi)]{Y_{it}W_{it} } }}^2 \leq \frac{2}{C(1-C) T}. 
\end{equation}
In words, this shows that $T_{1,1}$ is close to $T \EE[\mu(\pi)]{Y_{it}W_{it} }$. 

Now we can analyze $\bar{Y_i}(1,1)$ similarly. Using the same coupling as above, we get, 
\[
\begin{split}
& \abs{\EE{Y_{it}W_{it}Y_{i(t+1)} - X_{it}W_{it}X_{i(t+1)}}}\\
& \qquad \qquad = \abs{\EE{Y_{it}W_{it}f_i(1,1,Z_{it}) - X_{it}W_{it}f_i(1,1,V_{it})}}\\
& \qquad \qquad = \pi_i \abs{\EE{Y_{it} f_i(1,1,Z_{it}) - X_{it}f_i(1,1,V_{it})}}\\
& \qquad \qquad \leq \pi_i \cb{ \EE{\abs{Y_{it} - X_{it}}} + \EE{\abs{f_i(1,1,Z_{it}) - f_i(1,1,V_{it})}}},
\end{split}
\]
where as usual, $Z_{it} = \sum_{j \in \ngb_i} Y_{it}$ and $V_{it} = \sum_{j \in N_i} X_{it}$. 
We have shown in \eqref{eqn:bound_E_Y_X} that the first term $\EE{\abs{Y_{it} - X_{it}}} \leq C^t$. 
The second term $\EE{\abs{f_i(1,1,Z_{it}) - f_i(1,1,V_{it})}}$ can be bounded 
\begin{equation}
\label{eqn:bound_f_Z_V}
\EE{\abs{f_i(1,1,Z_{it}) - f_i(1,1,V_{it})}} \leq L_n d_E(X_t, Y_t) \leq C^{t} L_n d_E(X_0, Y_0) \leq C^t L_n D_n \leq C^{t+1}.  
\end{equation}
Combining the two terms, we get for any $t \geq 1$, 
\[ \abs{\EE{Y_{it}W_{it}Y_{i(t+1)} - X_{it}W_{it}X_{i(t+1)}}} \leq (1+C) C^t. \]
Therefore, Lemma \ref{lemma:markov_exp_ergodic} implies that 
\begin{equation}
\label{eqn:Y_bar_bound1}
\EE{ \p{ \frac{1}{T}\sum_{t= 1}^T Y_{it}W_{it}Y_{i(t+1)} - \EE[\mu(\pi)]{Y_{it}W_{it}Y_{i(t+1)}}}  }^2 \leq \frac{(1+C)(2+C)}{C(1-C) T}. 
\end{equation}
Applying Lemma \ref{lemma:little_bound_ratio} to \eqref{eqn:Y_bar_bound1} and \eqref{eqn:T11_bound}, we get 
\begin{equation}
\EE{\p{\bar{Y_i}(1,1) - \EE{f_i(1,1,Z_i) \mid Y_1 = 1, W_i = 1}}^2} \leq \p{\frac{1}{ \pi_i \EE[\mu(\pi)]{Y_i}}}^2 \frac{16}{C(1-C)} \frac{1}{T}. 
\end{equation}

We will then analyze $\bar{Z_i}(1,1)$. Again, using the same coupling of $X$ and $Y$, we have
\[
\begin{split}
& \abs{\EE{Y_{it}W_{it}Z_{it} - X_{it}W_{it}V_{it}}}  = \pi_i \abs{\EE{Y_{it} Z_{it} - X_{it}V_{it}}}\\
& \qquad \qquad \leq \pi_i \cb{ D_n \EE{\abs{Y_{it} - X_{it}}} + \EE{\abs{Z_{it} - V_{it} }}},
\end{split}
\]
The first term $\EE{\abs{Y_{it} - X_{it}}} \leq C^t$ by \eqref{eqn:bound_E_Y_X}, while the second term $\EE{\abs{Z_{it} - V_{it} }}$ can be bounded by 
\begin{equation}
\label{eqn:bound_Z_V}
\EE{\abs{Z_{it} - V_{it} }} \leq d_E(X_{t}, Y_{t}) \leq C^t d_E(X_{0}, Y_{0}) \leq C^{t} D_n.  
\end{equation}
Therefore, $ \abs{\EE{Y_{it}W_{it}Z_{it} - X_{it}W_{it}V_{it}}} \leq 2C^t D_n$. 
Therefore, Lemma \ref{lemma:markov_exp_ergodic} implies that 
\begin{equation}
\label{eqn:Z_bar_bound1}
\EE{ \p{ \frac{1}{T}\sum_{t= 1}^T Y_{it}W_{it}Z_{it} - \EE[\mu(\pi)]{Y_{it}W_{it}Z_{it}} } }^2 \leq \frac{3D_n^2}{C(1-C) T}. 
\end{equation}
Applying Lemma \ref{lemma:little_bound_ratio} to \eqref{eqn:Z_bar_bound1} and \eqref{eqn:T11_bound}, we get 
\begin{equation}
\EE{\p{\bar{Z_i}(1,1) - \EE{Z_i \mid Y_1 = 1, W_i = 1}}^2} \leq \p{\frac{1}{ \pi_i \EE[\mu(\pi)]{Y_i Z_i}}}^2 \frac{8 D_n^2}{C(1-C)} \frac{1}{T}.
\end{equation}

Using similar methods, we can also analyze $\sum_{t = 1}^T Y_{it} W_{it} Y_{i(t+1)} Z_{it}$ and $\sum_{t = 1}^T Y_{it} W_{it} Z_{i}^2$. Using the coupling of $X$ and $Y$, we have 
\[
\begin{split}
& \abs{\EE{Y_{it}W_{it}Y_{i(t+1)}Z_{it} - X_{it}W_{it}X_{i(t+1)}V_{it}}}\\
& \qquad \qquad = \abs{\EE{Y_{it}W_{it}f_i(1,1,Z_{it})Z_{it} - X_{it}W_{it}f_i(1,1,V_{it}) V_{it} } }\\
& \qquad \qquad = \pi_i \abs{\EE{Y_{it} f_i(1,1,Z_{it}) Z_{it} - X_{it}f_i(1,1,V_{it}) V_{it}}}\\
& \qquad \qquad \leq \pi_i \cb{ D_n \EE{\abs{Y_{it} - X_{it}}} + D_n \EE{\abs{f_i(1,1,Z_{it}) - f_i(1,1,V_{it})}} + \EE{\abs{Z_{it} - V_{it}}} }\\
& \qquad \qquad \leq \pi_i\cb{D_n C^t + D_n C^{t+1} + D_n C^t}
\leq 3 \pi_i D_n C^t,
\end{split}
\]
where the last line follows from \eqref{eqn:bound_E_Y_X}, \eqref{eqn:bound_f_Z_V} and \eqref{eqn:bound_Z_V}. Thus Lemma \ref{lemma:markov_exp_ergodic} implies that 
\begin{equation}
\label{eqn:YZ_bar_bound1}
\EE{ \p{ \frac{1}{T}\sum_{t= 1}^T Y_{it}W_{it}Y_{i(t+1)}Z_{it} - \EE[\mu(\pi)]{Y_{it}W_{it}Y_{i(t+1)} Z_{it}}}}^2 \leq \frac{4 D_n^2}{C(1-C) T}. 
\end{equation}

For $\sum_{t = 1}^T Y_{it} W_{it} Z_{i}^2$, we have \[
\begin{split}
& \abs{\EE{Y_{it}W_{it}Z_{it}^2 - X_{it}W_{it}X_{i(t+1)}V_{it}^2}}
 = \pi_i \abs{\EE{Y_{it} Z_{it}^2 - X_{it}V_{it}^2}}\\
& \qquad \qquad \leq \pi_i \cb{ D_n^2 \EE{\abs{Y_{it} - X_{it}}} +  2D_n \EE{\abs{Z_{it} - V_{it}}} }\\
& \qquad \qquad \leq \pi_i\cb{D_n C^t + 2 D_n^2 C^t}
\leq 3 \pi_i D_n^2 C^t,
\end{split}
\]
where the last line follows from \eqref{eqn:bound_E_Y_X}, \eqref{eqn:bound_f_Z_V} and \eqref{eqn:bound_Z_V}. Thus Lemma \ref{lemma:markov_exp_ergodic} implies that 
\begin{equation}
\label{eqn:Z_bar2_bound1}
\EE{ \p{ \frac{1}{T}\sum_{t= 1}^T Y_{it}W_{it}Z_{it}^2 - \EE[\mu(\pi)]{Y_{it}W_{it} Z_{it}^2 }}}^2 \leq \frac{4 D_n^4}{C(1-C) T}. 
\end{equation}

Now we are ready to combine the above results and analyze $\widehat{f'(1,1)}_{\delta_T}$. The numerator of \eqref{eqn:f_prime_11} (scaled by $1/T$) can be written as
\begin{equation}
\label{eqn:bound_LT_num}
\begin{split}
&\frac{1}{T}\sum_{t = 1}^T Y_{it} W_{it} Y_{i(t+1)} Z_{it} - \frac{1}{T} T_{1,1} \bar{Y_i}(1,1) \bar{Z_i}(1,1)\\
&\qquad = \frac{1}{T}\sum_{t = 1}^T Y_{it} W_{it} Y_{i(t+1)} Z_{it} - \frac{T}{T_{1,1}}  \p{ \frac{1}{T}\sum_{t= 1}^T Y_{it}W_{it}Y_{i(t+1)} } \p{ \frac{1}{T}\sum_{t= 1}^T Y_{it}W_{it}Z_{it} }\\
&\qquad = \EE[\mu(\pi)]{Y_{it}W_{it}Y_{i(t+1)} Z_{it}} + \frac{ \EE[\mu(\pi)]{Y_{it}W_{it}Y_{i(t+1)}} \EE[\mu(\pi)]{Y_{it}W_{it}Z_{it}} }{\EE[\mu(\pi)]{Y_{it}W_{it}}} + \operatorname{error}_{\operatorname{num}}\\
&\qquad = \Cov[\mu(\pi)]{Y_{i(t+1)}, Z_{it} \mid Y_{it} =1 , W_{it} = 1} + \operatorname{error}_{\operatorname{num}},
\end{split}
\end{equation}
where $\EE{\operatorname{error}_{\operatorname{num}}^2} \leq C_{\operatorname{num}}D_n^2/T$ for some constant $C_{\operatorname{num}}$. 
Here, the third line is a result of applying Lemma \ref{lemma:little_bound_ratio} to  \eqref{eqn:T11_bound}, \eqref{eqn:Y_bar_bound1} and \eqref{eqn:Z_bar_bound1}. 
Similarly, the denominator of \eqref{eqn:f_prime_11} (scaled by $1/T$ ignoring the $\delta_t$ term) can be written as
\begin{equation}
\label{eqn:bound_LT_dem}
\begin{split}
&\frac{1}{T} \sum_{t = 1}^T Y_{it} W_{it} Z_{it}^2 - T_{1,1} \bar{Z_i}(1,1)^2 = \frac{1}{T}\sum_{t = 1}^T Y_{it} W_{it} Z_{it}^2 - \frac{T}{T_{1,1}}  \p{ \frac{1}{T}\sum_{t= 1}^T Y_{it}W_{it}Z_{it} }^2\\
& \qquad \qquad  = \EE[\mu(\pi)]{Y_{it}W_{it} Z_{it}^2} - \frac{ \p{\EE[\mu(\pi)]{Y_{it}W_{it}Z_{it}}}^2 }{\EE[\mu(\pi)]{Y_{it}W_{it}}} + \operatorname{error}_{\operatorname{den}}\\
&\qquad  \qquad = \Var[\mu(\pi)]{Z_{it} \mid Y_{it} =1 , W_{it} = 1} + \operatorname{error}_{\operatorname{den}}\\
&\qquad  \qquad = \Var[\mu(\pi)]{Z_{it} \mid Y_{it} =1} + \operatorname{error}_{\operatorname{den}},\\
\end{split}
\end{equation}
where $\EE{\operatorname{error}_{\operatorname{den}}^2} \leq C_{\operatorname{den}}D_n^4/T$ for some constant $C_{\operatorname{den}}$. 
Before moving on with taking the ratio of the two, we analyze  $\Var[\mu(\pi)]{Z_{it} \mid Y_{it} =1}$. 
Note that
\begin{equation}
\label{eqn:LTE_var_lower_bound}
\begin{split}
& \Var[\mu(\pi)]{Z_{it} \mid Y_{it} =1} \geq \EE[\mu(\pi)]{\Var[\mu(\pi)]{Z_{it} \mid Y_{it} =1, W_{it} = 1, Y_{i(t-1)}}}\\
& \qquad \qquad = \EE[\mu(\pi)]{\Var[\mu(\pi)]{Z_{it} \mid Y_{i(t-1)}}}\\
& \qquad \qquad = \sum_{j \in \ngb_i}\EE[\mu(\pi)]{f_j(Y_{j(t-1)}, W_{j(t-1)}, Z_{j(t-1)})\p{1- f_j(Y_{j(t-1)}, W_{j(t-1)}, Z_{j(t-1)})}} \\
& \qquad \qquad \geq \abs{\ngb _i} C_f (1-C_f) \geq C_l C_f (1-C_f) D_n.
\end{split}
\end{equation}
Now we are ready to take the ratio of the numerator and the denominator:
\begin{equation}
\label{eqn:bound_T_LTE}
\begin{split}
\widehat{f'_i(1,1)}_{\delta_T} &= \frac{\frac{1}{T}\sum_{t = 1}^T Y_{it} W_{it} Y_{i(t+1)} Z_{it} - \frac{1}{T} T_{1,1} \bar{Y_i}(1,1) \bar{Z_i}(1,1)}{D_n \delta_T \vee\p{  \frac{1}{T}\sum_{t = 1}^T Y_{it} W_{it} Z_{it}^2 - T_{1,1} \bar{Z_i}(1,1)^2}}\\
&= \frac{ \Cov[\mu(\pi)]{Y_{i(t+1)}, Z_{it} \mid Y_{it} =1 , W_{it} = 1} + \operatorname{error}_{\operatorname{num}}}{D_n \delta_T \vee \p{ \Var[\mu(\pi)]{Z_{it} \mid Y_{it} =1 , W_{it} = 1} + \operatorname{error}_{\operatorname{den}}}}. 
\end{split}
\end{equation}
We will again apply Lemma \ref{lemma:little_bound_ratio} to the above ratio. In particular, we take $b_1 = \Var[\mu(\pi)]{Z_{it} \mid Y_{it} =1 , W_{it} = 1}$, $a_1 = \Cov[\mu(\pi)]{Y_{i(t+1)}, Z_{it} \mid Y_{it} =1 , W_{it} = 1}$, $b_2$ to be the above denominator and $a_1$ the above numerator. 
Note that the ratio $ \Cov[\mu(\pi)]{Y_{i(t+1)}, Z_{it} \mid Y_{it} =1 , W_{it} = 1} / \Var[\mu(\pi)]{Z_{it} \mid Y_{it} =1 , W_{it} = 1} \leq 1/ \sqrt{\Var[\mu(\pi)]{Z_{it} \mid Y_{it} =1 , W_{it} = 1}} \leq \sqrt{C_l C_f (1-C_f) D_n}$. Thus by Lemma  \ref{lemma:little_bound_ratio},
\begin{equation}
\label{eqn:f_prime_key}
 \widehat{f'_i(1,1)}_{\delta_T} =  \frac{\Cov[\mu(\pi)]{Y_{i(t+1)}, Z_{it} \mid Y_{it} =1 , W_{it} = 1} }{ \Var[\mu(\pi)]{Z_{it} \mid Y_{it} =1 , W_{it} = 1} } + \epsilon_{1,i}, 
\end{equation}
where 
\[\EE{\epsilon_{1,i}^2} \leq \frac{C_1 \p{L_n^2 \EE{\operatorname{error}_{\operatorname{den}}^2} + \EE{\operatorname{error}_{\operatorname{num}}^2}}}{D_n^2 \delta_T^2}
\leq \frac{C_2}{T \delta_T^2},
\]
when $\delta_T < C_l C_f (1-C_f)$ for some constants $C_1$ and $C_2$. Since $\delta_T \to 0$ as $T \infty$, we have 
\[\EE{\epsilon_{1,i}^2} = \oo\p{ \frac{1}{T \delta_T^2}}.
\]

Let 
\[\widetilde{f_i'(1,1)} =  \frac{\Cov[\mu(\pi)]{Y_{i(t+1)}, Z_{it} \mid Y_{it} =1 , W_{it} = 1} }{ \Var[\mu(\pi)]{Z_{it} \mid Y_{it} =1 , W_{it} = 1} } . \]
It then remains to show that $\widetilde{f_i'(1,1)}$ is close to $f'(1,1,\Qs(\pi))$. To this end, we analyze the term $\Cov[\mu(\pi)]{Y_{i(t+1)}, Z_{it} \mid Y_{it} =1 , W_{it} = 1}$. Note that
\begin{equation}
\label{eqn:LTE_remove_W_i}
\begin{split}
&\Cov[\mu(\pi)]{Y_{i(t+1)}, Z_{it} \mid Y_{it} =1 , W_{it} = 1}\\
& \qquad = \EE[\mu(\pi)]{ Y_{i(t+1)} Z_{it} \mid Y_{it} =1 , W_{it} = 1 }\\
& \qquad \qquad \qquad - \EE[\mu(\pi)]{ Y_{i(t+1)} \mid Y_{it} =1 , W_{it} = 1 } \EE[\mu(\pi)]{ Z_{i(t+1)} \mid Y_{it} =1 , W_{it} = 1 } \\
& \qquad = \EE[\mu(\pi)]{ f_i(1,1,Z_{it}) Z_{it} \mid Y_{it} =1 , W_{it} = 1 }\\
& \qquad \qquad \qquad  - \EE[\mu(\pi)]{ f_i(1,1,Z_{it}) \mid Y_{it} =1 , W_{it} = 1 } \EE[\mu(\pi)]{ Z_{i(t+1)} \mid Y_{it} =1 , W_{it} = 1 } \\
&  \qquad = \EE[\mu(\pi)]{ f_i(1,1,Z_{it}) Z_{it} \mid Y_{it} =1 } \\
& \qquad = \EE[\mu(\pi)]{ f_i(1,1,Z_{it}) Z_{it} \mid Y_{it} =1 , W_{it} = 1 }\\
& \qquad \qquad \qquad - \EE[\mu(\pi)]{ f_i(1,1,Z_{it}) \mid Y_{it} =1} \EE[\mu(\pi)]{ Z_{i(t+1)} \mid Y_{it} =1} \\
& \qquad = \Cov[\mu(\pi)]{f_i(1,1,Z_{it}),  Z_{it} \mid Y_{it} =1}. 
\end{split}
\end{equation}
We then do a Taylor expansion of $f_i(1,1,Z_{it})$ around the conditional mean $\EE[\mu(\pi)]{ Z_{i t} \mid Y_{it} =1}$. 
\[
\begin{split}
f_i(1,1,Z_{it}) &= f_i(1,1, \EE[\mu(\pi)]{ Z_{i t} \mid Y_{it} =1})\\
& \qquad \qquad + f_i'(1,1, \EE[\mu(\pi)]{ Z_{it} \mid Y_{it} =1}) (Z_{it} -  \EE[\mu(\pi)]{ Z_{i t} \mid Y_{it} =1})\\
& \qquad \qquad + \frac{1}{2} f_i'(1,1, \tilde{Z}_{it}) (Z_{it} -  \EE[\mu(\pi)]{ Z_{i t} \mid Y_{it} =1})^2,
\end{split}
\]
where $\tilde{Z}_{it}$ is between $Z_{it}$ and  $\EE[\mu(\pi)]{ Z_{i t} \mid Y_{it} =1}$. 
Therefore, taking covariance of both hand sides with $Z_{it}$, we get
\[
\begin{split}
&\Cov[\mu(\pi)]{f_i(1,1,Z_{it}) , Z_{it} \mid Y_{it} =1} \\
& \qquad  = \Cov[\mu(\pi)]{ f_i(1,1, \EE[\mu(\pi)]{ Z_{i t} \mid Y_{it} =1}) , Z_{it} \mid Y_{it} =1}\\
& \qquad \qquad + f_i'(1,1, \EE[\mu(\pi)]{ Z_{it} \mid Y_{it} =1}) \Cov[\mu(\pi)]{Z_{it} -  \EE[\mu(\pi)]{ Z_{i t} \mid Y_{it} =1} , Z_{it} \mid Y_{it} =1}\\
& \qquad \qquad + \frac{1}{2} \Cov[\mu(\pi)]{ f_i'(1,1, \tilde{Z}) (Z_{it} -  \EE[\mu(\pi)]{ Z_{i t} \mid Y_{it} =1})^2 , Z_{it} \mid Y_{it} =1}\\
&\qquad =  f_i'(1,1, \EE[\mu(\pi)]{ Z_{it} \mid Y_{it} =1}) \Var[\mu(\pi)]{Z_{it} \mid Y_{it}=1}\\
&\qquad \qquad + \frac{1}{2} \Cov[\mu(\pi)]{ f_i'(1,1, \tilde{Z}) (Z_{it} -  \EE[\mu(\pi)]{ Z_{i t} \mid Y_{it} =1})^2 , Z_{it} \mid Y_{it} =1}. 
\end{split}
\]
This then implies that 
\begin{equation}
\label{eqn:LTE_bound_n1}
\begin{split}
&\abs{\frac{\Cov[\mu(\pi)]{f_i(1,1,Z_{it}) , Z_{it} \mid Y_{it} =1}}{ \Var[\mu(\pi)]{Z_{it} \mid Y_{it}=1}} - f_i'(1,1, \EE[\mu(\pi)]{ Z_{it} \mid Y_{it} =1})}\\
&\qquad \qquad \leq \frac{L_{2,n} \EE[\mu(\pi)]{ \abs{Z_{it} -  \EE[\mu(\pi)]{ Z_{i t} \mid Y_{it} =1}}^3 \mid Y_{it}=1} }{2\Var[\mu(\pi)]{Z_{it} \mid Y_{it}=1}}. 
\end{split}
\end{equation}

We will then study $\EE[\mu(\pi)]{ \abs{Z_{it} -  \EE[\mu(\pi)]{ Z_{i t} \mid Y_{it} =1}}^3 \mid Y_{it}=1} $. Note that
\begin{equation}
\label{eqn:LTE_bound_n2}
\begin{split}
&\EE[\mu(\pi)]{ \abs{Z_{it} -  \EE[\mu(\pi)]{ Z_{i t} \mid Y_{it} =1}}^3 \mid Y_{it}=1} \\
&\qquad \qquad = \frac{1}{\EE[\mu(\pi)]{Y_{it}} }  \EE[\mu(\pi)]{ \abs{Z_{it} -  \EE[\mu(\pi)]{ Z_{i t} \mid Y_{it} =1}}^3 Y_{it} } \\
&\qquad \qquad \leq \frac{1}{C_y}  \EE[\mu(\pi)]{ \abs{Z_{it} -  \EE[\mu(\pi)]{ Z_{i t} \mid Y_{it} =1}}^3 }. 
\end{split}
\end{equation}
For $\EE[\mu(\pi)]{ \abs{Z_{it} -  \EE[\mu(\pi)]{ Z_{i t} \mid Y_{it} =1}}^3 }$, we can try to bound it using results from Theorem \ref{theo:mean_field2}. Specifically, Theorem \ref{theo:mean_field2} implies that there exist random vectors $X$ and $Y$ such that 
$Y \sim \mu(\pi)$, $X_i \sim \operatorname{Ber}(\Ps(\pi))$ independently, and $d_{E,3}(X,Y) \leq (2L_n D_n^{\frac{3}{2}} + 1)/(1-C)$. We write $Z_i = \sum_{j \in \ngb_i} Y_j$ and $V_i = \sum_{j \in \ngb_i} X_j$. Then
\begin{equation}
\label{eqn:bound_num}
\begin{split}
&\EE[\mu(\pi)]{ \abs{Z_{it} -  \EE[\mu(\pi)]{ Z_{i t} \mid Y_{it} =1}}^3 }
= \EE{\abs{Z_i - \EE{Z_i \mid Y_1 = 1}}^3}\\
&\leq 16\p{\EE{\abs{Z_i - V_i}^3} + \EE{\abs{V_i - \Qs_i(\pi)}^3} + \abs{\Qs_i(\pi) - \EE{Z_i}}^3 + \abs{\EE{Z_{i} \mid Y_{i} = 1} - \EE{Z_i}}^3}. 
\end{split}
\end{equation}
By Theorem \ref{theo:mean_field2}, we have $\EE{\abs{Z_i - V_i}^3}  \leq \p{(2L_n D_n^{\frac{3}{2}} + 1)/(1-C)}^3 \leq C_1 D_n^{\frac{3}{2}}$ for some constant $C_1$. For the second term $\EE{\abs{V_i - \Qs_i(\pi)}^3}$, note that $V_i$ is a sum of independent Bernoulli random variables, and $\Qs$ is the sum of the corresponding means; thus $\EE{\abs{V_i - \Qs_i(\pi)}^3} \leq D_n^\frac{3}{2}$. For the third term, the first part of Theorem \ref{theo:mean_field2} implies that 
\begin{equation}
\label{eqn:LTE_bound3}
\abs{\Qs_i(\pi) - \EE{Z_i}} = \abs{\EE{Z_i - V_i}} \leq W_{d,E}(\law(Y), \law(X)) \leq C/(2(1-C)) D_n^{\frac{1}{2}}.
\end{equation}\
Thus, $\abs{\Qs_i(\pi) - \EE{Z_i}}^3 \leq C_3 D_n^{\frac{3}{2}}$ for some constant $C_3$. Finally, for the fourth term $\abs{\EE{Z_{i} \mid Y_{i} = 1} - \EE{Z_i}}^3$, note that
\[\abs{\EE{Z_{i} \mid Y_{i} = 1} - \EE{Z_i}}
= \frac{1}{\EE{Y_i}} \abs{\Cov{Z_i, Y_i}}
\leq \frac{1}{C_y}  \EE{ \abs{Z_i - \EE{Z_i}}}. 
  \]
Interestingly, we can use the first three terms to bound this term. 
\begin{equation}
\label{eqn:LTE_bound4}
 \p{\EE{ \abs{Z_i - \EE{Z_i}}}}^3
\leq 9\p{\EE{\abs{Z_i - V_i}^3} + \EE{\abs{V_i - \Qs_i(\pi)}^3} + \abs{\Qs_i(\pi) - \EE{Z_i}}^3}
\leq C_4 D_n^{\frac{3}{2}},
\end{equation}
for some constant $C_4$. 
Combining the results of the four terms and plugging them back into \eqref{eqn:bound_num}, we get
\[ 
\EE[\mu(\pi)]{ \abs{Z_{it} -  \EE[\mu(\pi)]{ Z_{i t} \mid Y_{it} =1}}^3 } \leq C_{\operatorname{num}}D_n^{\frac{3}{2}},
 \]
for some constant $C_{\operatorname{num}}$. 
Together with \eqref{eqn:LTE_bound_n1} and \eqref{eqn:LTE_bound_n2}, the above bound shows 
\[
\begin{split}
\abs{\frac{\Cov[\mu(\pi)]{f_i(1,1,Z_{it}) , Z_{it} \mid Y_{it} =1}}{ \Var[\mu(\pi)]{Z_{it} \mid Y_{it}=1}} - f_i'(1,1, \EE[\mu(\pi)]{ Z_{it} \mid Y_{it} =1})}
 \leq \frac{L_{2,n} C_{\operatorname{num}}D_n^{\frac{3}{2}} }{2 C_y \Var[\mu(\pi)]{Z_{it} \mid Y_{it}=1}}. 
\end{split}
\]
Note also that $\Var[\mu(\pi)]{Z_{it} \mid Y_{it} =1}$ has a nice lower bound \eqref{eqn:LTE_var_lower_bound}. 
\[
\Var[\mu(\pi)]{Z_{it} \mid Y_{it} =1} \geq C_l C_f (1-C_f) D_n.
\]
Thus 
\[
\begin{split}
\abs{\frac{\Cov[\mu(\pi)]{f_i(1,1,Z_{it}) , Z_{it} \mid Y_{it} =1}}{ \Var[\mu(\pi)]{Z_{it} \mid Y_{it}=1}} - f_i'(1,1, \EE[\mu(\pi)]{ Z_{it} \mid Y_{it} =1})}
 \leq C_{\operatorname{dir}}L_{2,n} D_n^{\frac{1}{2}}. 
\end{split}
\]
for some constant $C_{\operatorname{dir}}$. This implies that
\begin{equation}
\label{eqn:LTE_final_but_two}
\widetilde{f'_i(1,1)} =  f_i'(1,1, \EE[\mu(\pi)]{ Z_{it} \mid Y_{it} =1}) + \epsilon_{2,i},  
\end{equation}
where $\abs{\epsilon_{2,i}} \leq C_2 L_{2,n} D_n^{\frac{1}{2}}$ for some constant $C_2$.

Finally, it remains to show that $f_i'(1,1, \EE[\mu(\pi)]{ Z_{it} \mid Y_{it} =1}) $ is close to $f_i'(1,1, \Qs(\pi))$. Since the second derivative of $f_i$ is bounded above by $L_{2,n}$, we can bound the difference of the two by
\[
\abs{f_i'(1,1, \EE[\mu(\pi)]{ Z_{it} \mid Y_{it} =1}) - f_i'(1,1, \Qs(\pi))}
\leq L_{2,n} \abs{ \EE[\mu(\pi)]{ Z_{it} \mid Y_{it} =1} - \Qs(\pi)}. 
\] 
By \eqref{eqn:LTE_bound3} and \eqref{eqn:LTE_bound4}, we can easily establish that $\abs{\EE[\mu(\pi)]{ Z_{it} \mid Y_{it} =1} - \Qs(\pi)} \leq C_3 D_n^{\frac{1}{2}}$ for some constant $C_3$. Therefore, we have that $ f_i'(1,1, \EE[\mu(\pi)]{ Z_{it} \mid Y_{it} =1}) = f_i'(1,1, \Qs(\pi)) + \epsilon_{3,i}$, where $\abs{\epsilon_{3,i}} \leq L_{2,n} D_n^{\frac{1}{2}}$. Together with \eqref{eqn:f_prime_key} and \eqref{eqn:LTE_final_but_two}, we get
\[ \widehat{f'_i(1,1)}_{\delta_T} =  f_i'(1,1, \Qs(\pi)) + \epsilon_i,   \]
where $\EE{\epsilon_i^2} = \oo\p{1/(T \delta_T^2) + D_n^{-3}}$ for some constant $C_{\operatorname{div}} $. 

\subsection{Proof of Theorem \ref{theo:LTE_estimation2}}
We start with analyzing the difference between $\mathbf{1}\trans (M_{\eta_n} - \Dh A - \Wh_{\kappa_n}) ^{-1} \uh$ and $\mathbf{1}\trans (I - D A - W) ^{-1} \bu$. We will make use of Lemma \ref{lemma:matrix_inverse}. Specifically, we take $X_1 = M_{\eta_n} - \Dh A - \Wh_{\kappa_n}$, $X_2 = I - D A - W$, $\alpha = \mathbf{1}/n$, $\beta_1 = \uh$ and $\beta_2 = \bu$. Lemma \ref{lemma:matrix_inverse} shows that 
\begin{equation}
\label{eqn:LTE_est_matrix_inverse}
\begin{split}
\abs{\alpha\trans X_1^{-1} \beta_1 - \alpha\trans X_2^{-1} \beta_2 } 
\leq  \frac{\Norm{\alpha} \Norm{\beta_1 - \beta_2}}{\lambda_{\operatorname{smallest}}(X_1)}
+ \Norm{(X_2\trans)^{-1}\alpha}_{\infty}\Norm{(X_1)^{-1}\beta_2}_{\infty} \sum_{i,j} \abs{X_{1,i,j} - X_{2,i,j}}. 
\end{split}
\end{equation}
We will analyze the terms one by one. We start with $X_1$. Note that by construction, each diagonal element of $M_{\eta_n} - \Wh_{\kappa_n}$ is bounded below by $\kappa_n$. At the same time, $M_{\eta_n, i,i} - \Wh_{\kappa_n,i,i} \geq \hat{d}_i D_n/(1-\eta_n) \geq \sum_{j}\sqb{\hat{D}A}_{i,j}$. Therefore by Lemma \ref{lemma:l_infty_norm2}, $X_1$ is invertible, $\lambda_{\operatorname{smallest}}(X_1) \geq \eta_n \kappa_n$. Further, $\Norm{(X_1)^{-1}\beta_2}_{\infty} \leq \Norm{\beta_2}_{\infty}/\eta_n \kappa_n \leq 1/\eta_n \kappa_n$. Now for $X_2$, by Lemma \ref{lemma:l_infty_norm}, $X_2$ is invertible and $\Norm{(X_1\trans)^{-1}\alpha}_{\infty} \leq \Norm{\alpha}_{\infty}/(1-C) = 1/(n(1-C))$. For $\alpha$, we clearly have $\Norm{\alpha} = 1/\sqrt{n}$. 

It remains to analyze $\Norm{\beta_1 - \beta_2}$ and $\sum_{i,j} \abs{X_{1,i,j} - X_{2,i,j}}$. For $\Norm{\beta_1 - \beta_2}$, we note that by Proposition \ref{prop:estimate_LDE}, Lemma \ref{lemma:p_hat_converge} and Lemma \ref{lemma:distance_between_mu_ps}, 
\[
\begin{split}
\EE{\Norm{\beta_1 - \beta_2}^2}
= \sum_i \EE{\sqb{\p{ \hat{b}_i v_i + \hat{d}_i \Ph_{i}(\pi) v_i} - \p{ b_i(\Qs(\pi)) v_i + d_i(\Qs(\pi)) \Ps_{i}(\pi) v_i}}^2}
\leq C_1 n \p{\frac{1}{T} + L_n},
\end{split}
\]
for some constant $C_1$. For $\sum_{i,j} \abs{X_{1,i,j} - X_{2,i,j}}$, we study $\hat{D}$, $\hat{W}_{\kappa_n}$ and $M_{\eta_n}$ separately. For $\hat{D}$, note that 
\begin{equation}
\label{eqn:D_bound}
\EE{\p{\hat{D}_i - D_i}^2} = \oo\p{ 1/(T\delta_t^2) + D_n^{-3} }
\end{equation}
by Theorem \ref{theo:LTE_estimation}, Lemma \ref{lemma:p_hat_converge} and Lemma \ref{lemma:distance_between_mu_ps}. Therefore, 
 \[\sum_{i,j} \EE{\abs{(\hat{D}A)_{i,j} - (DA)_{2,i,j}}} =  \oo\p{ n D_n \sqrt{1/(T\delta_t^2)} + n D_n^{-\frac{1}{2}} }. \]
For $\hat{W}_{\kappa_n}$, when $\kappa_n \leq C$, we have that by Proposition \ref{prop:estimate_LDE} and Lemma \ref{lemma:distance_between_mu_ps},
\begin{equation}
\label{eqn:W_bound}
\EE{\p{\hat{\omega}_{i, \kappa_n}- W_{i,i}}^2}
\leq \EE{\p{\hat{c}_i + \hat{d_i}\pi_i - W_{i,i}}^2}
\leq C_2 \p{\frac{1}{T} + L_n},
\end{equation}
for some constant $C_2$. Thus,
\[\sum_{i,j} \EE{\abs{\hat{W}_{\kappa_n, i,j} - W_{i,j}}}
= \sum_i \EE{\abs{\hat{\omega}_{i, \kappa_n}- W_{i,i}}}
= \oo\p{\frac{n}{\sqrt{T}} + n\sqrt{L_n}},
 \]
by Proposition \ref{prop:estimate_LDE} and Lemma \ref{lemma:distance_between_mu_ps}. Then we move on to study $M_{\eta_n}$, for each $i$, $M_{\eta_n, i,i} = \max\big(1, \hat{D}_i D_n/(1- \eta_n) + \hat{\omega}_{i, \kappa_n}\big)$. Let $\hat{m}_i = \hat{D}_i D_n/(1- \eta_n) + \hat{\omega}_{i, \kappa_n}$. When $L_n D_n/(1 - \eta_n) + B < 1$, 
\[
\begin{split}
0 \leq \EE{M_{\eta_n, i,i} - 1}
&= \EE{\hat{m}_i - 1; \hat{m}_i  \geq 1}
\leq \EE{\hat{m}_i - (D_{i,i}D_n/(1- \eta_n) + W_{i,i}); \hat{m}_i  \geq 1}\\
&\leq \EE{\abs{\hat{m}_i - (D_{i,i}D_n/(1- \eta_n) + W_{i,i})}
\abs{\hat{m}_i }}\\
& \leq \sqrt{\EE{\p{\hat{m}_i - (D_{i,i}D_n/(1- \eta_n) + W_{i,i})}^2}}\sqrt{\EE{
\p{\hat{m}_i}^2}}\\
&= \oo\p{D_n\sqrt{1/(T\delta_t^2)} + 1/\sqrt{D_n}}. 
\end{split}
\]
where the last line follows from \eqref{eqn:D_bound} and \eqref{eqn:W_bound}. Thus,  
\[\sum_{i,j} \EE{\abs{M_{\eta_n, i,j} - I_{i,j}}}
= \sum_{i} \EE{\abs{M_{\eta_n, i,i} - I1}}
= \sum_i \EE{M_{\eta_n, i,i} - 1}
= \oo\p{n D_n/\p{\sqrt{T}\delta_t} + n/\sqrt{D_n}}.
 \]

Combining the above results, we get 
\[
\begin{split}
&\EE{\sum_{i,j} \abs{X_{1,i,j} - X_{2,i,j}}} \\
&\qquad \leq 
\sum_{i,j} \EE{\abs{(\hat{D}A)_{i,j} - (DA)_{2,i,j}}} + \sum_{i,j} \EE{\abs{\hat{W}_{\kappa_n, i,j} - W_{i,j}}}
+ \sum_{i,j} \EE{\abs{M_{\eta_n, i,j} - I_{i,j}}}\\
&\qquad = \oo\p{n D_n/\p{\sqrt{T}\delta_t} + n/\sqrt{D_n}}.
\end{split}
\]
Finally, plugging everything in \eqref{eqn:LTE_est_matrix_inverse}, we get
\[
\begin{split}
&\EE{\abs{\mathbf{1}\trans (M_{\eta_n} - \Dh A - \Wh_{\kappa_n}) ^{-1} \uh-  \mathbf{1}\trans (I - D A - W) ^{-1} \bu}}\\
& = \oo\p{ \frac{1}{\eta_n \kappa_n} \p{\frac{1}{\sqrt{T}} + \sqrt{L_n}} }
+  \frac{1}{\eta_n \kappa_n n } \p{nD_n/\p{\sqrt{T}\delta_t} + n/\sqrt{D_n}}\\
& = \oo\p{\frac{1}{\eta_n \kappa_n} \p{\frac{D_n}{\sqrt{T}\delta_t} + \frac{1}{\sqrt{D_n}}}}. 
\end{split}
\]

Together with Theorem \ref{theo:LTE_char2}, we get
\[
\begin{split}
\frac{\hat{\tau}_{\LTE}( \pi + \Delta_n \bv,  \pi)}{\Delta_n} &= \frac{\tau_{\LTE}( \pi + \Delta_n \bv,  \pi)}{\Delta_n} + \oo_p\p{\frac{\sqrt{L_n}}{\Delta_n} + \Delta_n + \frac{1}{\eta_n \kappa_n} \p{\frac{D_n}{\sqrt{T}\delta_t} + \frac{1}{\sqrt{D_n}}}}\\
&= \frac{\tau_{\LTE}( \pi + \Delta_n \bv,  \pi)}{\Delta_n} + \oo_p\p{\frac{1}{\Delta_n \sqrt{D_n}} + \Delta_n + \frac{1}{\eta_n \kappa_n} \p{\frac{D_n}{\sqrt{T}\delta_t} + \frac{1}{\sqrt{D_n}}}}.
\end{split}
\]

\end{appendix}

\end{document}